\algnewcommand\INPUT{\item[\textbf{Input:}]}%
\algnewcommand\OUTPUT{\item[\textbf{Output:}]}%
\newcolumntype{+}{>{\global\let\currentrowstyle\relax}}
\newcolumntype{^}{>{\currentrowstyle}}
\newcolumntype{M}[1]{>{\centering\arraybackslash}m{#1}}
\newcolumntype{N}{@{}m{0pt}@{}}
\theoremstyle{definition}
\theoremstyle{plain}
\theoremstyle{plain}
\newtheorem*{definition*}{Definition}
\theoremstyle{plain}
\newtheorem{theorem}{Theorem}
\newtheorem*{theorem*}{Theorem}
\newtheorem{definition}{Definition}
\newtheorem{example}{Example}
\newtheorem{lemma}{Lemma}
\newcounter{fig}
\newcounter{tab}
\newcounter{tabApp}
\providecommand{\theoremname}{Theorem}
\begin{document}
	
\title{Undergraduate Course Allocation\\ through Competitive Markets\thanks{\protect \emph{Kornbluth:} Harvard University, Department of Economics; \href{mailto:dkornbluth@g.harvard.edu}{dkornbluth@g.harvard.edu}; \emph{Kushnir (corresponding author):} Moroccan Center for Game Theory, University Mohammed VI Polytechnic, Rabat, Morocco; \href{mailto:alexey.kushnir@gmail.com}{alexey.kushnir@gmail.com}. Parts of this project were completed while the authors were at Carnegie Mellon University. We are especially thankful to Dennis Epple and Scott Kominers for their resolute support of this project. We are indebted to individuals who provided us with course enrollment data and consulted us on undergraduate course allocation processes. They have requested to remain anonymous. We are also grateful to Martin Bichler, Eric Budish, Rachel Childers, Karam Kang, Andrew Komo, Rebecca Lessem, Eric Maskin, Anh Nguyen, Marek Pycia, Maryam Saeedi, Tomasz Strzalecki, Davide Viviano, and participants at various seminars and conferences for their numerous suggestions. The editor and three anonymous referees provided us with invaluable feedback that has significantly improved the paper. }}

\author{Daniel Kornbluth \and Alexey Kushnir}

\date{November 17, 2025}
\maketitle

\begin{spacing}{1.5}
\begin{abstract}
Prevailing methods of course allocation at undergraduate institutions involve reserving seats to give priority to designated groups of students. We introduce a competitive equilibrium-based mechanism that assigns course seats using student preferences and course priorities. This mechanism satisfies approximate notions of stability, efficiency, envy-freeness, and strategy-proofness. We evaluate its performance relative to a mechanism widely used in practice using preferences estimated from university data. Our empirical findings demonstrate an improvement in student satisfaction and allocation fairness. The number of students who envy another student of weakly lower priority declines by 8 percent, or roughly 500 students.

\vspace{4mm}
\noindent \textit{JEL classification:} D47, D63, D82, C63, C78, I21
		
\noindent \textit{Keywords: }market design, many-to-many matching, course allocation, approximate competitive equilibrium, random serial dictatorship, reserves, priorities
\end{abstract}
\end{spacing}

\thispagestyle{empty}
\newpage

\label{sec:intro}

\par  Every academic term, over 6,500 undergraduate institutions across the United States assign course schedules to a total of nearly twenty million students.\footnote{Based on \href{https://nces.ed.gov/programs/digest/d22/tables/dt22_303.10.asp}{National Center for Education Statistics, 2017-2018}.} Based on course timing and prerequisites, the set of schedules a student can be assigned in a given term is limited. Moreover, student preferences over possible schedules are heterogeneous. In the face of room size and teaching constraints, universities use factors such as seniority and student department to complete the challenging task of allocating seats in overdemanded courses.

\par \label{Budish-connection-1} Course allocation is a well-studied problem in the context of graduate business schools. Many institutions employ course bidding systems that require strategic decision-making from students, often leading to inefficient allocations \citep[][]{sonmez2010course}. Recently, several business schools replaced course bidding systems with the \textit{approximate competitive equilibrium from equal incomes} (A-CEEI) mechanism \citep{Budish2011,budish2023practical}, which seeks to alleviate these issues. This mechanism elicits student preferences over course schedules, assigns students almost-equal budgets of fake money, and internally determines an approximate competitive equilibrium, in which each student receives her most-preferred affordable course schedule subject to course prices. The competitive equilibrium is approximate due to a small \textit{market-clearing error}, measured by the assignment of a course's seats over and under its capacity. This mechanism treats all business school students equally, as students are randomly assigned budgets and face the same course prices.

\par\label{US-institutions} At undergraduate institutions, students are not treated equally, making course allocation a two-sided matching problem. Each course divides students into levels of priority based on student characteristics like year of study and department. In commonly-used course allocation mechanisms, students select courses in order of seniority (with ties broken randomly) and priorities based on departments are enforced by reserving seats in each course. Top US universities that closely follow this process include Princeton, Johns Hopkins, Duke, Vanderbilt, Washington University in St. Louis, Columbia, Notre Dame, and Carnegie Mellon.\footnote{In general, there is some heterogeneity in how universities elect to assign seats to students. Some universities use priorities (e.g., Dartmouth) or reserves (e.g., Vanderbilt) to assign seats, whereas others delegate the decision to departments (e.g., Cornell). Some use two phases (e.g., UCLA), only allowing students to enroll in a limited number of courses in the first phase. Seniority can be based on academic year (e.g., Carnegie Mellon), number of earned credits (e.g., University of North Carolina), or time to graduation (e.g., Washington University in St. Louis). At some universities, few classes are over-enrolled (e.g., Caltech), whereas at others, the process is rather congested (e.g., UC Berkeley). In most cases, courses have a small number of priority levels, rather than one \citep[][]{Budish2011} or complete strict priorities \cite[][]{roth1984stability}.\label{fnDA-connection}}

\par In this paper, we approach the task of allocating course schedules to students in the context of a many-to-many two-sided matching problem with heterogeneous student preferences and weak course priorities. We propose a novel \textit{deterministic} allocation mechanism, the \textit{Pseudo-Market with Priorities} (PMP) mechanism, that uses fake money and competitive equilibrium to allocate course seats to students without transfers. This mechanism elicits student preferences over course schedules, assigns almost-equal budgets to students, and finds an approximate competitive equilibrium allocation under \textit{priority-specific prices}. These prices respect course priorities by setting a \textit{cutoff} priority level for each course, where higher priority students obtain seats in the course for free, students at the cutoff level of priority face some price, and lower priority students cannot afford a seat.

\par \label{Budish-connection-2} \label{T1-proof-trick} The A-CEEI mechanism is the special case of our mechanism in which no course prioritizes any student over any others. Our first result shows that, with arbitrary course priorities, the PMP mechanism \textit{maintains the same, small worst-case bound on the market-clearing error} found in \cite{Budish2011} for the A-CEEI mechanism. This bound is increasing in the dimension of the space of prices, which equals the number of courses in settings without priorities. Our result holds despite the expansion from one price per course to priority-specific prices in each course. The key insight of the proof is the introduction of a parameterization within the priority-specific price space. The parameterization simplifies the space of prices to a single value per course, enabling a simultaneous search for both the cutoff priority level and the corresponding price. This strategy simplifies the complexity of the problem and allows for the application of some previously established results.

The PMP mechanism has several advantageous theoretical properties. Adjusting course capacities to accommodate over-enrolled courses, the PMP mechanism's outcomes ensure that no student can be assigned a preferred schedule without violating course priorities or capacity constraints (\textit{approximate stability}). It is also not possible to reassign course seats among a group of students, benefitting some members and hurting none, while ensuring that the outcome respects course priorities to the same degree (\textit{approximate priority-constrained efficiency}). Among similarly prioritized students, its outcome limits envy; if a student prefers the schedule assigned to a student with the same or a lower priority level in each course, by removing one course from the preferred schedule, the student now prefers her own (\textit{schedule envy bounded by a single course}). Finally, in large populations, the PMP mechanism is resistant to strategic manipulations by students (\textit{strategy-proof in the large}).

\par To better guide practitioners in the course allocation process, we compare the performance of the PMP mechanism with several alternative mechanisms. As a benchmark, we consider a mechanism, the \textit{Random Serial Dictatorship with course reserves} (RSD) mechanism, which closely resembles what is used in practice. Students select courses in order of seniority (with ties broken randomly) and priorities based on student department and year of study are enforced by reserving seats in each course.\footnote{In practice, at some universities, courses set a large number of reserved seats at the beginning of the registration process. After students initially register, university departments relax the course reserve constraints and admit students from waiting lists to ensure full enrollment. We do not model this later period of the registration process. Instead, we estimate the \textit{optimal course reserves} that a university registrar should set at the beginning of the process if the number of reserved seats cannot be changed.} We also consider the Deferred Acceptance mechanism with a single and with multiple tie-breakings (denoted DA-STB and DA-MTB, respectively). These mechanisms allocate courses to students by extending the Gale-Shapley algorithm \citep[][]{galeshapley1962} to many-to-many matching, where priority ties are resolved by either a single tie-breaking rule or course-specific tie-breakings. The Deferred Acceptance mechanism has been successful in many practical applications, including matching applicants to residency programs, students to schools, and workers to job positions \cite[see][]{roth2018marketplaces}. We analyze six additional mechanisms in Appendix \ref{sec:appendix-additional-simulations-results}.

\par We use university data on the final allocation of $6\,023$ students to $756$ courses from the Spring 2018 term at a private institution in the mid-Atlantic region. In total, $23\,369$ course seats are occupied, with $11.2\%$ of courses enrolled at or above their reported capacities. We use student and course characteristics to estimate a model of student utilities and evaluate the performance of the PMP, RSD, DA-STB, and DA-MTB mechanisms with a variety of measures of student satisfaction and allocation fairness.

Our results demonstrate that the PMP mechanism delivers a substantially fairer outcome than the benchmark RSD mechanism and the two DA mechanisms. Considerably fewer students - $8.1\%$ of the total population, or almost $500$ students - prefer the course schedule assigned to a student of weakly lower priority in each course. In the PMP mechanism, no student's envy is greater than a single course, as opposed to $3.3\%$, $2.5\%$, and $0.7\%$ of students in the RSD, DA-STB, and DA-MTB mechanisms, respectively. While the PMP, DA-STB, and DA-MTB mechanisms prevent lower-priority students from taking a seat that a higher-priority student desires, $16.2\%$ of students would benefit from taking a seat that is assigned to a student with strictly lower priority in the RSD mechanism.

These fairness improvements come without sacrificing student satisfaction. Only $31\%$ of students change schedules between the PMP and RSD mechanisms. Among these students, $300$ more students strictly prefer their outcome in the PMP mechanism. In contrast, both DA mechanisms result in approximately the same number of students who strictly prefer the mechanism and who strictly prefer the RSD benchmark. The main difference between the two DA mechanisms is that only $11.3\%$ of students change their schedules for DA-STB, while $40.3\%$ do so for DA-MTB. The PMP mechanism also improves on the RSD benchmark in mean utility for students all years of study, unlike both DA mechanisms.

 {\bf Literature review.} This paper contributes to the literature analyzing the allocation of objects to agents through competitive equilibria. As competitive equilibrium from equal incomes are not guaranteed to exist, many papers focus on random allocation mechanisms. \cite{hylland1979efficient} first proposed the use of fake money and competitive equilibrium to randomly allocate objects to agents through what are referred to as pseudo-market mechanisms. \label{HMPY-connection-2}\cite{he2018pseudo} incorporate a priority structure into pseudo-markets and analyze random mechanisms with a focus on unit-demand settings such as school choice.\footnote{\cite{he2018pseudo} explain how their results extend in many-to-many settings with additive utility.}  \cite{miralles2021foundations}, \cite{echenique2021constrained} and \cite{nguyen2021stability} consider pseudo-market solutions to random allocation problems with complex constraints. \cite{PyciaPseudoMarkets} provides an excellent survey of the use of pseudo-markets for random allocations in environments without transfers.

In contrast to these papers, we only consider \textit{deterministic} mechanisms and study a many-to-many matching problem.\label{Budish-connection-5} \cite{Budish2011} introduces the idea of competitive markets that may only approximately clear. He proposes a deterministic mechanism, the \textit{approximate competitive equilibrium from equal incomes} mechanism, which is approximately Pareto-efficient, strategy-proof in the large, and bounds student envy by a single course. This mechanism was successfully implemented at Wharton Business School and Columbia Business School \cite[see][]{budish2017course, budishkessler2021}.\footnote{There are few other studies of real-world course allocation mechanisms. \cite{budish2012multi} use field data to study a non-strategy proof course allocation mechanism at Harvard Business School. \cite{bichlerrandomized} compare the performance of two random mechanisms at the Technical University of Munich. \cite{rusznak2021seat} analyze course allocation at a large university in Hungary.}
A key contribution of the present study is extending Budish's approach to two-sided matching settings with weak course priorities, a central feature of the undergraduate course allocation problem.

Two recent papers also analyze deterministic assignments in many-to-many matching settings. \cite{nguyenvohra2022} establish the existence of a competitive equilibrium when all agent preferences satisfy a novel geometric substitutes property, which is a strict generalization of the gross substitutes property \citep[see][]{kelso1982job}. When there is an upper limit on the number of goods $k$ an agent can acquire, they prove that there always exists an approximate competitive equilibrium with no good over-enrolled by more than $k-1$ units. This ``good-by-good'' bound dominates the aggregate market-clearing bound of \cite{Budish2011} when preferences are close substitutes, whereas the aggregate market-clearing bound dominates when student preferences are close complements. Though our main theoretical results follow the aggregate market-clearing bound approach, we ensure in our simulations that each approximate competitive equilibrium allocation found has no courses that are over-enrolled by more than $k-1$ seats.

Similarly, \label{lin-et-al}\cite{lin2022allocation} analyze deterministic allocation mechanisms through approximate competitive equilibria with a good-by-good bound. They consider complex feasibility constraints and prove the existence of an approximate competitive equilibrium that satisfies stability and fairness properties similar to the ones analyzed in this paper. We establish some additional properties of our mechanism, proving that it is strategy-proof in the large and results in an outcome that is approximately priority-constrained efficient. With simulations on randomly generated data, they compare their approach to draft mechanisms in the context of reassigning sporting event season tickets to families. We address the problem of allocating courses to undergraduate students by estimating preferences from real-world university data and comparing our proposed mechanism to a version of the mechanism used at many top universities in the United States. In Appendix \ref{sec:appendix-additional-simulations-results}, we also provide a comparison of our mechanism with the deferred acceptance mechanism with single and multiple tie-breakings \cite[][]{galeshapley1962}, the deferred acceptance mechanism with minority reserves \cite[][]{hafalir2013effective}, some alternative pseudo-market mechanisms \cite[][]{Budish2011}, and a variant of the probabilistic serial mechanism \cite[][]{bogomolnaia2001new}. Many of these mechanisms have been studied in many-to-one matching settings, but their properties in many-to-many matching with indifferences are not well-understood. 

\par The remainder of the paper is organized as follows. We model undergraduate course allocation in Section \ref{sec:Environment} and investigate the properties of the PMP mechanism in Section \ref{sec:pseudo-market}. In Section \ref{sec:simulations}, we present simulation results using university data. We offer concluding remarks in Section \ref{sec:discussion}. Proofs are in Appendices \ref{sec:appendix-proofs} and \ref{sec:theorem-1-omitted-details}. Details on the student utility model calibration and additional simulation results are in Appendices \ref{sec:appendix-student-utility-estimation},  \ref{sec:appendix-additional-simulations-results}, and \ref{sec:supplementary-materials}.

\section{Environment}
\label{sec:Environment}

\par Course allocation is a many-to-many matching problem described by the tuple $({\cal S}, {\cal C}, Q, V, \mathcal{R})$.
\begin{itemize}[itemsep=-0.5mm]
	\item ${\cal S}=\left\{ 1,...,N\right\} $ is a set of students; in reference to students, we use she/her pronouns. 
	\item ${\cal C}=\left\{ 1,...,M\right\} $ is a set of courses. 
	\item $Q=(q_1,...,q_M)$ is a vector of course capacities; each course $c \in {\cal C}$ has $q_c$ seats.
			
	\item $V=(\succsim_1,....,\succsim_N)$ is a vector of student preferences over \textit{course schedules}: subsets of ${\cal C}$. Students typically only consider a subset of all possible course schedules due to factors such as course meeting times and prerequisites. Additionally, each student can take at most $k$ total course seats. We assume that these restrictions are incorporated into student preferences and that student preferences are \textit{strict}. We also assume, for simplicity of exposition, that $1\leq k\leq M/2$. These restrictions still permit students to view courses as substitutes or complements.
			
	\item $\mathcal{R}=\{r_{s,c}\}_{s\in {\cal S},c\in {\cal C}}$ is a \textit{course priority structure}. $r_{s,c}\in\{1,...,R\}$ specifies the level of priority of student $s$ in course $c$, with $R \geq 1$ and a larger number meaning a higher level of priority. Multiple students can share the same level of priority in a course.
\end{itemize}

\par Our setting connects two well-studied problems in market design in a natural way. If $r_{s, c} = 1$ for all students $s \in {\cal S}$ and courses $c \in {\cal C}$, this is the combinatorial assignment problem of \cite{Budish2011}. If $\{r_{s, c}\}_{s \in {\cal S}}$ forms a strict ordering over all students for each course $c$, this is a two-sided matching problem in which students have strict preferences over subsets of courses and courses strictly rank individual students.\label{combinatorial-manay-to-many}

\par \label{hybrid-1} Universities often categorize students into four years of study based on semesters or credits completed. In addition, universities place students into departments, each of which offers a set of courses. These practices illustrate two examples of course priorities: \textit{year-specific} and \textit{department-specific} priorities. In a year-specific priority structure, $R = 4$, with $r_{s, c}$ equal to the year of study of student $s$ in every course $c$. In a department-specific priority structure, $R = 2$, with $r_{s, c} = 2$ if student $s$ and course $c$ are in the same department and $r_{s, c} = 1$ otherwise. Year-specific priorities give an advantage to students closer to graduation in every course, whereas department-specific priorities give an advantage to students that require a given course to graduate over those that do not. We discuss \textit{hybrid priority structures} that combine year-specific priorities and department-specific priorities in Section \ref{sec:simulations}.

\par We consider deterministic allocations of courses to students. An \textit{allocation} $x = (x_s)_{s \in {\cal S}}$ assigns the course schedule $x_s$ to each student $s \in {\cal S}$. For ease of notation, we view $x_s$ as both a subset of courses in ${\cal C}$ and a vector from the set $\{0, 1\}^M$. Allocation $x$ is \emph{feasible} if $\sum_{s \in {\cal S}}x_{s, c} \leq q_c$ for each $c\in {\cal C}$, assigning no course a greater number of students than its capacity. Allocation $x$ is \textit{individually rational} if, for every $s\in {\cal S}$, $x_s=\max_{\succeq_s}\{x'_s\,:\,x'_s\subseteq x_s\}$, ensuring that $s$ does not want to drop any of her assigned courses. A pair $(s,C)$ of student $s\in {\cal S}$ and course schedule $C\subset {\cal C}$ is \textit{a block} of $x$ if a student prefers $C$ to her current course schedule $x_s$ and each course in $C$ not in $x_s$ has an available seat or a lower priority student assigned to a seat. Formally, $(s, C)$ is a block if $C = \max_{\succeq_s}\{x'_s\,:\,x'_s\subseteq x_s\cup C\}$, $C\neq x_s$, and for each $c\in C \setminus x_s$, $\sum_{s\in {\cal S}} x_{s,c}<q_c$ or there is a student $s'\in {\cal S}$ with $c\in x_{s'}$ and $r_{s,c}>r_{s',c}$.

We evaluate allocations based on stability, efficiency, and fairness. An allocation $x$ is \textit{stable} if it is \textit{feasible}, \textit{individually rational}, and admits no \textit{blocks} \citep[see][]{roth_sotomayor_1990,echenique2006}.\footnote{A related concept is justified course envy. Allocation $x$ \textit{prevents justified course envy} if there are no students $s,s'\in {\cal S}$ and course $c\in {\cal C}$ such that $r_{s,c}>r_{s',c},\,c\notin x_s,\,c\in x_{s'}$ and $c\in \max_{\succeq_s}\{x'_s\,:\,x'_s\subseteq x_s\cup c\}$. The absence of justified course envy prevents envy towards students of a lower priority, but does not account for individual rationality and the possibility of envy towards several students assigned to several courses.} 

In the next section, we analyze approximate market equilibria, in which some courses might have seats assigned slightly above capacity or empty demanded seats unfilled. We say such allocations have a \textit{market-clearing error}. To account for allocations with market-clearing errors, we use an approximate version of stability
\begin{definition}
	\label{definition:approximate-stable}
	An allocation $x$ is \textbf{approximately stable} if it is stable with capacity $q'_c=\sum_{s\in {\cal S}}x_{s,c}$ for each course $c\in {\cal C}$.
\end{definition}
\noindent The relationship between stability and approximate stability is similar to the one between the notions of Pareto efficiency and approximate Pareto efficiency in \cite{Budish2011}. An allocation $y$ \textit{Pareto dominates} an allocation $x$ if there is at least one student who strictly prefers her course schedule in $y$ and all other students weakly prefer their course schedules in $y$. Allocation $x$ is \textit{Pareto efficient} if no allocation $y$ Pareto dominates $x$, and \textit{approximately Pareto efficient} if no $y$ that Pareto dominates $x$ assigns weakly fewer seats in each course. 

Approximate Pareto efficiency is a meaningful notion of efficiency in environments where a market-clearing error is present.

In our setting, a meaningful notion of efficiency should also account for course priorities. Following \cite{schlegel2020welfare}, we extend course priorities over individual students to priorities over subsets of students using first-order stochastic dominance. We say that allocation $y$ dominates allocation $x$ for course $c$, and write $y_{c}\succeq_{c}x_{c}$, if for all levels of priority $r\in{\{1, ..., R\}}$, $\sum_{s \in {\cal S}:r_{s,c}\geq r}y_{s,c}\geq\sum_{s\in {\cal S}:r_{s,c}\geq r}x_{s,c}$. Using this definition, we consider the following notion of priority-constrained efficiency.
\begin{definition}
	\label{definition:priority-constrained-efficiency}
	An allocation $x$ is \textbf{approximately priority-constrained efficient} if, in an environment with capacity $q'_c = \sum_{s \in {\cal S}} x_{s, c}$ for each course $c \in {\cal C}$, for each allocation $y$ that Pareto dominates $x$, there is a course $c' \in {\cal C}$ for which $y_{c'} \nsucceq_{c'} x_{c'}$.
\end{definition}	
\noindent As with approximate stability, we account for a possible market-clearing error in the allocation $x$ by adjusting the course capacities to $q'_c = \sum_{s \in {\cal S}} x_{s, c}$. 

Except for this adjustment, our definition coincides with the priority-constrained efficiency introduced by \cite{schlegel2020welfare}. 

The seminal measure of fairness is envy-freeness, as introduced by \cite{foley1967resource}: an allocation $x$ prevents \emph{schedule envy} if there are no students $s,s' \in {\cal S}$ such that $x_{s^{\prime}} \; \succ_s \; x_s$. However, without using lotteries, we cannot hope to allocate courses in a completely envy-free way among students (e.g., if two students with the same priorities desire a course with a single seat). Extending the notion originally proposed by \cite{Budish2011}, we consider the more permissive concept of \textit{schedule envy bounded by a single course} toward students of weakly lower priority in every course \citep[see also][]{lin2022allocation}.

\begin{definition}
	\label{definition:justified-envy-by-a-single-course}
	An allocation $x$ has \textbf{schedule envy bounded by a single course} toward students of weakly lower priority if, for any $s, s' \in \cal S$ such that $r_{s,c} \geq r_{s',c}$ for all $c \in {\cal C}$, either $x_s \; \succsim_s \; x_{s'}$ or there exists some course $c^*$ such that $x_s \; \succsim_s \; (x_{s'} \setminus \{c^*\})$.
\end{definition}
\noindent This criterion of fairness minimizes each student's envy towards students with weakly lower priority in every course.

\label{definition:SP} Allocations are found through \textit{mechanisms}, which systematically elicit student preferences over course schedules. A mechanism is \textit{strategy-proof} if there is no student, who, by reporting manipulated preferences, receives an allocation she strictly prefers to the course schedule she would get from reporting her true preferences. Evidence from business schools demonstrates that requiring strategic play on behalf of students can complicate efficiency \citep[see][]{budish2012multi, budishkessler2021, sonmez2010course}. We assume that strategic play is only a concern for students, as course priorities are set based on commonly observable factors such as student year of study or department.  

Mechanisms that utilize market equilibria fail to be strategy-proof when students can influence course prices. A mechanism is \textit{strategy-proof in the large} if truthful reporting is approximately optimal in an environment with a large number of students. For market equilibria, this amounts to being strategy-proof in a limit market in which each student regards the market prices as exogenous to her reported preferences \citep[see][]{AzevedoBudish2019,Budish2011}.\footnote{\cite{he2018pseudo} and \cite{liu2016ordinal} consider two related notions of asymptotic incentive compatibility and asymptotic strategy-proofness for cardinal and ordinal preferences, respectively. \cite{nguyen2025efficiency} recently introduced a stronger notion of uniform strategy-proofness.} To avoid unnecessary heavy notation early in the paper, we introduce this concept formally in the proof of Theorem \ref{theorem:strategy-proof}. 
 
\section{Pseudo-Markets with Priorities}
\label{sec:pseudo-market}
In this section, we present our novel mechanism, which allocates courses to students by extending the concept of approximate competitive equilibrium from equal incomes from \cite{Budish2011} to settings with course priorities. For this purpose, we assign each student $s \in {\cal S}$ a budget of fake money $b^*_s$, with $1\leq\min_{s}b^*_{s}\leq\max_{s}b_{s}^{*}\leq 1+\beta$ for some small $\beta>0$. The parameter $\beta$ is the maximum allowable budget inequality across students. We also allow for slack in the market-clearing condition, which is bounded by $\alpha\geq 0.$ 
\begin{definition}
	\label{definition:competitive-equilibrium}	
	The allocation $x^{*},$ priority-specific prices $p^{*},$ and budgets $b^{*}$ constitute
	an $(\alpha,\beta)$-\textbf{Pseudo-Market Equilibrium with Priorities} if:
	
	\begin{enumerate}
		\item Each student $s \in {\cal S}$ is assigned her most-preferred affordable course schedule: 
		\begin{equation*}
		x_{s}^{*}=\max_{\left(\succsim_{s}\right)}\left\{ x'_s\subseteq {\cal C}:\sum_{c\in {\cal C}}p_{c,r_{s,c}}^{*}x_{s, c}'\leq b_{s}^{*}\right\} 
		\end{equation*}
		
		\item Each course $c \in {\cal C}$ has a cutoff priority level $r^*_c$ such that:
		\vspace{-1mm}
		\begin{equation}
			p^{*}_{c,r}\in\begin{cases}
				\left\{ 0\right\}  & r>r^*_c\\{}
				[0,\overline{b}) & r=r^{*}_c\\
				[\overline{b},+\infty) & r<r^{*}_c
			\end{cases}\label{eq:price-comp} \tag{$\ast$}
		,\end{equation}
		\vspace{-1mm}
		where $\overline{b} = 1 + \beta + \epsilon$ for some small $\epsilon > 0$.
		
		\item The market-clearing error is less than $\alpha$: $||z^{*}||_{2}\leq\alpha,$ where $z^{*}=(z_{1}^{*},....,z_{M}^{*})$ and: 
		\begin{enumerate}
			\item $z_{c}^{*}=\sum_{s\in {\cal S}}x_{s, c}^{*}-q_{c}$ if $p_{c,1}^{*}>0,$
			\item $z_{c}^{*}=\max(\sum_{s\in {\cal S}}x_{s, c}^{*}-q_{c},0)$ if $p_{c,1}^{*}=0$.
		\end{enumerate}
		\item Student budgets are almost equal: $1\leq\min_{s \in {\cal S}}b_{s}^{*}\leq\max_{s \in {\cal S}}b_{s}^{*} \leq 1+\beta$.
	\end{enumerate}
\end{definition}

\label{Budish-connection-3}
Unlike \cite{Budish2011}, the above definition of an approximate competitive equilibrium allows for course prices to depend on priority levels, making the vector of prices $p^*=\left\{ p^*_{c,r}\right\} _{c\in {\cal C},r\in\{1, ..., R\}}\in\mathbb{R}^{M\!R}$. Condition (\ref{eq:price-comp}) ensures that each course has a \textit{cutoff level of priority}, where higher priority students obtain seats in the course for free, students at the cutoff level of priority face some price, and lower priority students cannot afford a seat. In doing so, condition (\ref{eq:price-comp}) guarantees no students are allocated a seat in a course when a higher priority student who would like a seat does not receive one.  \label{He-connection-1}A similar condition appeared in \citet{he2018pseudo} in the context of random allocation with priorities.

\label{step-9}

The market-clearing error for a course depends only on whether its price equals to zero at $r=1$, the lowest priority level. Under-enrollment is only counted as an error if the price at level $r = 1$ is positive. This requirement is a non-trivial extension of \cite{Budish2011}'s definition to settings with course priorities. Alternatively, if a course's market-clearing error depends on whether the price at the cutoff priority level is zero, one can change the cutoff level and artificially lower the error. For example, consider an under-enrolled course with a price of zero at and above the cutoff priority level and a price of $\overline{b}$ below the cutoff. By the alternative definition, the market-clearing error is zero. However, the unfilled course seats should be counted towards the market-clearing error, as they may be eliminated if the price below the cutoff level is decreased.

\cite{Budish2011} establishes the existence of an approximate competitive equilibrium from equal incomes for any budget inequality $\beta > 0$ with an upper bound on the market-clearing error of $\alpha = \sqrt{kM/2}$.\footnote{\label{budish-bound}\cite{Budish2011} shows the existence of an approximate equilibrium from equal incomes with a market-clearing error at most $\sqrt{\min\{2k, M\} M}/2$. As we assume $k \leq M/2$, the upper bound reduces to $\sqrt{kM/2}$.} This bound is proportional to the square root of the dimension of the price space. In the presence of $R$ priority levels for each of the $M$ courses, the dimension of the price space is $M\!R$. This could potentially require a large market-clearing error for a Pseudo-Market Equilibrium with Priorities to exist. \label{T1-main-idea}The main idea of the proof of Theorem \ref{theorem:existence} is to reduce the effective dimension of the price space from $M\!R$ to $M$, where we can establish the existence of a Pseudo-Market Equilibrium with Priorities. In particular, for any vector $t \in \mathbb{R}^M$, we can define priority-specific prices $p\in \mathbb{R}^{M\!R}$ by setting: 
	
\vspace{-3mm}	
\begin{equation*}
	p_{c,r}(t)=\max(t_{c}-(r-1)\overline{b},0)
\end{equation*}

\vspace{0mm}	
\noindent for each course $c \in {\cal C}$ and priority level $r \in \{1, ..., R\}$. This parameterization guarantees that, for any $t \in [0, R\overline{b})^M$, the corresponding price vector $p$ satisfies condition (\ref{eq:price-comp}). We then prove the existence of an approximate equilibrium in the $M$-dimensional space.

\begin{theorem}[Existence]
	\label{theorem:existence}
	For any $\beta > 0$, there exists a $(\sqrt{kM/2}, \beta)$-Pseudo-Market Equilibrium with Priorities.
\end{theorem}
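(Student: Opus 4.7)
The approach is to execute the dimension-reduction strategy flagged in the paper: parameterize the $MR$-dimensional priority-specific price space by a single scalar per course, then apply the existence/bound argument from \cite{Budish2011} in the resulting $M$-dimensional reduced space. First I would verify that the map $p_{c,r}(t)=\max(t_c - (r-1)\overline{b},\,0)$ for $t \in [0, R\overline{b})^M$ lands in the subset of prices satisfying condition $(\ast)$: if $t_c \in [(r^*-1)\overline{b},\, r^*\overline{b})$, then $p_{c,r^*}(t) = t_c - (r^*-1)\overline{b} \in [0,\overline{b})$, $p_{c,r}(t) = 0$ for $r > r^*$, and $p_{c,r}(t) \geq \overline{b}$ for $r < r^*$, so the cutoff $r^*_c$ is determined continuously from $t_c$. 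Moreover, every price vector obeying $(\ast)$ is the image of some $t$, so nothing is lost by restricting to this parameterization.

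Next I would set up the minimization problem that Budish uses. Perturb the budget vector generically within $[1, 1+\beta]^N$ so that, at a given $t$, each student $s$ has a unique optimal affordable bundle $x_s(t, b_s)$. Define the excess-demand vector $z(t,b)\in\mathbb{R}^M$ coordinate-wise via Condition 3 of Definition \ref{definition:competitive-equilibrium}, and consider the function $(t,b)\mapsto \|z(t,b)\|_2^2$ over the compact set $[0, R\overline{b}]^M \times [1, 1+\beta]^N$. Because student demand is piecewise constant in $t$ and $b$, the infimum is attained (after a standard upper-semi-continuous closure); pick any minimizer $(t^*, b^*)$ and let $x^* = (x_s(t^*, b^*_s))_s$.

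The heart of the proof is to show $\|z(t^*, b^*)\|_2 \leq \sqrt{kM/2}$. I would adapt the argument in \cite{Budish2011}: at the minimizer, consider local perturbations of $t$ within the region in which each student's demand set is unchanged, combined with budget tie-breaks among the finitely many nearby demand cells. A Shapley-Folkman / convexification step then shows that the aggregate demand lies within distance $\sqrt{k/2}$ per coordinate of the convex-hull minimizer, which would be zero excess demand; summing in $\ell^2$ over the $M$ coordinates yields the bound $\sqrt{kM/2}$. The key dimensional input is that the parameterized price space has dimension $M$ (not $MR$), so Budish's bound applies verbatim once the parameterization is in place.

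The main obstacles are two asymmetries that distinguish this problem from the unpriorized case. First, Condition 3 counts under-enrollment only when $p_{c,1}^* > 0$, i.e., $t_c^* \geq (R-1)\overline{b}$; at a minimizer with $t_c^* < (R-1)\overline{b}$, any under-enrollment contributes zero error, and Budish's local perturbation argument still goes through because one may perturb $t_c^*$ strictly upward into the interior of its current cell without forcing a change in $r^*_c$. Second, the cutoff condition requires $\sum_{\{s:\,r_{s,c}=r^*_c\}} x^*_{s,c} > 0$ whenever $\sum_s x^*_{s,c} > 0$; if a minimizer yields a vacuous cutoff level for some course $c$, I would simply lower $r^*_c$ (equivalently, subtract multiples of $\overline{b}$ from $t_c^*$) to the largest priority level at which an enrolled student actually sits, which leaves every individual demand $x^*_s$, the prices $p^*_{c,r}$ paid by enrolled students, and the excess demand $z^*$ unchanged. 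Verifying that these two adjustments preserve the $\sqrt{kM/2}$ bound and all four equilibrium conditions completes the proof.
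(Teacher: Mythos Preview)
Your parameterization is exactly the paper's key move, but the central argument you sketch is not Budish's and would not go through as stated. The paper does \emph{not} minimize $\|z(t,b)\|_2^2$ over a compact set and then invoke Shapley--Folkman at the minimizer. Instead it defines a t\^{a}tonnement map $f(\tilde t)=\mathrm{trunc}(\tilde t)+\gamma\,z(\mathrm{trunc}(\tilde t))$ on an enlarged box $\tilde{\mathcal T}=[-1,R\overline b+1]^M$, convexifies it to a correspondence $F$, and applies Kakutani. The fixed point of $F$ is what furnishes the crucial structural fact that a \emph{perfect} market-clearing excess-demand vector lies in the convex hull of the finitely many integer excess demands $\{z(t^{\phi})\}$ realized in neighboring cells of the price space (Steps~5--6 in Appendix~B). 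Only with that convex-hull membership in hand does the geometric argument (Steps~7--9) bound the distance to some vertex by $\sqrt{kM/2}$. A raw minimizer of the piecewise-constant function $\|z\|_2^2$ carries no such convex-hull guarantee: the minimum value could sit in an interior cell with no relation between zero and the values on adjacent cells. So your ``heart of the proof'' paragraph needs to be replaced by the tâtonnement/fixed-point construction, together with a lemma (Lemma~B1 in the paper) that the taxes $\tau_{s,x_s}$ can be chosen so that at most $M$ budget surfaces intersect at any $t$---this is what keeps the combinatorics of the geometric step $M$-dimensional.

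Two further concrete errors. First, $p_{c,1}(t)=\max(t_c-(1-1)\overline b,0)=\max(t_c,0)$, so $p_{c,1}^*>0$ iff $t_c^*>0$, not $t_c^*\geq (R-1)\overline b$; you have inverted which priority level corresponds to the highest price. Second, your cutoff repair is in the wrong direction. If $\sum_{\{s:r_{s,c}=r^*_c\}}x^*_{s,c}=0$ while total demand is positive, the enrolled students sit at levels \emph{above} $r^*_c$, so one must \emph{raise} the cutoff (add $\overline b-p^*_{c,r^*_c}$ to $t^*_c$), making the old cutoff level face price $\overline b$---which changes nothing since no one there demanded $c$. Lowering the cutoff, as you propose, would drop the price faced by students at levels $r^*_c-1,\dots$ from $\geq\overline b$ to something affordable, potentially altering their demand and the excess demand vector.
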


\noindent Theorem \ref{theorem:existence} establishes the existence of a Pseudo-Market Equilibrium with Priorities with the same upper bound on the market-clearing error as the approximate competitive equilibrium from equal incomes, extending the main result of \cite{Budish2011} to environments with course priorities. The proof of Theorem \ref{theorem:existence} is presented in Appendix \ref{sec:appendix-proofs}.

\subsection{The Pseudo-Market with Priorities Mechanism}
\label{subsec:PMP-mechanism}

Next, using the Pseudo-Market Equilibrium with Priorities, we introduce the Pseudo-Market with Priorities (PMP) mechanism and investigate approximate notions of stability, efficiency, fairness, and strategy-proofness it satisfies.

\begin{definition}
\label{definition:Pseudo-Market-Mechanism} 
The Pseudo-Market with Priorities mechanism with market-clearing error $\alpha$ and budget inequality $\beta$ is defined through the following steps:
\begin{enumerate}
	\item Each student $s$ reports preferences $\succsim'_s$ over course schedules.
	\item Each student $s$ is assigned a random budget $b^*_s$ drawn uniformly from $[1, 1+ \beta]$. 
	\item Compute an $(\alpha,\beta)$-Pseudo-Market Equilibrium with Priorities $(x^*,p^*,b^*)$ using an anonymous method. Allocate courses to students according to $x^*$.
\end{enumerate} 
\end{definition}
\noindent In general, the PMP mechanism depends on the level of allowable market-clearing error $\alpha$ and the level of budget inequality $\beta$. We avoid this dependence in our exposition for results that hold for all non-negative $\alpha\geq 0$ and $\beta\geq 0$.

We first establish that the PMP mechanism results in an \textit{approximately stable} allocation. Definitions \ref{definition:competitive-equilibrium} and \ref{definition:Pseudo-Market-Mechanism} ensure that the allocation is individually rational. With courses regarded as being at their full capacity, condition (\ref{eq:price-comp}) on priority-specific prices prevents blocks.\footnote{An alternative statement holds. Let $(x^*,p^*,b^*)$ be an $(\alpha,\beta)$-Pseudo-Market Equilibrium with Priorities, and, for each $c \in {\cal C}$, let $q'_c = \sum_{s\in {\cal S}} x^*_{sc}$ if $p^*_{c, 1} > 0$ and $q'_c = \max(\sum_{s\in {\cal S}} x^*_{sc}, q_c)$ if $p^*_{c, 1} = 0$. Then, allocation $x^*$ is stable in the environment with adjusted capacities $q'_c,$ $c\in {\cal C}$.}

\begin{theorem}[Stability]
	\label{theorem:stability}
	The Pseudo-Market with Priorities mechanism results in an allocation that is approximately stable.
\end{theorem}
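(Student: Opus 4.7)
The plan is to verify, for the PMP allocation $x^*$, the three components of stability---feasibility, individual rationality, and absence of blocks---in the environment with the adjusted capacities $q'_c = \sum_{s\in {\cal S}} x^*_{s,c}$. Feasibility is automatic, since the adjusted capacity exactly equals the assigned enrollment. For individual rationality, I would invoke the equilibrium condition directly: because prices are non-negative, every subset $x'_s \subseteq x^*_s$ is also affordable, so $x^*_s \succsim_s x'_s$ by the definition of $x^*_s$ as a most-preferred affordable bundle, and strictness of preferences yields $x^*_s = \max_{\succsim_s}\{x'_s : x'_s \subseteq x^*_s\}$.

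The substantive step is ruling out blocks. Suppose for contradiction that $(s, C)$ blocks $x^*$. Because $q'_c = \sum_{s'\in {\cal S}} x^*_{s',c}$, the ``available seat'' alternative in the block definition is unavailable for any $c \in C \setminus x^*_s$; hence for each such $c$ there must exist $s'$ with $c \in x^*_{s'}$ and $r_{s,c} > r_{s',c}$. The pivotal move is to translate this priority inequality into price information via condition~(\ref{eq:price-comp}). Since $s'$ consumes $c$, affordability gives $p^*_{c, r_{s',c}} \leq b^*_{s'} \leq 1 + \beta < \overline{b}$, so (\ref{eq:price-comp}) forces $r_{s',c} \geq r^*_c$. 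Chaining $r_{s,c} > r_{s',c} \geq r^*_c$ gives $r_{s,c} > r^*_c$, and a second application of (\ref{eq:price-comp}) yields $p^*_{c, r_{s,c}} = 0$: every course that is new to $s$ in $C$ is free at $s$'s priority level.

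Having established that $p^*_{c, r_{s,c}} = 0$ for each $c \in C \setminus x^*_s$, I would bound the total cost of $C$ to $s$ by $\sum_{c \in C} p^*_{c, r_{s,c}} = \sum_{c \in C \cap x^*_s} p^*_{c, r_{s,c}} \leq \sum_{c \in x^*_s} p^*_{c, r_{s,c}} \leq b^*_s$, showing that $C$ is affordable for $s$. The block definition together with strict preferences implies $C \succ_s x^*_s$, which contradicts the fact that $x^*_s$ is $s$'s most-preferred affordable bundle under $p^*$. Combined with feasibility and individual rationality, this completes the verification that $x^*$ is stable in the adjusted-capacity environment.

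The main obstacle is the priority-to-price translation in the second paragraph; once (\ref{eq:price-comp}) is leveraged to show that all ``new'' courses are free for the would-be blocking student, the rest is a variant of the classical argument that competitive equilibria lie in the core. No appeal to the market-clearing bound $\alpha$ from Theorem~\ref{theorem:existence} is needed because the capacity adjustment $q'_c$ absorbs the market-clearing error exactly, which is precisely what makes the notion \emph{approximate} stability rather than outright stability in the original environment.
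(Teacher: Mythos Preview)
Your proof is correct and follows essentially the same approach as the paper's: both use condition~(\ref{eq:price-comp}) to relate the priority of a student assigned to a course with the price the blocking student would face, and then derive a contradiction with the optimality of $x^*_s$. The only difference is orientation---the paper argues that some new course must have positive price for $s$ (else $C$ would be affordable and preferred) and then shows no lower-priority student could afford it, whereas you run the chain in the other direction, showing directly that every new course is free for $s$ and hence $C$ is affordable; these are contrapositives of one another and involve the same ideas.
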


\noindent The proof is postponed to Appendix \ref{sec:appendix-proofs}. The main reason that the outcome of the PMP mechanism is not stable is the possibility of over- and under-enrolled courses. Stability requires a feasible allocation and only allows a course to have unfilled seats if no student can benefit from taking one. The PMP mechanism can violate these conditions when the market-clearing error is nonzero. However, Theorem \ref{theorem:existence} suggests instances will be rare. 

The presence of course priorities prevents the PMP mechanism from being Pareto efficient. This is similar to how stable allocations in matching markets can fail to be Pareto efficient \citep[see][]{roth_sotomayor_1990}. We illustrate this in the following example.
\begin{example}
	\label{example:efficiency}
	Consider a setting with two students ${\cal S}=\{1,2\}$ and two courses ${\cal C}=\{A,B\}$, with $q_A = q_B = k = 1$. Student preferences are $\{A\}\succ_1 \{B\} \succ_1 \varnothing$ and $\{B\}\succ_2 \{A\} \succ_2 \varnothing$. Student budgets are $b^*_1=1$ and $b^*_2=1+\beta$ for some $0<\beta<1$. Course priorities are the opposite of student preferences, with $r_{1,B}=r_{2,A}=2$ and $r_{1,A}=r_{2,B}=1$. Budgets $b^*$, cutoff priority levels $r^*_A=r^*_B=2$, priority-specific prices $p^*_A=p^*_B=(2,1)$, and allocation:
	\begin{center}
		\begin{tabular}{ccc} 
			Student & $x^*_{s,A}$ & $x^*_{s,B}$ \\ [0.5ex] 
			\hline
			1 & 0 & 1 \\ 
			2 & 1 & 0
		\end{tabular}
	\end{center}
	constitute a $(0, \beta)$-Pseudo-Market Equilibrium with Priorities. The allocation is not Pareto efficient, as both students would be made better off by exchanging their assigned seats.	
\end{example}
\noindent The next result shows that the outcome of the PMP mechanism satisfies a notion of constrained efficiency. Adjusting course capacities to $q'_c = \sum_{s \in {\cal S}} x^*_{s, c}$, no allocation that Pareto dominates the outcome of the PMP mechanism respects priorities to the same degree. For each course, the distribution of assigned student priority levels cannot first-order stochastically dominate the corresponding distribution for the PMP outcome (see Definition \ref{definition:priority-constrained-efficiency}).\footnote{Similarly to Theorem \ref{theorem:stability}, an alternative statement holds. Let $(x^*,p^*,b^*)$ be an  $(\alpha,\beta)$-Pseudo-Market Equilibrium with Priorities, and, for each $c \in {\cal C}$, let $q'_c = \sum_{s\in {\cal S}} x^*_{sc}$ if $p^*_{c, 1} > 0$ and $q'_c = \max(\sum_{s\in {\cal S}} x^*_{sc}, q_c)$ if $p^*_{c, 1} = 0$. Then, allocation $x^*$ is priority-constrained efficient in the environment with adjusted capacities $q'_c,$ $c\in {\cal C}$. In environments without course priorities, this result reduces to Proposition 2 in \cite{Budish2011}.}

\begin{theorem}[Efficiency]
	\label{theorem:Pareto-improvements}
	The Pseudo-Market with Priorities mechanism results in an allocation that is approximately priority-constrained efficient.
\end{theorem}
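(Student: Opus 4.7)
I argue by contradiction. Let $(x^*, p^*, b^*)$ be the Pseudo-Market Equilibrium with Priorities underlying the PMP outcome, and suppose some allocation $y$ Pareto dominates $x^*$, is feasible under the adjusted capacities $q'_c = \sum_s x^*_{s,c}$, and satisfies $y_c \succeq_c x^*_c$ for every course $c \in {\cal C}$. The strategy is to compare the total priority-specific expenditure $\sum_s \sum_c p^*_{c, r_{s,c}} y_{s,c}$ against $\sum_s \sum_c p^*_{c, r_{s,c}} x^*_{s,c}$, showing that the former is strictly larger on the student side but weakly smaller on the course side, which gives the contradiction.

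\textbf{Step 1 (strict increase via Pareto dominance).} For every student $s$, strictness of preferences together with $y_s \succsim_s x^*_s$ leaves two cases: $y_s = x^*_s$, or $y_s \succ_s x^*_s$. In the second case, optimality of $x^*_s$ at the budget $b^*_s$ forces $y_s$ to be strictly unaffordable at $s$'s priority-specific prices, yielding $\sum_c p^*_{c, r_{s,c}} y_{s,c} > b^*_s \geq \sum_c p^*_{c, r_{s,c}} x^*_{s,c}$, while the first case gives equality. Pareto dominance guarantees at least one strict preferrer, so summing over students produces the strict inequality $\sum_s \sum_c p^*_{c, r_{s,c}} y_{s,c} > \sum_s \sum_c p^*_{c, r_{s,c}} x^*_{s,c}$.

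\textbf{Step 2 (weak decrease course by course).} Fix an arbitrary course $c$ and show $\sum_s p^*_{c, r_{s,c}} y_{s,c} \leq \sum_s p^*_{c, r_{s,c}} x^*_{s,c}$. If $\sum_s x^*_{s,c}=0$ the adjusted capacity forces $y_{s,c}\equiv 0$ and both sides vanish. Otherwise let $r^*_c$ be the cutoff from Definition \ref{definition:competitive-equilibrium}. Condition $(\ast)$ makes every priority $r<r^*_c$ strictly unaffordable (price $\geq \overline{b} > 1+\beta \geq b^*_s$), so no below-cutoff student is assigned $c$ in $x^*$. Feasibility combined with FOSD at priority level $r=1$ pins down the total-mass equality $\sum_s y_{s,c} = \sum_s x^*_{s,c}$; FOSD at $r=r^*_c$ combined with that equality eliminates below-cutoff students from $y$ as well; finally FOSD at $r=r^*_c+1$ together with the total-mass equality forces the number of at-cutoff students assigned $c$ to be weakly smaller under $y$ than under $x^*$. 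Since above-cutoff students face price $0$ and below-cutoff students are absent from both allocations, expenditure on $c$ reduces to the common non-negative cutoff price $p^*_{c,r^*_c}$ times the number of at-cutoff students, so the desired weak inequality follows. Summing over $c$ contradicts Step 1.

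\textbf{Main obstacle.} The delicate point is that condition $(\ast)$ does not order prices monotonically across all priority levels, only across the cutoff. A naive Abel-summation argument would have to confront this non-monotonicity. The key reduction is to notice that the budget cap $b^*_s\leq 1+\beta<\overline{b}$ and condition $(\ast)$ together rule out below-cutoff enrollment in \emph{both} $x^*$ and any candidate $y$ satisfying $y_c\succeq_c x^*_c$ and feasibility, collapsing the expenditure comparison on each course to a single priority level. Carefully justifying that FOSD plus the total-mass equality extends the elimination of below-cutoff enrollment from $x^*$ to $y$ is where the argument has to be handled with care.
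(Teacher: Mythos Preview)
Your proof is correct and reaches the same contradiction as the paper, but Step 2 takes a genuinely different route. The paper does not argue course by course; after obtaining the strict aggregate expenditure inequality (your Step 1, which is identical), it applies an Abel summation to rewrite $\sum_{s}\sum_{c} p^*_{c,r_{s,c}}(y_{s,c}-x^*_{s,c})$ as $\sum_c \big[ p^*_{c,1}\Delta_{c,1} + \sum_{r=2}^R (p^*_{c,r}-p^*_{c,r-1})\Delta_{c,r}\big]$ with $\Delta_{c,r}=\sum_{s:r_{s,c}\ge r}(y_{s,c}-x^*_{s,c})$, then invokes $p^*_{c,r}-p^*_{c,r-1}\le 0$ together with $\Delta_{c,1}=0$ to conclude that some $\Delta_{c,r}<0$, i.e.\ $y_c\nsucceq_c x^*_c$. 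Your decomposition instead exploits the cutoff structure directly: above-cutoff prices vanish and below-cutoff enrollment is eliminated from both $x^*$ and $y$ (the latter via feasibility plus FOSD at levels $1$ and $r^*_c$), so expenditure on each course collapses to the single cutoff level, where FOSD at level $r^*_c+1$ finishes the comparison. Your approach is more elementary and, as you note in the ``main obstacle'' remark, sidesteps any need for full monotonicity of $p^*_{c,r}$ in $r$ below the cutoff---a property the paper's Abel-summation step uses but which condition $(\ast)$ does not literally force, since below-cutoff prices are only required to lie in $[\overline{b},\infty)$. The paper's route, by contrast, tracks the \cite{schlegel2020welfare} argument more closely and would extend more readily to price structures that are monotone but not of cutoff form.
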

\noindent The proof of the above result resembles the proof of \cite{schlegel2020welfare} for single-unit settings with random allocations and is postponed to Appendix \ref{sec:appendix-proofs}.
\footnote{We assume each agent has strict preferences over course schedules. So, unlike \cite{schlegel2020welfare} and \cite{miralles2021foundations}, we do not require that a student chooses the cheapest course schedule when multiple course schedules are optimal.} 

Though the concepts of approximate stability and priority-constrained efficiency are related, neither implies the other. If we assume in Example \ref{example:efficiency} that both students have the same priority in each course, then there are two (approximately) stable allocations: $x_1 = \{A\}$, $x_2 = \{B\}$ and  $x_1 = \{B\}$, $x_2 = \{A\}$. However, only the first allocation is approximately priority-constrained efficient. Here, approximate priority-constrained efficiency selects among approximately stable allocations, favoring more Pareto optimal outcomes. In turn, suppose that student $1$ is the only student in the environment and that $x_1 = \{A, B\}$. This allocation is approximately priority-constrained efficient, but not approximately stable, as it is not individually rational. 

Theorem \ref{theorem:stability} prevents the possibility that a student can benefit from taking a seat in a course that a student with lower priority obtains. However, priorities in undergraduate course allocation are weak, with hundreds of students often at the same level of priority across many courses. The next result addresses fairness between students whose priorities are weakly ordered in the same way across all courses. 
\begin{theorem}[Fairness]
	\label{theorem:fairness}
	If $\beta\leq \frac{1}{k-1}$, the Pseudo-Market with Priorities mechanism results in an allocation that has schedule envy bounded by a single course toward students of weakly lower levels of priority.
\end{theorem}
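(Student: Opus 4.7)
The plan is to exploit two features of the Pseudo-Market Equilibrium with Priorities: the monotonicity of $p^*_{c,r}$ in $r$ guaranteed by condition~$(\ast)$, and the fact that each student is assigned her most-preferred affordable schedule. Fix $s,s'\in{\cal S}$ with $r_{s,c}\geq r_{s',c}$ for every $c\in{\cal C}$. Condition~$(\ast)$ forces $p^*_{c,r}$ to be weakly decreasing in $r$, so $p^*_{c,r_{s,c}}\leq p^*_{c,r_{s',c}}$ for each $c$. Applying this course-by-course to $s'$'s schedule gives
\[
\sum_{c\in x^*_{s'}}p^*_{c,r_{s,c}}\;\leq\;\sum_{c\in x^*_{s'}}p^*_{c,r_{s',c}}\;\leq\;b^*_{s'}\;\leq\;1+\beta,
\]
where the middle inequality is $s'$'s budget constraint.

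I would then split into two cases. If $\sum_{c\in x^*_{s'}}p^*_{c,r_{s,c}}\leq b^*_s$, then $x^*_{s'}$ is itself affordable to $s$, so demand optimality of $x^*_s$ yields $x^*_s\succsim_s x^*_{s'}$. Otherwise, let $c^*\in\argmax_{c\in x^*_{s'}}p^*_{c,r_{s,c}}$ be the costliest course in $x^*_{s'}$ from $s$'s viewpoint, and set $m=|x^*_{s'}|\leq k$. Since the maximum of nonnegative numbers is at least their average,
\[
\sum_{c\in x^*_{s'}\setminus\{c^*\}}p^*_{c,r_{s,c}}\;\leq\;\frac{m-1}{m}\sum_{c\in x^*_{s'}}p^*_{c,r_{s,c}}\;\leq\;\frac{m-1}{m}(1+\beta).
\]
Using $m\leq k$ and $\beta\leq \frac{1}{k-1}$, a direct rearrangement shows the right-hand side is at most $1\leq b^*_s$, so $x^*_{s'}\setminus\{c^*\}$ is affordable to $s$, and demand optimality again gives $x^*_s\succsim_s x^*_{s'}\setminus\{c^*\}$, as required.

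The only substantive step is the budget calibration $\beta\leq 1/(k-1)$: this is exactly tight for the average-versus-maximum bound applied to a schedule of the maximum size $k$, and the degenerate cases $m\in\{0,1\}$ are immediate since the empty schedule is costless. I expect no obstacle from the market-clearing side: the argument invokes only the equilibrium price structure~$(\ast)$ and individual demand optimality, and does not use the market-clearing error bound from Theorem~\ref{theorem:existence}, so the conclusion holds for any $\alpha\geq 0$.
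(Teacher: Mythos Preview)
Your proof is correct and follows essentially the same approach as the paper: both exploit the price monotonicity from condition~$(\ast)$ and an averaging argument on a schedule of size $m\leq k$ together with the budget ratio bound $1+\beta\leq k/(k-1)$. The only cosmetic difference is that you argue constructively by removing the single most expensive course and bounding the residual cost by $\tfrac{m-1}{m}(1+\beta)\leq 1$, whereas the paper argues by contradiction, assuming no single removal is affordable and summing the resulting inequalities over all $m$ removals to obtain $(m-1)b^*_{s'}>m\,b^*_s$; the arithmetic is the same in either direction.
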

\noindent Theorem \ref{theorem:fairness} extends the result established in \cite{Budish2011} for settings without priorities. With priorities based on seniority and student department, the above result implies that, with a sufficiently small budget inequality, the PMP mechanism produces an allocation in which any two students from the same department and the same year of study have schedule envy bounded by a single course toward one another.\footnote{This result can be strengthened. The PMP mechanism results in an allocation $x^*$ such that for any two students $s, s' \in \cal S$ such that $r_{s,c} \geq r_{s',c}$ for each $c \in x^*_{s'}$, envy is bounded by a single course. In particular, rather than all courses, $s$ needs only to be at a weakly higher priority level in the courses being taken by $s'$.}

Our final theorem establishes that students have almost no incentive to manipulate the PMP mechanism in large populations. This is commonly the case at undergraduate institutions, where student bodies often consist of thousands of students.
\begin{theorem}[Strategy-Proofness]
	\label{theorem:strategy-proof}
	The Pseudo-Market with Priorities mechanism is strategy-proof in the large. 
\end{theorem}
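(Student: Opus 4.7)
The plan is to adapt the limit-economy argument of \cite{Budish2011} and \cite{AzevedoBudish2019} to the priority-specific pricing environment. Following \cite{AzevedoBudish2019}, a mechanism is strategy-proof in the large if, in the limit economy where each student has zero mass and takes market prices as exogenous, truthful reporting is a best response; approximate strategy-proofness in the corresponding sequence of finite economies then follows as $N \to \infty$. I would begin by formally introducing this limit economy as a continuum of students with types $(\succsim_s, r_s)$ drawn from a distribution, coupled with budgets drawn uniformly from $[1, 1+\beta]$, and with aggregate demand well-defined at each priority-specific price vector.

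The core of the argument is showing truthful reporting is optimal in the limit. Since priorities $r_{s,c}$ are based on observable characteristics (year, department) rather than reports, and a single student's report has zero impact on the equilibrium prices $p^*$ or her budget $b^*_s$, she faces a budget set
\[B_s = \left\{x'_s \subseteq {\cal C} : \sum_{c \in {\cal C}} p^*_{c, r_{s,c}} x'_c \leq b^*_s\right\}\]
that does not depend on her report. By part 1 of Definition \ref{definition:competitive-equilibrium}, the mechanism then assigns her the $\succsim'_s$-maximal element of $B_s$, where $\succsim'_s$ denotes her reported preferences. Reporting truthfully yields her $\succsim_s$-maximal element of $B_s$; any misreport returns some other element of $B_s$, which by definition is weakly $\succsim_s$-dominated. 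Hence truthful reporting is a best response in the limit economy.

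The main obstacle will be transferring this limit-economy optimality into the formal SP-L statement for finite economies. This requires verifying that the Pseudo-Market Equilibrium with Priorities correspondence has the continuity properties (for instance, upper hemi-continuity in the empirical distribution of reports) demanded by the \cite{AzevedoBudish2019} framework, so that the impact of a single report on $p^*$ vanishes at the appropriate rate. Because the parameterization $p_{c,r}(t) = \max(t_c - (r-1)\overline{b}, 0)$ from Theorem \ref{theorem:existence} exhibits kinks precisely where cutoff priority levels change, and equilibria may not be unique, I would fix a deterministic selection rule from the equilibrium correspondence and verify that the resulting map from empirical type distributions to interim allocations is continuous at every full-support distribution. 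With this in hand, the finite-to-limit convergence is standard, and together with the limit-economy optimality above it yields strategy-proofness in the large.
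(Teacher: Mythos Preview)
Your approach is valid in principle—indeed, the paper itself notes in a footnote that ``the steps of Theorem~4 of \cite{Budish2011} can be adapted to obtain the result'' via a continuum replication—but it is not the route the paper actually takes. The paper instead invokes a sufficient condition from Appendix~C of \cite{AzevedoBudish2019}: any \emph{semi-anonymous} mechanism that is \emph{envy-free but for tie-breaking} is strategy-proof in the large. Students are partitioned into groups by their priority vector $r_s=(r_{s,c})_{c\in\mathcal{C}}$; within a group all students face the same priority-specific prices and the same uniform budget lottery, so the mechanism is semi-anonymous. Envy-freeness but for tie-breaking is then immediate: if $s$ and $s'$ are in the same group and $\ell_s\geq\ell_{s'}$ (hence $b_s\geq b_{s'}$), then $s$'s budget set contains $s'$'s, so $s$ weakly prefers her own allocation.

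The trade-off is clear. Your direct limit-economy argument is conceptually transparent—once prices are exogenous, truthful reporting is trivially optimal—but, as you correctly flag, transferring this to finite economies requires establishing continuity of the equilibrium correspondence (or a selection from it) in the empirical type distribution, which is delicate given non-uniqueness, the kinks in $p_{c,r}(t)=\max(t_c-(r-1)\bar b,0)$, and the market-clearing slack. The paper's route sidesteps all of this: the envy-freeness-but-for-tie-breaking condition is verified in two lines, and the hard convergence work is outsourced wholesale to \cite{AzevedoBudish2019}. Your proposal is a legitimate alternative, but it leaves the most difficult step as an obstacle to be resolved, whereas the paper's argument is complete and short.
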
	 
\noindent We define the concept of strategy-proofness in the large in the proof of Theorem \ref{theorem:strategy-proof} in Appendix \ref{sec:appendix-proofs}. To prove this theorem, we leverage on a result from \cite{AzevedoBudish2019}, which establishes a connection between strategy-proofness in the large and envy-freeness. Students can be partitioned into groups based on their course priorities, where each student in a group is at the same level of priority in every course. The PMP mechanism is a \textit{semi-anonymous} mechanism, as, within groups, each student faces the same course prices and the same random lottery over budgets. As a student cannot prefer the allocation of any lower-budget student in the same group, the PMP mechanism is \textit{envy-free but for tie-breaking}. Our result then follows from \cite{AzevedoBudish2019}, which shows that any semi-anonymous mechanism that is envy-free but for tie-breaking is strategy-proof in the large.\footnote{One can also consider a continuum replication of a course allocation problem with course priorities. By leveraging on the price structure as in Theorem \ref{theorem:existence}, the steps of Theorem 4 of \cite{Budish2011} can be adapted to obtain the result. The alternative proof is available upon request.}

\label{Budish-connection-4}\label{DA-connection-2}We conclude this section by considering special cases discussed in previous literature. In the case of \textit{no priorities}, our mechanism reduces to the approximate competitive equilibrium from equal incomes mechanism of \cite{Budish2011}. On the other hand, courses may have \textit{strict priorities} over students. In this case, we have a two-sided matching problem. As students can view sets of courses as complements, no stable matching is guaranteed to exist. Still, Theorem \ref{theorem:existence} confirms a Pseudo-Market Equilibrium with Priorities exists with a small error. So, our mechanism does not necessarily coincide with any stable matching mechanism. 

If students have \textit{strict preferences} over individual courses as well, the student-proposing deferred acceptance mechanism results in the student-optimal stable matching \citep[see][]{roth1984stability}, which can be supported with a Pseudo-Market Equilibrium with Priorities where the market clears exactly, the cutoff level of priority is at the lowest-priority student to receive a seat, and the price of each course is zero at the cutoff. The set of equilibria with zero market-clearing error exactly coincides with the set of stable matchings.\footnote{See \cite{miralles2021foundations} for a related result showing that every efficient assignment can be decentralized through prices in random allocation settings.} In large markets, we expect the set of stable matchings to be small \citep[see][]{kojima2009incentives} and, hence, PMP outcomes should not differ much from each other. 

When each student can only enroll in \textit{one course}, our setting reduces to the unit demand variation of the problem, known in the literature as school choice. The standard mechanisms in school choice are ordinal \citep[][]{abdulkadirouglu2003school} or elicit only restricted information about cardinal preferences from agents \citep[][]{abdulkadirouglu2015expanding}. With weak course priorities, the deferred acceptance algorithm may not return a student-optimal stable matching \citep[][]{erdil2008}. At the same time, any stable matching that is priority-constrained efficient, student-optimal or not, can be supported by a Pseudo-Market Equilibrium with Priorities with zero market-clearing error.

Allowing random allocations, \cite{he2018pseudo} show how to extend the pseudo-market approach of \cite{hylland1979efficient} to school choice settings. Their pricing structure ensures that their pseudo-market equilibrium allocation is fair. We adopt a similar pricing structure but analyze deterministic allocations rather than random allocations. While deterministic allocations are preferable from a practical market design point of view, they are associated with additional complications in equilibrium existence.\label{HMPY-connection}

\section{Simulations}
\label{sec:simulations}

\vspace{-1mm}
\par Next, we analyze the performance of the Pseudo-Market with Priorities mechanism with student preferences estimated using course allocation data from a private institution. We compare the Pseudo-Market with Priorities mechanism with a version of the Random Serial Dictatorship with course reserves, the mechanism used for course allocation at many U.S. universities, and Deferred Acceptance with single and multiple tie-breakings, mechanisms actively used in the allocation of students to public schools.

\subsection{Student and Course Data}
\label{subsec:course-student-data}
\begin{table}[t!]
	\begin{center}
		\refstepcounter{tab}\label{tab:students-courses-by-college}
		\caption{The number of students and courses in each college.}
		\begin{tabular}{cccccccc}
			\hline
			\hline\\[-2mm]
			\makecell[lc]{College}& A& B & C & D & E & F & G\\[2mm]
			\hline\\[-2mm]
			\makecell[lc]{\# of students}& 853 & 1642 & 259 & 1274 & 745 & 741 & 509 \\[2mm]
			\makecell[lc]{\# of courses}& 180 & 84 & 12 & 269 & 88 & 84 & 39 \\[2mm]
			\hline
			\hline
		\end{tabular}
	\end{center}
	
	\vspace{1.7mm}
	\begin{spacing}{0.8}
	{\footnotesize \textit{Notes:} We use capital letters to denote college names. There are $1565$, $1611$, $1422$, and $1425$ students in the first, second, third, and fourth year of study, respectively.}
	\end{spacing}
\end{table}

\label{college-department-course-start}

We utilize data from the Spring 2018 semester at a private institution in the mid-Atlantic region, covering $6\,023$ students across four years of study and $756$ courses across seven colleges (see Table~\ref{tab:students-courses-by-college}). Colleges are the largest constituent units of the university and typically contain several departments. For example, the Department of Economics is housed in the College of Liberal Arts at a typical U.S. university. Colleges differ in the size of their student population and the number and type of courses they offer. Some colleges serve the whole university, whereas others mainly offer courses only for their students.\footnote{We exclude graduate and exchange students. Fourth- and fifth-year students are treated as one group.}

\begin{table}[b!]
	\begin{center}
		\footnotesize
		\refstepcounter{tab}\label{tab:binding-capacity}
		
		\vspace{1mm}
		\caption{The percentage of courses at or above maximum capacity.}
		\begin{tabular}{p{0.35\textwidth}p{0.07\textwidth}p{0.07\textwidth}p{0.07\textwidth}p{0.07\textwidth}p{0.07\textwidth}p{0.07\textwidth}}
			\hline
			\hline\\[-1ex]
			\makecell[lc]{Seats taken above capacity}& $\geq 0$& $\geq 1$ & $\geq 2$ & $\geq 3$ & $\geq 4$ & $\geq 5$\\[2ex]
			\hline\\[-1ex]
			\makecell[lc]{\% of courses}& 11.2 & 7.3 & 4.1 & 3.3 & 2.5 & 1.9 \\[2ex]
			\hline
			\hline
		\end{tabular}
	\end{center}
	
	\vspace{1mm}
	\begin{spacing}{0.8}
	{\footnotesize \textit{Notes:} Maximum capacity is a soft constraint on enrollment for many courses.}
	\end{spacing}
\end{table}

\vspace{1mm}
\noindent {\bf Course enrollments.} The university allows courses to be over-enrolled. Table \ref{tab:binding-capacity} shows that $11.2\%$ of courses meet or exceed their listed capacity, with $7.3\%$ of courses exceeding it. The listed maximum capacity in over-enrolled courses is intended to ensure proper class dynamics and balanced course sections. This fits our treatment of capacity constraints in the Pseudo-Market with Priorities mechanism, which allows over-enrollment through the market-clearing error. We set each course's capacity equal to the maximum of the capacity observed in the data and the true enrollment.\label{college-department-course-end}

\label{course-reservation-start}
\vspace{3mm}
\label{seat-reservations-new-1}\noindent {\bf Course seat reservations}. In practice, the course registration process follows a Random Serial Dictatorship with course reserves, where students select courses in order of seniority (with ties broken randomly) and courses make seat reservations for students in certain departments and years of study. Seat reservations allow courses to express priorities for students, ensuring that students in specific groups have access to a fixed number of a course's seats.\footnote{See \cite{CelebiFlynn2024} for a framework analyzing both reservations and priorities.} Table \ref{tab:course-data} presents quantiles of the distributions of course capacities, enrollment, and total number of reserved seats from the university data. $33\,455$ course seats were available across courses, $23\,369$ seats were occupied, and $13\,922$ were reserved at the start of course registration.

\begin{table}[t!]
	\begin{center}
		\footnotesize
		\refstepcounter{tab}\label{tab:course-data}
		\caption{Course capacities, enrollment, and reservations.}
		
		\vspace{1mm}
		\begin{tabular}{cccccc}
			\hline
			\hline\\[-1ex]
			\makecell[lc]{Quantile}& 0.10 & 0.25 & 0.50 & 0.75 & 0.90\\[1ex]
			\hline\\[-1ex]
			\makecell[lc]{Maximum Capacity}& 8 & 15 & 25 & 50 & 98 \\[1ex]
			\makecell[lc]{Actual Enrollment}& 3 & 7 & 15 & 35 & 72\\[1ex]
			\makecell[lc]{\# of Reserved Seats}& 0 & 0 & 3 & 20 & 53\\[1ex]
			\hline
			\hline
		\end{tabular}
	\end{center}
	
	\vspace{1mm}
	\begin{spacing}{0.8}
		{\footnotesize \textit{Notes:} This table presents some quantiles of the distributions of course capacities, enrollment, and number of reserved seats from the university data. Courses with zero enrollment are dropped from the data.}
	\end{spacing}
\end{table}

Table \ref{tab:course-reservations} provides some examples of course seat reservations. For instance, Microeconomics reserves $25$ seats for Department $1$ and Department $2$ students in Year $1$. In the data, it is possible that these reservations are done independently by the two departments (e.g., among Year $1$ students, $15$ seats are reserved for Department $1$ students and $10$ seats reserved for Department $2$). Whenever this is the case, we combine these reservations into one by uniting the department names and summing the number of reserved seats. After this simplification, each course has at most one reservation for each year of study and one for ``all years of study''.

\newcolumntype{C}[1]{>{\centering\arraybackslash}p{#1}} 
\begin{table}[t!]
	\begin{center}
		
		\footnotesize
		\refstepcounter{tab}\label{tab:course-reservations}
		\caption{Examples of course seat reservations.}
		\begin{tabular}{m{4cm}C{2.5cm}C{2.5cm}C{3.5cm}}
			\hline
			\hline\\[-1.5ex]
			\makecell[lc]{Course}& Departments & Year of Study & Total Reserved Seats\\[1ex]
			\hline\\[-1.5ex]
			\makecell[lc]{Microeconomics}& {Dept 1, Dept 2} & {Year 1} & 25\\[1ex]
			\hline\\[-1.5ex]
			\makecell[lc]{Real Analysis}& {Dept 2} & {Year 3, Year 4} & 50\\[1ex]
			\hline\\[-1.5ex]
			\makecell[lc]{Machine Learning}& {All} & {Year 4} & 10\\[1ex]
			\hline\\[-1.5ex]
			\makecell[lc]{Communication}& {Dept 3} & {All} & 5\\[1ex]
			\hline
			\hline
		\end{tabular}
	\end{center}
	
	\vspace{1mm}
	\begin{spacing}{0.8}
		{\footnotesize \textit{Notes:} Each row of this table corresponds to a seat reservation for a different course. ``All'' denotes that any student, regardless of department/year of study, is eligible for a reserved seat.}
	\end{spacing}
\end{table}
\label{course-reservation-mid}

\vspace{0mm}
Using the data on seat reservations, we create a priority structure that reflects the university's intentions. The registration process favors more senior students by giving them earlier time slots to choose courses, resembling \textit{year-specific priorities}. In addition, courses reserve seats for students in specific departments and years of study, resembling \textit{department-specific priorities}.\footnote{\label{footnote-editor-old-1}Universities that follow these priorities include Princeton, Johns Hopkins, Duke, Vanderbilt, Washington University in St. Louis, Columbia, Notre Dame, and Carnegie Mellon. However, each university operates slightly differently. Within a year of study, many universities (e.g., Duke) randomly stagger students across time slots rather than having all register at once. In addition, some universities (e.g., Vanderbilt) reserve seats for students in specific majors or degree programs rather than departments or years of study.}
We combine these notions into a \textit{hybrid priority structure}, with year-specific priorities taking precedence over department-specific priorities.\label{hybrid-priority} Example \ref{example:priority-structures} provides an illustration for Real Analysis in Table \ref{tab:course-reservations}. 
As no seats are reserved for 1st- and 2nd-year students, there are no students at the fourth and second levels of priority.

\vspace{-1mm}	
\begin{example}
\label{example:priority-structures}
	The course seat reservations for Real Analysis in Table \ref{tab:course-reservations} results in the following hybrid priority structure for the course:

\vspace{-2mm}
\begin{itemize}[itemsep=-2mm]
	\item $r_{s, c} = 8$: 4th-year students in Dept 2
	\item $r_{s, c} = 7$: 4th-year students in all departments except Dept 2
	\item $r_{s, c} = 6$: 3rd-year students in Dept 2
	\item $r_{s, c} = 5$: 3rd-year students in all departments except Dept 2
	\item $r_{s, c} = 4$: $\varnothing$
	\item $r_{s, c} = 3$: 2nd-year students in all departments
	\item $r_{s, c} = 2$: $\varnothing$
	\item $r_{s, c} = 1$: 1st-year students in all departments
\end{itemize}

\vspace{-6mm}
\end{example}
\label{course-reservation-end}

\smallskip
\label{model-start}

\noindent{\bf Student utilities}. The data does not contain information on student preferences over course schedules. To recover this information, we use a parametric form for student utilities and the method of simulated moments to calibrate its parameters \citep[see, e.g.,][]{agarwal2015empirical}. We assume that each student can take up to $k=5$ courses and that preferences are additive over courses.\footnote{Less than 3\% of the student body is enrolled in more than five courses in the data.} Student $s$'s utility from taking course $c$, $u_{sc}$, takes the form

\vspace{-5mm}
\begin{equation}
	\label{eq:student-utility}
	u_{sc} = \textcolor{black}{Y'_s \theta} + Y'_s\Gamma H_c + \textcolor{black}{z_c} + \varepsilon_{sc}.
\end{equation}

The parametric form in equation \eqref{eq:student-utility} contains three non-random components. The first, \textcolor{black}{$Y'_s\theta$}, is a \textit{horizontal component} determined by the student's characteristics. $Y_s$ is a binary vector that indicates student $s$'s college $a\in {\cal A} = \{A, ..., G\}$ and year of study $y\in {\cal Y}= \{1, ..., 4\}$. So, $Y_s'\theta=\theta_{ay}$, where vector $\theta\in \mathbb{R}^{|{\cal A}||{\cal Y}|}$ has $|{\cal A}||{\cal Y}|=28$ elements to calibrate. The larger $\theta_{ay}$ is, the more courses student $s$ wants to enroll in. This accounts for variability in the number of courses taken by students from different colleges and years of study (see Table \ref{tab:students-n-courses-taken} in Appendix \ref{sec:appendix-student-utility-estimation}). 

The second component, $Y'_s\Gamma H_c$, is an \textit{interaction term} that determines the heterogeneity of students' preferences over courses. $H_c$ is a binary vector of observed course characteristics indicating course $c$'s college. We restrict matrix $\Gamma$ to have only $49$ independent parameters $\gamma_{aa'}$, where $a$ is student $s$'s college and $a'$ is course $c$'s college. This accounts for patterns of course enrollment across colleges (see Table \ref{tab:percent-courses-outside-college} in Appendix \ref{sec:appendix-student-utility-estimation}). 

The third component, $\textcolor{black}{z_c}$, is a \textit{vertical component} that measures the popularity or quality of course $c$. The vector $z$ has length $756$, with one entry for each course. The last term $\varepsilon_{sc}$ is a random utility component that is assumed to have a normal distribution $\varepsilon_{sc}\sim N(0,\sigma^2)$ with unobserved parameter $\sigma$. As normalizations, the value of the outside option from not taking a course is $u_{s0}=0$ and the standard deviation of the noise parameter is $\sigma=1$.

We cannot jointly identify the parameters $\theta$, $\Gamma$, and $z$ without additional normalization as they enter in equation \eqref{eq:student-utility} in an additive way. Hence, we normalize the average horizontal component across students in the same college to $\overline{\theta}_{a}=0$ and set the interaction terms $\gamma_{aa}$ to $0$ for each $a\in {\cal A}$. These normalizations allow for an interpretation of $z_c$ as the average student utility from taking course $c$ among students in the same college as course $c$. \label{gamma}The value of the parameter $\gamma_{aa'},a\neq a'$ is the average increase in student utility from taking course $c'$ in college $a'$ over taking course $c$ in student's own college $a$ when both courses have the same values of vertical components (i.e., $z_{c}=z_{c'}$). Overall, we have $819$ independent parameters $\theta$, $\gamma$, and $z$ to calibrate.

\label{student-choice-sets}Last, we assume that each student's choice set is limited to $80$ courses. This assumption is motivated by the observation that any given student typically only considers taking a subset of all offered courses in a given term and is made for computational purposes. We randomly draw these $80$ courses among those taken by at least one student from the same college and year. The probability that a student in college $a$ and year $y$ draws course $c$ is proportional to the share of seats taken in course $c$ by students in college $a$ and year $y$, among all seats taken by such students.\footnote{We expand some students' choice sets to make sure enough reserve-eligible students are interested in each course. This ensures that our calibration process runs smoothly. We elaborate on this in Appendix \ref{sec:appendix-student-utility-estimation}.} Not all $80$ utility entries are necessarily positive.

We use the method of simulated moments to identify the model's parameters. A description of the calibration procedure, interpretations of the calibrated parameter values, and model fit results are left to Appendix \ref{sec:appendix-student-utility-estimation}. Here, we discuss two obstacles in the calibration process. \label{obstacle}First, in practice, course reservations do not remain constant during the registration process. After the first week of registration, demand for courses becomes more or less clear. To ensure complete course enrollment, departments start relaxing course reservations by admitting students from waiting lists. While there are best practices for handling waiting lists, this phase of enrollment is done solely at each department's discretion. As a result, a course's final enrollment can violate its seat reservations set at the beginning of the process. We use \textit{Hall's Marriage Theorem} to identify course reserve violations and minimally adjust the number of reserved seats. The seat reservations are lowered in 170 of the 756 courses to ensure the number of reserved seats is consistent with the allocation observed in the data.

The second obstacle concerned calibrating the vertical components $z_c$ for courses at maximum capacity. For such courses, no value of $z_c$ can lead to a higher enrollment than observed in the data. We identified the vertical components for courses at maximum capacity using data on how quickly seats in those courses were occupied during the registration process. 
\label{model-end}  

\subsection{Course Allocation Mechanisms}
\label{subsec:mechanisms}

We analyze the four mechanisms described below. We assume students report their utilities truthfully, as each mechanism is strategy-proof or strategy-proof in the large.

\vspace{2mm}
\noindent {\bf The Pseudo-Market with Priorities (PMP) mechanism.} Students are randomly assigned evenly spaced budgets between $1$ and $1+\beta$ (with $\beta=\frac{1}{k-1} = 0.25$) and placed at one of $R=8$ levels of priority in each course. As in the proof of Theorem \ref{theorem:existence}, we parameterize prices with $t\in [0, R\overline{b}]^M$, where $\overline{b}=1.251 > 1+\beta$. For each $t \in [0, R\overline{b}]^M$, course $c \in {\cal C}$, and priority level $r\in {\{1, ..., R\}}$, we define priority-specific prices as $p_{c,r}(t)=\max(t_{c}-(r-1)\overline{b},0)$. Then, we look for an equilibrium in the lower-dimensional $t$-space. 

The search process has two phases. Phase I starts with an educated guess for course prices based on students' utility-maximizing course schedules. The algorithm searches in the lower-dimensional space of $t$ parameters for a market-clearing error smaller than the theoretical bound $\alpha = \sqrt{kM/2}\approx 43.5$, iteratively computing each student's utility-maximizing schedule and adjusting prices proportional to the number of over and under-enrolled seats. We move to Phase II if the market-clearing error fails to improve within a certain number of iterations or if the improvement is smaller than $1\%$ after the theoretical bound has been reached. Phase II takes the Phase I prices and works to reduce excess demand, ensuring that no course is over-enrolled by more than $k-1=4$ seats and the distribution of over-enrolled courses is smaller than the one observed in the data (see Table \ref{tab:binding-capacity}).\footnote{\label{Lin-proof}Doing so guarantees the allocation has a market-clearing error below the bound in Theorem \ref{theorem:existence} and meets the good-by-good bound of \cite{lin2022allocation}. A proof on the existence of an equilibirum with the good-by-good bound in our setting is available by request.} We return back to Phase I if an allocation with a market-clearing error greater than $\alpha$ is obtained in Phase II.

\vspace{2mm}
\label{definitions-mechanisms-DA}
\noindent {\bf The Deferred Acceptance mechanism with single and multiple tie-breakings.} These mechanisms use the student-proposing deferred acceptance algorithm to allocate course seats and differ in how they break ties between students at the same level of priority in a course. The first variant uses a single tie-breaking order across all courses. To break ties, we use the budgets assigned in the PMP mechanism, placing students with larger budgets earlier. Without any course seat reservations, this mechanism is equivalent to a random serial dictatorship (with students ordered by seniority and ties broken randomly) and can lead to significant envy between students. The second variant uses multiple, course-specific tie-breakings. We refer to these mechanisms as DA-STB and DA-MTB, respectively. Both have received a thorough analysis in the many-to-one matching literature, but their many-to-many counterparts are underexplored \citep[see, e.g.,][]{erdil2008,erdil2017two}.

\vspace{2mm}
\noindent{\bf The Random Serial Dictatorship (RSD) with optimal course reserves.}  This mechanism orders students by seniority, breaking ties using the same order as in PMP and DA-STB, and assigns each student the (up to) $k$ courses available that offer her the highest utility. Course seat reservations are treated as described in Section \ref{subsec:course-student-data}. If a student is eligible for one of a course's seat reservations, she can enroll in a reserved or regular seat. Otherwise, the student can only be enrolled in a regular seat. Reserved seats must be occupied before a reserve-eligible student occupies a regular seat.

\label{adjusted-v-optimal-lab-1}Using the number of reserved seats observed in the data leads to significant under-performance of RSD with course reserves (\label{rsd-adjusted-reference}see Appendix \ref{subsec:appendix-additional-sumulations-other-mechanisms}). In practice, the university's registration system has two phases. The first is a centralized admission process, which uses a version of the random serial dictatorship with course reserves. The second is a decentralized stage, where departments relax reserves to fill unoccupied seats with waitlisted students. This second phase is not well-structured. Importantly, the university administration sets course reserves more generously in the first phase than it would if the second phase did not exist. This guarantees that prioritized students have access to course seats but leaves many unfilled reserved seats to be assigned in the second phase. Hence, comparing the performance of the other three mechanisms to the performance of RSD with these reserved seats would not be a fair exercise. \label{adjusted-v-optimal-lab-2}Instead, we estimate the \textit{optimal reserved seats} by generating $100$ environments with random student choice sets and utilities, setting a random tie-breaking order, and running the DA-STB mechanism. We take the average number of reserve-eligible students assigned to each course across all environments (rounded to the nearest integer) as the optimal number of reserved seats. We call the result of the RSD mechanism operated with these estimated reserved seats the \textit{RSD with optimal course reserves}.\footnote{The term ``optimal'' is not used here in the sense of maximizing some objective, but rather to denote a reasonable and practical specification of reserve prices within the current setting.}

\vspace{0mm}
\label{paragraph-for-R3-comment-7}
In the next section, we present simulation results comparing the PMP, DA-STB, DA-MTB, and RSD mechanisms across five measures. These include \textit{two measures of student satisfaction}: (i) each student's preferred mechanism and (ii) mean student utility; and \textit{three measures of allocation fairness}: (iii) the standard deviation of students' utilities, (iv) the amount of envy (toward students with weakly lower priority), and (v) priority violations (students desiring a course being taken by a student of strictly lower priority). From theory alone, we know that no student can experience envy greater than a single course in the PMP mechanism (Theorem \ref{theorem:fairness}), and that the PMP, DA-STB, and DA-MTB mechanisms all prevent priority violations. Otherwise, it is not clear how the four mechanisms will perform relative to one another. Our simulations provide comparisons for the remaining measures and allow us to see if the envy and priority violations have important magnitudes.

\subsection{Simulation results}
\label{subsec:simulation-results}

We evaluate the performance of the Pseudo-Market with Priorities, Random Serial Dictatorship with optimal course reserves, and Deferred Acceptance with single and multiple tie-breakings mechanisms across 100 simulation runs.\footnote{Simulations were run on Mathematica with a server with 12 Cores of CPU (2.6 GHz Intel Xeon Gold 6126 CPU), 128 GBs of RAM, 70 GBs of SSD hard drive, and a Windows Server 2019 Standard operating system. It takes about forty-five minutes to calculate the outcomes of all four mechanisms for each run.} Each run corresponds to a random draw of student choice sets, utilities, a single tie-breaking order kept the same across PMP, DA-STB, and RSD, and multiple course-specific tie-breakings for DA-MTB. 

\begin{figure}[t!]
	\begin{center}
	
	\includegraphics[width=0.75\textwidth]{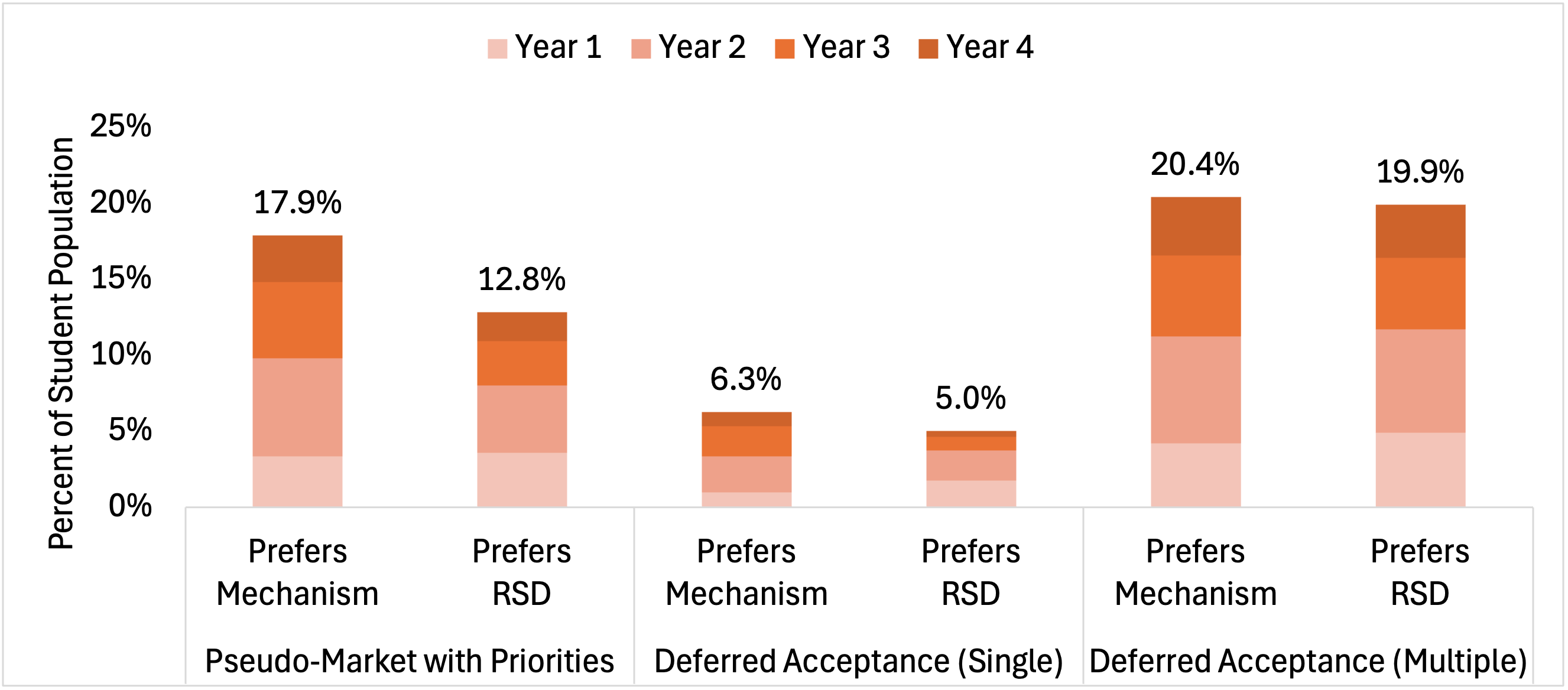}
	\caption{Each student's preferred mechanism by year of study.}
	\label{fig:preferred-mechanism}
	
	\end{center}
	
	  {\footnotesize \textit{Notes:} This figure reports the percent of students who strictly prefer each given mechanism to the  \textit{Random Serial Dictatorship with optimal course reserves} benchmark and the percent of students who strictly prefer the benchmark. Results are averages across $100$ runs with different random component draws.}         
	
\end{figure}

\vspace{2mm}
\noindent {\bf Student satisfaction.} We take the \emph{RSD with optimal course reserves} as a benchmark. Figure \ref{fig:preferred-mechanism} displays results on the percent of students who strictly prefer each given mechanism to RSD and the percent of students who strictly prefer RSD. Bars represent shares of the total student population and are broken down by year of study. The large share of indifferences across mechanisms can be explained by students taking under-enrolled courses, as only $11.2\%$ of courses are at or above capacity in the data. Around $17.9\%$ of students prefer PMP to RSD, as opposed to $12.8\%$ of students preferring RSD to PMP. As a result, over $5\%$ more of the population - or $300$ more students - strictly prefer the PMP mechanism. A greater number of first-year students strictly prefer the benchmark to each alternative mechanism. The weaker result for Year 1 students can be attributed to the gains for more senior students, who, on average, are assigned $135$ more course seats under PMP than RSD.

In contrast, $88.7\%$ of students are indifferent between the DA-STB and RSD mechanisms. The vast majority of students being indifferent between the two mechanisms can be explained by our using DA-STB to select the optimal course reserves for the RSD mechanism. Moreover, the outcomes of RSD and DA-STB coincide in the absence of course reserves, and if both mechanisms employ the same tie-breaking rule. DA-MTB results in only $59.7\%$ of students with indifferences, with $20.4\%$ of students strictly preferring DA-MTB and $19.9\%$ strictly preferring RSD. These results are a direct consequence of DA-MTB breaking ties on a per-course basis, producing both winners and losers among students.
\label{da-direct-comparison-label}Also, direct calculations show that more students strictly prefer the PMP mechanism to each DA mechanism in each year of study.\footnote{The results of the direct comparison of DA-STB and DA-MTB with PMP are available upon request.}

\begin{figure}[t!]
	\begin{center}
	
	\includegraphics[width=0.75\textwidth]{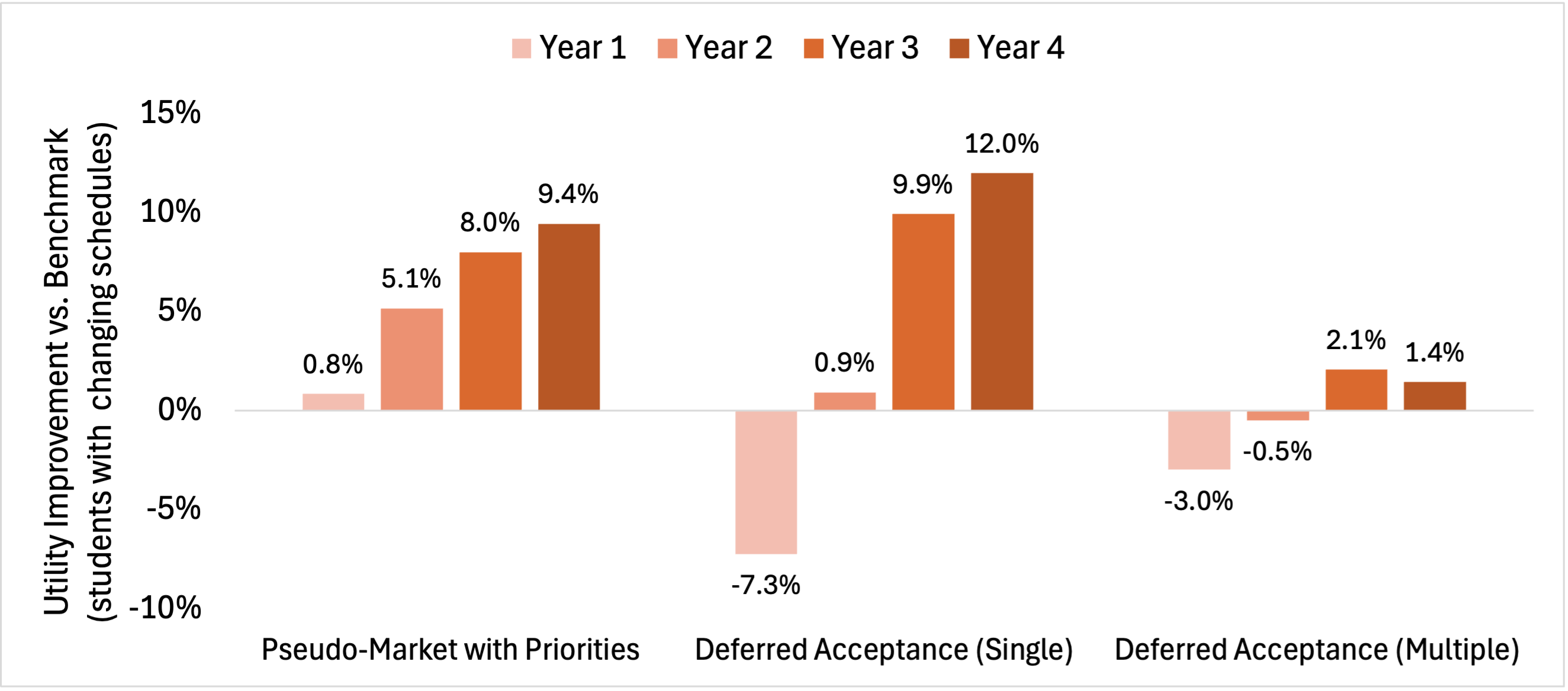}
	\caption{Improvements in utility among students with changing schedules.}
	\label{fig:utility-all-changing-students}
	
	\end{center}
	
	  {\footnotesize \textit{Notes:} This figure reports the percent improvement in mean utility among students with changing schedules for each given mechanism over the \textit{Random Serial Dictatorship with optimal course reserves} benchmark. On average, the PMP mechanism produces a different schedule for 416, 658, 478, and 298 students in Years 1, 2, 3, and 4, respectively; the DA-STB mechanism does so for 164, 263, 174, and 78 students; and the DA-MTB mechanism does so for 547, 836, 603, and 444 students. Results are averages across $100$ runs with different random component draws.}         
	
\end{figure}

Figure \ref{fig:utility-all-changing-students} presents changes in cardinal utility, displaying each mechanism's percent improvement in mean utility over the RSD mechanism among students with changing schedules. The PMP mechanism improves mean utility for all years of study, with larger gains for students in later years of study. In both DA mechanisms, the percent improvement in mean utility is negative for Year 1 students and close to zero for Year 2 students. \label{da-stb-best-label}Among students with changing schedules, the DA-STB mechanism provides the largest improvement in mean utility for Years 3 and 4. However, the DA-STB mechanism changes the schedules of only 174 and 78 students for Years 3 and 4, respectively. In contrast, the PMP mechanism changes the schedules for 479 and 298 students for Years 3 and 4. DA-MTB delivers almost no improvement in mean utilities for each cohort. \label{da-direct-comparison-2-label}For each year of study, across all students, the PMP mechanism has the highest mean utility of the four mechanisms.

\vspace{2mm}
\noindent {\bf Allocation fairness.} We compare allocation fairness across mechanisms using three metrics. First, Figure \ref{fig:st-dev-PMP-DA-DAm} shows that, within each year of study, the PMP and DA-MTB mechanisms decrease the standard deviation of students' utilities relative to the RSD benchmark. The same result holds for all but fourth-year students in the DA-STB mechanism. For each year of study, this finding is the strongest for the PMP mechanism. For each mechanism, the largest reduction occurs for second-year students.

\begin{figure}[t!]
	\begin{center}
	
	\includegraphics[width=0.75\textwidth]{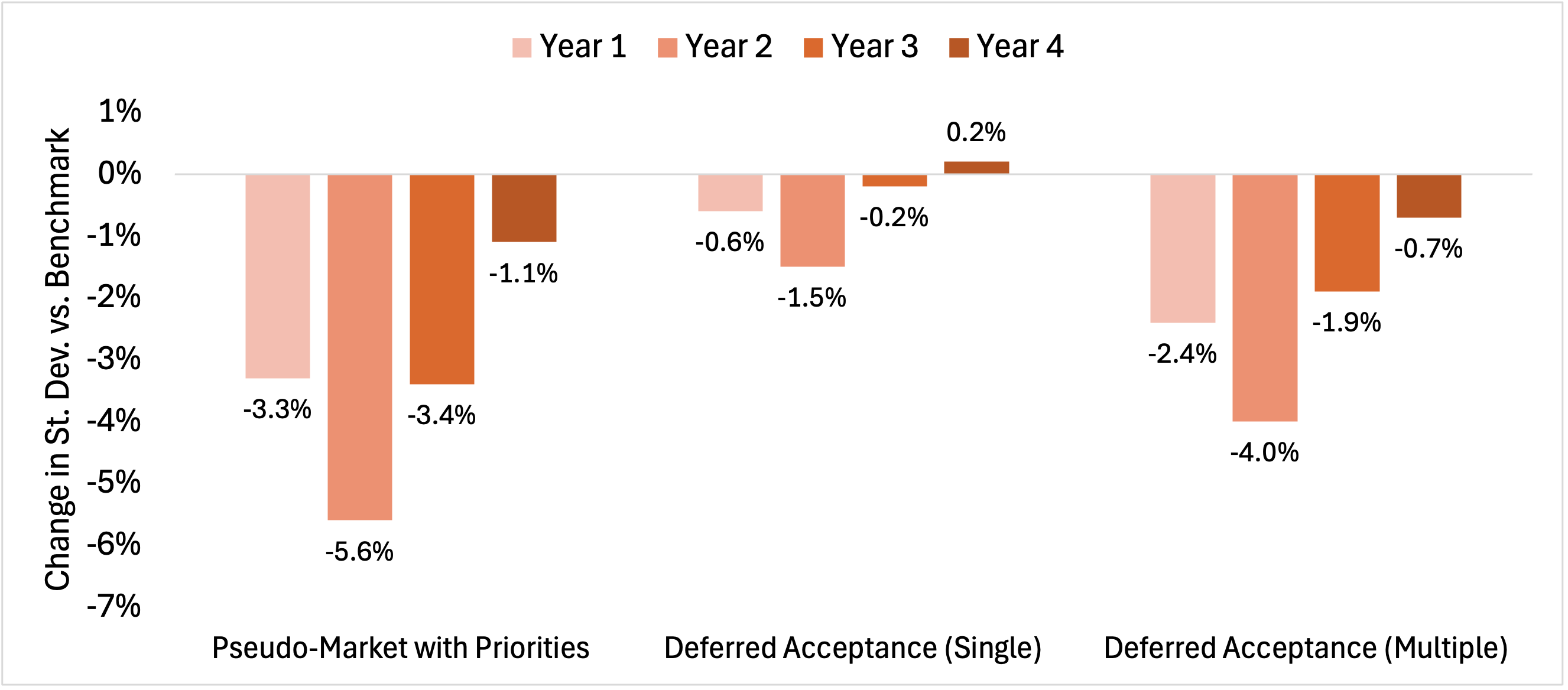}
	\caption{The change in the standard deviation of students' utilities.}
	\label{fig:st-dev-PMP-DA-DAm}
	
	\end{center}
	
	  {\footnotesize \textit{Notes:} This figure reports the percent change in the standard deviation of students' utilities for each given mechanism compared to the \textit{Random Serial Dictatorship with optimal course reserves} benchmark. Results are averages across $100$ runs with different random component draws.}     
	
\end{figure}

Second, Table \ref{tab:student-envy} reports results on schedule envy toward students of weakly lower priority in each course. The first column displays the percent of students without envy. The values in each subsequent column are obtained by removing one course from each envied schedules and checking whether the student still experiences envy. In line with our theoretical results for the PMP mechanism (see Theorem \ref{theorem:fairness}), no student has envy greater than a single course toward a student of weakly lower priority. As Table \ref{tab:student-envy} shows, this property is not satisfied by the other three mechanisms. About $9.1\%$ of students experience envy in the PMP mechanism, substantially less than in the RSD, DA-STB, and DA-MTB mechanisms ($17.2\%$, $14.1\%$, and $13.3\%$, respectively). Relative to the RSD benchmark, $8.1\%$ less of the total population (or almost 500 students) experience any envy in the PMP mechanism.

\begin{table}[t!]
    \begin{center}
        \footnotesize
        
        \refstepcounter{tab}\label{tab:student-envy}
        \caption{The percentage of students who experience schedule envy.}
        
        \vspace{2mm}
        \begin{tabular}{p{0.265\textwidth}>{\centering}m{0.09\textwidth}>{\centering}m{0.09\textwidth}>{\centering}m{0.085\textwidth}>{\centering}m{0.10\textwidth}>{\centering}m{0.11\textwidth}>{\centering\arraybackslash}m{0.10\textwidth}}
            \hline
            \hline\\[-2mm]
            & 0 courses & 1 course & 2 courses & 3 courses & 4 courses & 5 courses\\[2mm]
            \hline\\[-2mm]
            \makecell[lc]{Pseudo-Market\\ with Priorities} & 90.9  (0.6) & 9.1  (0.6) & 0  (0.0) & 0  (0.0) & 0  (0.0) & 0  (0.0) \\[1.5mm]
            \hline\\[-2mm]
            \makecell[lc]{Random Serial Dictatorship\\ with optimal course reserves} & 82.8  (1.2) & 13.9  (0.9) & 2.6  (0.3) & 0.6  (0.2) & 0.06  (0.05) & 0.002  (0.0) \\[1.5mm]
            \hline\\[-2mm]
            \makecell[lc]{Deferred Acceptance\\ with single tie-breaking} & 85.9  (0.6) & 11.6  (0.5) & 2.1  (0.2)& 0.4  (0.2)& 0.03  (0.03)& 0.001  (0.0)\\[2mm]
            \hline\\[-2mm]
            \makecell[lc]{Deferred Acceptance\\ with multiple tie-breakings} & 86.7   (0.6)& 12.6  (0.6)& 0.7  (0.2)& 0.02  (0.02)& 0  (0.0)& 0  (0.0)\\[2mm]
            \hline
            \hline
        \end{tabular}
    \end{center}
    
    \vspace{2mm}
    {\footnotesize \textit{Notes:} This table reports the percentage of students who prefer the schedule of a student with weakly lower priority in each course in each of the four mechanisms. The first column includes students who experience no envy. The other columns display students who experience schedule envy bounded by $1,...,5$ courses. Results are averages across 100 runs with different random component draws.}    
\end{table}

\label{paragraph:priority-violations}
Third, we examine fairness between levels of priority. The PMP mechanism is approximately stable, guaranteeing that no student can improve her schedule by taking a seat assigned to a student with strictly lower priority in the course (see Theorem \ref{theorem:stability}). As students' preferences are additive over courses, the DA-STB and DA-MTB mechanisms also ensure that no priorities are violated. However, in the RSD mechanism, $16.2\%$ of students would benefit from taking a seat assigned to a student with strictly lower priority. These instances occur when a course reserves too few of its seats, meaning a reserve-eligible student selecting courses late in the RSD mechanism loses out on a seat when a reserve-ineligible student selecting earlier obtains one.

\label{summary:empirical analysis-start}\label{paragraph-for-R3-comment-7-simulations-summary}Our simulations present important comparisons that would be difficult to derive from theory. While there is some variation across years of study, the PMP mechanism generally outperforms the other three mechanisms in  each measure of student satisfaction and allocation fairness.\footnote{In principle, it may be delivering better outcomes due to the market-clearing error, which allows some courses to be over-enrolled. The average market-clearing error across outcomes is $\approx 21.4$, less than half of the worst-case bound of $\approx 43.5$. In line with the bound in \cite{nguyenvohra2022}, no course is over-enrolled by more than $k - 1  = 4$ seats. About $3\%$ of courses are also over-enrolled, as opposed to $7.3\%$ in the data.} The comparison of DA-STB and DA-MTB in many-to-many settings is not readily available in the literature. Hopefully, these results will direct further theoretical investigations of these mechanisms. Appendix \ref{sec:appendix-additional-simulations-results} contains additional simulation results, including a description of the equilibrium prices in the PMP mechanism (Section \ref{subsec:appendix-pmp-prices}), a comparison with six alternative mechanisms (Section \ref{subsec:appendix-additional-sumulations-other-mechanisms}), results under an alternative priority structure (Section \ref{subsec:appendix-additional-sumulations-dept-first}), and further robustness checks (Section \ref{subsec:appendix-additional-simulations-comparative-statics}).

\label{summary:empirical analysis-end}

\newpage
\label{paragraph:limitations}
\noindent \textbf{Limitations.} Our student utility model has limitations that a more detailed empirical analysis could address. Notably, assuming additive utilities may be too restrictive, as it fails to consider some complementarities and substitutabilities across courses. Courses covering similar content are substitutes, while a course requiring another as a corequisite (i.e., to be taken at the same time) are complements. Additive utilities allow us to compare the performance of the PMP mechanism with the DA-STB and DA-MTB mechanisms. Unlike these mechanisms, the PMP mechanism can be extended to more general preferences.

One step in expanding this analysis beyond additive utilities is to consider preferences over course sections rather than courses. Universities offer the same course at different times or with different instructors, restricting students from enrolling in two sections of the same course. In the data, many courses have several sections, with some much more popular than others. Since we aggregate enrollment and capacities across sections, if one of a course's sections is not at maximum capacity, the course cannot be either. $30\%$ of sections are at maximum capacity, as opposed to only $11.2\%$ of courses. Consequently, we expect the gains from the PMP mechanism presented earlier in this section to be understated. 

Alternatively, we could consider additional factors that influence preferences. Our utility model assumes preferences only depend on a student's college and year of study. Allowing utilities to depend on a student's department would capture more heterogeneity in preferences. Information about past enrollment would disentangle preferences over courses from constraints based on the course's prerequisites, pinning down student choice sets. Past enrollment, along with other characteristics (e.g., demographics), are not available in our data.
\label{paragraph:limitations-end}

\vspace{0mm}
\section{Discussion}
\label{sec:discussion}
\label{discussion-start}
\par This paper explores undergraduate course allocation as a many-to-many matching problem. We design a deterministic mechanism, the \textit{Pseudo-Market with Priorities} (PMP) mechanism, which extends the approximate competitive equilibrium from equal incomes mechanism of \cite{Budish2011} to accommodate course priorities and has the same worst-case bound on the market-clearing error. This mechanism satisfies approximate notions of stability, efficiency, fairness, and strategy-proofness. Our empirical findings exhibit that it outperforms an idealized version of the mechanism used in practice in student satisfaction and fairness. 

\newpage
\label{external-validity:start}
	The gains from introducing the PMP mechanism will generally depend on university-specific factors. When a large set of equally-prioritized students compete for a small set of seats across courses, the commonly-used RSD mechanism produces allocations with significant envy. Instead, the PMP mechanism bounds envy among these students. Hence, the fairness improvements from introducing our mechanism will be large in environments with many over-demanded courses. Similarly, if a university uses a coarser priority structure, there will be more equally-prioritized students, increasing the gains from introducing the PMP mechanism. In contrast, for universities that employ finer priority structures (e.g., prioritizing students based on units completed or grade point average), the PMP and RSD mechanisms will result in similar outcomes.
	
\label{external-validity:end}

\label{editor-4-start}
A university's practical implementation of the PMP mechanism will also play a role in its benefits. Our model assumes students can report preferences over all possible course schedules. In practice, reporting preferences requires a language simple enough for students to use and expressive enough to elicit students' actual preferences. 
The implementation of the A-CEEI mechanism in \cite{budish2017course} allows students to report cardinal values for individual courses and adjustments for pairs of courses. Changes to this language are necessary to make it accessible to undergraduate students. In particular, it should accommodate overlapping lecture times, prerequisite constraints, required courses, and credit limits, blocking students from receiving unacceptable schedules. The deployment of machine learning-based techniques may also improve the precision of elicited preferences \citep[][]{soumalias2023machine}.

\label{computation-paragraph-label}Finding an equilibrium given reported preferences is computationally intense. \cite{budish2017course} estimate that 20 times the time used at Wharton would be necessary to implement the A-CEEI mechanism for the full student body at Ohio State University. However, implementing the PMP mechanism may be less challenging than this estimate suggests. Since a course's price is zero or unaffordable at all but one of its priority levels, students face few courses with a positive but affordable price. In our simulations, with additive student utilities, the average timing of the PMP mechanism improves on the A-CEEI mechanism by a factor of two (see Appendix \ref{subsec:appendix-additional-sumulations-other-mechanisms}). Still, incorporating scheduling constraints and preference complementarities may be challenging and should be carefully addressed in practice.

\label{further-applications-label}Finally, we note that our analysis applies to many other matching problems. Two under-explored examples are assigning teaching assistants to courses and referees to soccer matches. Extending our results to these and other applications is left for future research.
\label{discussion-end}

\newpage
\appendix

\renewcommand{\theequation}{A\arabic{equation}} %
\setcounter{equation}{0}

\renewcommand{\thelemma}{A\arabic{lemma}}
\setcounter{lemma}{0}

\renewcommand{\thedefinition}{A\arabic{definition}}
\setcounter{definition}{0}

\renewcommand{\theproposition}{A\arabic{proposition}} %
\setcounter{proposition}{0}

\renewcommand{\thetheorem}{A\arabic{theorem}} %
\setcounter{theorem}{0}

\section{Appendix: Proofs}
\label{sec:appendix-proofs}

\noindent {\bf Proof of Theorem \ref{theorem:existence}.}  Consider a course allocation problem $({\cal S}, {\cal C}, Q, {\cal V}, \mathcal{R})$. We show that, for any $\beta > 0$, there exists a $(\sqrt{kM/2}, \beta)$-Pseudo-Market Equilibrium with Priorities.

Let $b = (b_1 , ..., b_N)$ be a budget vector that satisfies $1 \leq \min_s(b_s) \leq \max_s (b_s) \leq 1+ \beta$ for some $\beta>0$, and let $\overline{b} = 1 + \beta + \epsilon$ for a fixed $\epsilon > 0$. Consider the $M$-dimensional set ${\cal T} = [0, R\overline{b}]^M$, which conveniently parameterizes priority-specific prices and allows us to look for a competitive equilibrium in a lower dimensional space. In particular, for each parameter $t \in {\cal T}$, course $c \in {\cal C}$, and level of priority $r\in {\{1, ..., R\}}$, we define priority-specific prices as
	\begin{equation}
		p_{c,r}(t)=\max(t_{c}-(r-1)\overline{b},0). \label{eq:p(t)}
	\end{equation}
	With this definition, for each $t\in {\cal T}$ and $c\in {\cal C}$, there is a unique cutoff level of priority $r^*_c(t)\in {\{1, ..., R+1\}}$\footnote{Note that $r^*_c(t) = R+1$ is only achieved if $t = R\overline{b}$. The cutoff level $r^*_c$ can always be reduced to $R$.} such that for any $r\in \{1, ..., R\}$, $p_{c,r}(t)$ satisfies 
	\begin{equation}
		p_{c,r}(t)\in\begin{cases}
			\left\{ 0\right\}  & r>r^*_c(t)\\{}
			[0,\overline{b}) & r=r^{*}_c(t)\\
			[\overline{b},R\overline{b}] & r<r^{*}_c(t)
		\end{cases}.\label{eq:cutoff_p(t)}
	\end{equation}
	We will consider an auxiliary enlargement of this set, $\tilde{{\cal T}}= [-1, R\overline{b}+1]^M$ and define $p_{c,r}(\tilde{t})$ using equation (\ref{eq:p(t)}) for $\tilde{t}\in \tilde{\cal T}$. For each $s \in {\cal S}$, define the demand function $d_s: \tilde{{\cal T}} \rightarrow 2^{\cal C}$ by 
	\[
	d_s(\tilde{t})=\max_{\succsim_s}\left\{ x^{\prime}_s \subseteq {\cal C}:\sum_{c\in{\cal C}}x^{\prime}_{s, c}\max(\tilde{t}_{c}-(r_{s,c}-1)\overline{b},0)\leq b_{s}+\tau_{s, x'_s}\right\},
	\]
	where the $\tau_{s, x_s}$ are student- and schedule-specific taxes chosen to ensure that the demand is single-valued. Taxes are chosen to favor more-preferred bundles (i.e., if $x'_s \succ_s x_s$, then $\tau_{s, x'_s} > \tau_{s, x_s}$). For each course $c \in {\cal C}$, the excess demand function $z_c: \tilde{{\cal T}} \rightarrow \mathbb{Z}$ is defined by
	$$
	z_c(\tilde{t}) = \sum_{s\in {\cal S}} x^*_{s, c} - q_c,
	$$
	where $x^*_s = d_s(t^*)$ for all $s \in {\cal S}$. Excess demand is bounded, as $-N \leq z_c \leq N-1$ for all $c \in {\cal C}$. We also define a budget surface for each student $s\in{\cal S}$ and schedule $x_s\subseteq {\cal C}$ as
	
	\[
	H(s,x_s)=\left\{ \tilde{t} \in \tilde{{\cal T}}:\sum_{c\in{\cal C}}x_{s, c}\max(\tilde{t}_{c}-(r_{s,c}-1)\overline{b},0)=b_{s}+\tau_{s, x_s}\right\}.
	\]
	
	Unlike the case without priorities, the budget surface $H(s,x_s)$ may not be a hyperplane \cite[see][]{Budish2011}. \label{lemma:b1}Lemma \ref{lem:tau} in Appendix \ref{sec:theorem-1-omitted-details} shows that it is still possible to choose $b_{s}$ and $\tau_{s, x_s}$ such that at most $M$ budget constraints intersect for any $\tilde{t}\in\tilde{{\cal T}}$.

	Next, we define a truncation function \text{trunc}$:\tilde{{\cal T}} \rightarrow {\cal T}$, where for each $c\in {\cal C}$,
	\begin{equation*}
	\text{trunc}(\tilde{t})_c = \min\{R\overline{b}, \max\{0, \tilde{t}_c\}\}.
	\end{equation*}
	Also, for $\gamma \in (0, 1/N)$, we define a t\^{a}ttonnement price adjustment function $f: \tilde{{\cal T}} \rightarrow \tilde{{\cal T}}$ by
	\begin{equation*}
	f(\tilde{t}) = \text{trunc}(\tilde{t}) + \gamma z(\text{trunc}(\tilde{t})).
	\end{equation*}
	Suppose that $f$ has a fixed point $\tilde{t}^{*} = f(\tilde{t}^{*}),$ and denote its truncation by $t^*=\text{trunc}(\tilde{t}^{*})$. We show that the prices $\{p_{c, r}(t^*)\}_{c \in {\cal C}, {r \in \{1, ..., R\}}}$ defined by equation (\ref{eq:p(t)}), allocation $x^{*}_s = d_s(t^*)$, and budgets $b^{*}_s = b_s + \tau_{s,x^{*}_s}$ for all $s \in {\cal S}$ constitute an exact competitive equilibrium (or ($0,\beta$)-Pseudo-Market Equilibrium with Priorities as in Definition \ref{definition:competitive-equilibrium}). 
	\begin{itemize}
		\item The definition of the demand function $d_s(t^*)$ implies that any course schedule that student $s$ prefers to $x^{*}_s = d_s(t^*)$ must cost strictly more than $b^{*}_s = b_s + \tau_{s,x^{*}_s}$. 
		\item Prices $\{p_{c, r}(t^*)\}_{c \in {\cal C}, r \in {\{1, ..., R\}}}$ and cutoff levels of priority defined by (\ref{eq:p(t)}) and (\ref{eq:cutoff_p(t)}) ensure that condition (\ref{eq:price-comp}) of Definition \ref{definition:competitive-equilibrium} is satisfied. 
		\item $p_{c,1}(t^*)>0$ implies $z_c(t^{*}) = 0$. To see this, note that equation (\ref{eq:p(t)}) implies $\tilde{t}^*_c>0$. In addition, we must have $\tilde{t}_c^{*} < R\overline{b}$; otherwise (\text{trunc}$(\tilde{t}^*))_c=R\overline{b}$ and $z_c(\tilde{t}^*)<0$, which contradicts the fixed point equation. Hence, $\tilde{t}_c^{*} \in (0, R\overline{b})$, so the fixed point equation ensures that $z_c(t^{*}) = 0$.
		\item $p_{c,1}(t^*)=0$ implies $z_c(t^{*}) \leq 0$. To see this, consider two cases. If $\tilde{t}_c^{*} \in (0, R\overline{b})$, the fixed point equation implies $z_c(t^{*}) = 0$. If $\tilde{t}_c^{*}\in [-1,0]$, we have $t_c^{*}\equiv\text{trunc}(\tilde{t}^*_c)=0$ and $p_{c,1}(t^*)=0$, so the fixed point equation ensures that $z_c(t^*)\leq 0$.
		\label{step-9-2}
	\end{itemize}
	Overall, if $f$ has a fixed point $\tilde{t}^{*} = f(\tilde{t}^{*})$, its truncation $t^{*}=\text{trunc}(\tilde{t}^{*})$ or adjustment $\hat{t}^*$ is a competitive equilibrium price vector for allocation $x^{*}_s = d_s(t^*)$ and budgets $b_s^{*}$ for all $s \in {\cal S}$.  Though, the fixed point of operator $f$ might fail to exist. Following \cite{Budish2011}, we define a ``convexification'' of $f$, $F:\tilde{\cal T} \rightarrow \tilde{\cal T}$, by
	\[F(\tilde{t}) = co\{y: \exists \text{ a sequence } \tilde{t}^w \rightarrow \tilde{t}, \tilde{t} \neq \tilde{t}^w \in \tilde{\cal T} \text{ such that } f(\tilde{t}^w) \rightarrow y\},\]
	where $co$ denotes the convex hull of the set. $F$ is nonempty and convex and $\tilde{\cal T}$ is compact and convex. From Lemma 2.4 of \cite{cromme1991fixed}, $F$ is an upper hemicontinuous correspondence and hence has a fixed point by Kakutani's fixed point theorem. We denote a fixed point by $\tilde{t}^{*} \in F(\tilde{t}^{*})$, and let again $t^{*} = \text{trunc}(\tilde{t}^{*})$ be its truncation.
	
	The remainder of the proof uses $F$ to show that there exists budgets for which excess demand at prices $t^*$ are at most a distance of $\sqrt{kM/2}$ from a perfect market-clearing excess demand vector. The reduction of the $M\!R$-dimensional space of prices $\{p_{c,r}\}_{c\in{\cal C}, {r\in {\{1, ..., R\}}}}$ to $M$-dimensional space $\tilde{{\cal T}}= [-1, R\overline{b}+1]^M$ allows us to use the steps of \cite{Budish2011} to establish the existence of a $(\sqrt{kM/2}, \beta)$-Pseudo-Market Equilibrium with Priorities. For completeness, we provide adapted steps in Appendix \ref{sec:theorem-1-omitted-details}.\qed

\vspace{4mm}
\noindent {\bf Proof of Theorem \ref{theorem:stability}.} Let $(x^*,p^*,b^*)$ be an outcome of the Pseudo-Market with Priorities mechanism and suppose each course $c\in {\cal C}$ instead has capacity $q'_c=\sum_{s\in {\cal S}}x^*_{s,c}$. We show that the allocation $x^*$ is stable. Since each course is at full capacity, $x^*$ must be feasible. Definitions \ref{definition:competitive-equilibrium} and \ref{definition:Pseudo-Market-Mechanism} ensure that $x^*_s$ is individually rational as well. Last, suppose that $(s, C)$ is a block. Then, $C= \max_{\succeq_s}\{x'_s\,:\,x'_s\subseteq x^*_s\cup C\}$, $C\neq x^*_s$, and for each $c\in {\cal C}$ with $c\notin x^*_s$, there is a student $s'\in {\cal S}$ with $c\in x^*_{s'}$ and $r_{s',c}<r_{s,c}$. There must exist at least one course $c\in {\cal C}$ with $c\notin x^*_s$ such that $p^*_{c,r_{s,c}}>0$; otherwise, student $s$ should have been assigned $C$ instead of $x^*_s$. The structure of equilibrium prices implies that for any student $s'$ with a lower priority than $s$ for course $c$ we must have $p^*_{c,r_{s',c}}\geq\overline{b}$. Therefore, we obtain a contradiction: $x^*$ cannot possibly assign a student $s'$ a seat in course $c$. Hence, there are no blocks, and the outcome of every Pseudo-Market with Priorities mechanism is approximately stable.	\qed

\vspace{4mm}
\noindent {\bf Proof of Theorem \ref{theorem:Pareto-improvements}.} Let $(x^*,p^*,b^*)$ be an outcome of the Pseudo-Market with Priorities mechanism. We show that the allocation $x^*$ is approximately priority-constrained efficient. Suppose each course $c\in {\cal C}$ has capacity $q'_c=\sum_{s\in {\cal S}}x^*_{s,c}$ and that $y$ is an allocation that Pareto dominates $x^*$. Allocation $y$ cannot assign more students to some course $c$ than $x^*$ does as it would imply that $\sum_{s\in {\cal S}}y^*_{s,c}>q'_c$. If $y$ assigns fewer students to some course $c$ than $x^*$ does, then $\sum_{s \in {\cal S}: r_{s,c} \geq 1} x^*_{s, c} > \sum_{s \in {\cal S}: r_{s,c} \geq 1} y_{s, c}$, confirming that $y_c \nsucceq_c x_c^*$.

Now, suppose that allocation $y$ has the same number of seats assigned in each course. Then, there is a student $s'$ who strictly prefers their assigned course schedule in $y$ to their schedule in $x^*$. For all students $s \in {\cal S}$, we must have that

\vspace{-4mm}
$$
\sum_{c\in {\cal C}}p^*_{c,r_{s,c}}y_{s,c}\geq b^*_{s}\geq\sum_{c\in {\cal C}}p^*_{c,r_{s,c}}x^*_{s,c},
$$
with the first inequality holding strictly for student $s'$. Across all students, we have that

\vspace{-4mm}
$$
\sum_{s\in {\cal S}}\sum_{c\in {\cal C}}p^*_{c,r_{s,c}}y_{s,c}>\sum_{s\in {\cal S}}\sum_{c\in {\cal C}}p^*_{c,r_{s,c}}x^*_{s,c}.
$$
Rearranging,

\vspace{-4mm}
$$
\sum_{c\in {\cal C}}\sum_{r=1}^{R}p^*_{c,r}\sum_{s: \,r_{s,c}= r}y_{s,c}>\sum_{c\in {\cal C}}\sum_{r=1}^{R}p^*_{c,r}\sum_{s: \,r_{s,c}= r}x^*_{s,c}.
$$
Hence, we obtain

{\small
\vspace{-6mm}
$$
0<\sum_{c\in {\cal C}}\sum_{r=1}^{R}p^*_{c,r}\sum_{s: \,r_{s,c}\geq r}(y_{s,c}-x^*_{s,c})=\sum_{c\in {\cal C}}\sum_{r=2}^{R}(p^*_{c,r}-p^*_{c,r-1})\sum_{s: \,r_{s,c}\geq r}(y_{s,c}-x^*_{s,c})+p^*_{c,1}\sum_{s: \,r_{s,c}\geq 1}(y_{s,c}-x^*_{s,c}).
$$}

\noindent The final term must equal zero, as allocations $y$ and $x^*$ assign the same number of seats in each course. Also, $p^*_{c,r}-p^*_{c,r-1}\leq0$ for all levels $r$, so we must have $\sum_{s:\,r_{s,c}\geq r}(y_{s,c}-x^*_{s,c})<0$ for some course $c$ and level $r$. This implies $y_{c}\nsucceq_{c}x^*_{c}$ and confirms that allocation $x^*$ is approximately priority-constrained efficient.\qed

\vspace{4mm}
\noindent {\bf Proof of Theorem \ref{theorem:fairness}.}  Let $(x^*,p^*,b^*)$ be an outcome of the Pseudo-Market with Priorities mechanism and suppose that $\beta \leq \frac{1}{k-1}$. We show that the allocation $x^*$ has schedule envy bounded by a single course toward students of weakly lower levels of priority. Consider two students $s, s' \in \cal S$ such that $r_{s,c} \geq r_{s',c}$ for all $c \in {\cal C}$. Denote the prices faced by $s$ and $s'$ as $p^*_s=\{p^*_{c,r_{s,c}}\}_{c\in {\cal C}}$ and $p^*_{s^{\prime}}=\{p^*_{c,r_{s^{\prime},c}}\}_{c\in {\cal C}}$ and the course schedule assigned to student $s'$ as $x^*_{s^{\prime}}=\{c_{j_1},...,c_{j_{k^{\prime}}}\}$ for $c_{j_1},...,c_{j_{k^{\prime}}}\in \{1,...,M\}$ and $k^{\prime}\leq k$. Suppose that $s$ envies $s^{\prime}$ and $s$ cannot afford the course schedule of $s'$ even if one course from $x^*_{s^{\prime}}$ is dropped. That is,

\vspace{-4mm}
$$
		p^*_s\cdot x^*_{s^{\prime}}\backslash \{c_{j_\ell}\}>b^*_s
$$ 
for $\ell=1,...,k^{\prime}$. Since $r_{s,c} \geq r_{s',c}$ for all $c \in {\cal C}$, $p^*_{s'}\geq p^*_s$, guaranteeing that 

\vspace{-4mm}
$$
p^*_{s^{\prime}}\cdot x^*_{s^{\prime}}\backslash \{c_{j_\ell}\}>b^*_s
$$
for $\ell=1,...,k^{\prime}$.  As $b^*_{s^{\prime}}\geq p^*_{s^{\prime}}\cdot x^*_{s^{\prime}}$, summing these inequalities over $\ell$, we obtain

\vspace{-4mm}
$$
(k'-1)b^*_{s^{\prime}}\geq (k^{\prime}-1)p^*_{s^{\prime}}\cdot x^*_{s^{\prime}}>k'b^*_s.	
$$ 
This implies $\frac{b^*_{s^{\prime}}}{b^*_{s}}>\frac{k'}{k'-1}\geq\frac{k}{k-1}\geq 1+\beta$, which contradicts how budgets are allocated.\qed

\vspace{4mm}
\noindent {\bf Proof of Theorem \ref{theorem:strategy-proof}.} We first formally define a direct mechanism, a semi-anonymous direct mechanism, and the property of being \textit{strategy-proof in the large}. Next, we show that the PMP mechanism is a semi-anonymous direct mechanism that is \emph{envy-free but for tie-breaking}. The result then follows from Appendix C in \cite{AzevedoBudish2019}.

Consider a sequence of environments labeled by the number of students $|{\cal S}^N|\equiv N$. The set of courses ${\cal C}$ is fixed, with the number of courses equal to $M$, but the capacity of each course $q^N_c$ can vary. Each student $s \in {\cal S}^N$ has a type $v_s=(\succeq_s,r_s)$ that describes her preferences $\succeq_s$ over course schedules and her priorities $\{r_{s, c}\}_{c \in \cal C}$. The set of all possible types is denoted as ${\cal V}$. An allocation $x\in X^N_0\equiv \{0,1\}^{MN}$ specifies a schedule $x_s$ for each student $s$. Letting $X=\Delta X_0$ denote the set of random allocations, we assume that each student has a von Neumann-Morgenstern utility function $u_v:X\rightarrow [0,1]$ consistent with her type $v\in {\cal V}$. 

We consider a sequence of \emph{direct mechanisms} $\{\Phi^N\}_{N\in \mathbb{N}}$ that, for each report of student preferences, assigns a distribution over course allocations; that is, $\Phi^N:{\cal V}^N\rightarrow \Delta (X_0^N)$. The PMP mechanism is an example of a \textit{semi-anonymous direct mechanism}, where agents are divided into groups, agents within each group are treated the same way, and agents across groups can be treated differently. Formally, we partition ${\cal S}^N$ into groups according to their course priorities $\left\{S^N_g\right\}_{g\in \{1,...,R\}^M}$. This partition places students who have the same level of priority in every course in one group. All students in the same group face the same prices in the PMP mechanism and the same lottery over budgets.  However, these students might have different preferences $\succeq_s$ and, hence, types $v_s$.  We assume that each student can misrepresent her preferences, but not her priorities. 

To define the property of a direct mechanism being strategy-proof in the large, we consider function $\phi^N:{\cal V}\times \Delta {\cal V} \rightarrow X$ for each $N$ according to

\vspace*{-4mm}
$$
\phi^N(v_s,m) =\sum_{v_{-s}\in {\cal V}^{N-1}} \Phi_s^N(v_{s},v_{-s})Pr(v_{-s}|v_{-s}\sim iid(m)),
$$

\vspace*{-1mm}
\noindent where $\Phi_s^N(v_s,v_{-s})$ denotes the course schedule obtained by student $s$ when she reports type $v_s$ and all other students report $v_{-s}$. $Pr(v_{-s}|v_{-s}\sim iid(m))$ denotes the probability that profile $v_{-s}$ is realized when other students' types $v_{-s}$ are independent and identically distributed according to $m\in \Delta {\cal V}$. In other words, $\phi^N(v_s,m)$ describes the random outcome that student $s$ expects to receive when she reports $v_s$ and the other students' types are distributed independently and identically according to $m$---and they report their types truthfully. Using this notation, we have the following definition.
\begin{definition}
	\label{definition:strateg-proof-in-large}
	A semi-anonymous direct mechanism $\{\Phi^N\}_{N\in \mathbb{N}}$ is \textbf{strategy-proof in the large} if, for any random distribution of reports by other students $m\in \Delta {\cal V}$ with full support and $\varepsilon > 0$, there exists $N_0$ such that, for all $N\geq N_0$ and all $v'_s, v_s\in {\cal V}$ where students can misrepresent only their preferences, we have
	$u_{v_s} [\phi^N(v_s,m)] \geq u_{v_s} [\phi^N(v'_s,m)]-\varepsilon.$	
\end{definition}
\noindent In other words, in a large enough market, reporting preferences truthfully is approximately optimal, for any independent and identical distribution of the other students' types that has full support. Students can only misrepresent their preferences, not their priorities. 

Appendix C in \cite{AzevedoBudish2019} provides an argument explaining why any semi-anonymous mechanism that is \textit{envy-free but for tie-breaking} is strategy-proof in the large. We define envy-freeness but for tie-breaking below.
\begin{definition}
	\label{definition:semi-anonymous}
	A semi-anonymous direct mechanism $\{\Phi^N\}_{N\in {\cal N}}$ is \textbf{envy-free but for tie-breaking} if for each $N$ there exists $x^N:({\cal V}\times [0,1])^N\rightarrow \Delta (X_0^N)$, symmetric over its coordinates, such that $\Phi^N(v) = \int_{\ell \in [0, 1]^N} x^N(v, \ell) d\ell,$ and, for all $s,s',N,v$, and $\ell$, if $\ell_s\geq \ell_{s'}$, and $v_s$ and $v_{s'}$ belong to the same group $S_g$ (that is, $r_s=r_{s'}$), then $	u_{v_s}(x^N_s(v,\ell))\geq u_{v_s}(x^N_{s'}(v,\ell))$.
\end{definition}
\noindent The PMP mechanism assigns each student a budget using a uniformly distributed lottery $\ell_s\sim U[0,1]$ as $b_s=1 + \ell_s\beta$.  Because budgets are random, the PMP assignment for each $v\in {\cal V}^N$ can be represented as $\Phi^N(v) = \int_{\ell \in [0, 1]^N} x^N(v,\ell) d\ell,$ where $x^N(v, \ell)\in \{0,1\}^{MN}$ assigns a course schedule to each student in ${\cal S}^N.$  Definition \ref{definition:competitive-equilibrium} guarantees that student $s$ with budget $b_s$ (lottery number $l_s$) prefers her course schedule to the course schedule of student $s'$ in the same group with budget $b_{s'}<b_s$ ($\ell_{s'}<\ell_s$). This confirms that the PMP mechanism is strategy-proof in the large and completes the proof.\qed

\newpage
{\small
\begin{spacing}{0.9}
	\bibliographystyle{aea}
	\bibliography{references}
\end{spacing}}

\newpage
\section*{\textit{(For Online Publication)}}
\label{sec:online-appendix}

\renewcommand{\theequation}{B\arabic{equation}} %
\setcounter{equation}{0}

\renewcommand{\thelemma}{B\arabic{lemma}}
\setcounter{lemma}{0}

\renewcommand{\thedefinition}{B\arabic{definition}}
\setcounter{definition}{0}

\renewcommand{\theproposition}{B\arabic{propopsition}} 
\setcounter{proposition}{0}

\renewcommand{\thetheorem}{B\arabic{theorem}} 
\setcounter{theorem}{0}

\renewcommand{\thetabApp}{B\arabic{tabApp}}
\setcounter{tabApp}{0}

\counterwithin{figure}{section}
\counterwithin{table}{section}
\counterwithin{tab}{section}

\section{Proof of Theorem 1 (Omitted Details)}
\label{sec:theorem-1-omitted-details}

Appendix \ref{sec:appendix-proofs} shows how Theorem 1 of \cite{Budish2011} can be extended to accommodate priority-specific course prices. Below, we confirm it is possible to choose budgets $b_{s}$ and $\tau_{s, x_s}$ such that at most $M$ budget constraints intersect for any $\tilde{t}\in\tilde{{\cal T}}$. This fact is used on p. \pageref{lemma:b1}. 
\begin{lemma}
	\label{lem:tau}
	One can choose taxes $\left\{ \tau_{s, x_s}\right\} _{s\in{\cal S},x_s\subseteq{{\cal C}}}$
	that satisfy the following conditions: 
	\begin{itemize}
		\item [(i)] Taxes are small: $-\varepsilon<\tau_{s, x_s}<\varepsilon$.
		\item [(ii)] Taxes favor more preferred bundles: $\tau_{s, x_s}>\tau_{s,x'_s}$
		for $x'_s\succ_{s}x_s$.
		\item [(iii)] The inequality bounds are preserved:
		\begin{equation*}
			1\leq\min_{s,x_s}(b_{s}+\tau_{s,x_s})\leq\max_{s,x_s}(b_{s}+\tau_{s,x_s})\leq1+\beta.
		\end{equation*}
		\item [(iv)] No perturbed budgets are equal: $b_{s}+\tau_{s,x_s}\neq b_{s'}+\tau_{s',x_{s'}}$.
		\item [(v)] There is no auxiliary price vector $\tilde{t}\in\tilde{\cal T}$
		at which more than $M$ budget constraints $H(s,x_s)$ intersect.
	\end{itemize}
\end{lemma}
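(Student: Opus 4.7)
The plan is to construct the base budgets $b_s$ and perturbations $\tau_{s,x_s}$ together: first fix a deterministic structure that yields (i)--(iv) for any sufficiently small parameters, and then draw the remaining freedom randomly so that (v) holds almost surely. The main obstacle is that the surfaces $H(s,x_s)$ are not globally hyperplanes, so one cannot directly invoke the genericity argument of \cite{Budish2011}; the fix is to restrict to polyhedral regions on which they become affine.

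For (i)--(iv), I would fix distinct base budgets $b_s$ in the interior of $[1,1+\beta]$, separated from one another and from the endpoints by a small safety margin. Pick any strictly decreasing real-valued representation $u_s$ of $\succsim_s$ and set $\tau_{s,x_s} = -\delta u_s(x_s) + \eta_{s,x_s}$, where $\delta>0$ is small and $\eta_{s,x_s}$ is drawn independently from a uniform distribution on $(-\eta_0,\eta_0)$ with $\eta_0 \ll \delta$. For $\delta$ and $\eta_0$ small enough, (i) and (iii) are immediate, (ii) holds because the $-\delta u_s(x_s)$ term dominates the noise and is strictly ordered by preference, and (iv) is an almost-sure event because it rules out finitely many linear equalities among independent continuous random variables.

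For (v), I would partition $\tilde{{\cal T}}$ using the finite collection of hyperplanes $\{\tilde{t}_c = (r-1)\overline{b}\}_{c\in{\cal C},\,r\in\{1,\ldots,R+1\}}$ into finitely many closed polyhedral cells. On the interior of each cell $K$, the sign of $\tilde{t}_c-(r_{s,c}-1)\overline{b}$ is fixed for every $(s,c)$, so each surface $H(s,x_s)$ restricts to an affine hyperplane $\langle a_{s,x_s}^{K},\tilde{t}\rangle = b_s + \tau_{s,x_s} + C_{s,x_s}^{K}$, where the slope $a_{s,x_s}^{K}\in\{0,1\}^M$ and the constant $C_{s,x_s}^{K}\in\mathbb{R}$ depend only on $K$ and on $(s,x_s)$, not on the perturbations. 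The task then reduces to ensuring that within each cell no $M+1$ of these hyperplanes share a common point.

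This I would establish by a standard genericity argument. Fix a cell $K$ and any $M+1$ pairs $(s_i,x_i)$. Their hyperplanes share a common $\tilde{t} \in \mathbb{R}^M$ if and only if a certain rank condition holds on the augmented matrix, which translates into a nontrivial linear equation in the intercepts $b_{s_i}+\tau_{s_i,x_i}$, with coefficients determined by a left null vector of the $(M+1)\times M$ slope matrix. The key verification---mirroring the corresponding step in \cite{Budish2011}---is that at least one coefficient in this equation is nonzero, so the equation cuts out a Lebesgue-null subset of the $\eta$-parameter space; I expect this to be the delicate part of the argument, as it requires checking that the slope vectors $a_{s_i,x_i}^{K}$ are never arranged in such a way that every left null vector annihilates all intercepts. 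Since there are only finitely many cells and finitely many $(M+1)$-tuples, the union of all bad sets is still null, and choosing $\eta$ outside this union (while remaining in $(-\eta_0,\eta_0)$) delivers (v) along with (i)--(iv).
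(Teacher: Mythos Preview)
Your approach is correct and is essentially the paper's argument in different packaging. The paper skips the global cell decomposition and instead works pointwise: fixing a candidate $\tilde t$, it observes that if $\tilde t\in H(s,x_s)$ then any course $c\in x_s$ with $r_{s,c}<r^*_c(\tilde t)$ would already make the bundle cost at least $\overline b>b_s+\tau_{s,x_s}$, so only cutoff-level terms survive and the intersecting budget equations collapse to a linear system in the $M$ cutoff prices $p_c(\tilde t)$ with $\{0,1\}$ coefficients; Rouch\'e--Capelli then gives the same perturbation conclusion you reach. Your cell decomposition makes this piecewise-affine structure explicit and carries out the genericity via randomization, but the underlying reduction---$M$ affine constraints, then perturb the intercepts---is identical.

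The step you flag as delicate is in fact immediate. Since the $M+1$ pairs $(s_i,x_i)$ are distinct, the perturbations $\eta_{s_i,x_i}$ are distinct independent continuous random variables, and any nonzero left null vector $\lambda$ of the slope matrix yields the condition $\sum_i\lambda_i\eta_{s_i,x_i}=\text{const}$, which is automatically a nondegenerate linear equation in the $\eta$'s simply because $\lambda\neq 0$. No special configuration of the slope vectors $a_{s_i,x_i}^K$ can make every such equation trivial.
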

\begin{proof}
	\cite{Budish2011} shows that it is possible to choose $\left\{ \tau_{s, x_s}\right\} _{s\in{\cal S},x_s\subseteq{\cal C}}$ that satisfy the first four conditions. We now establish that it is always possible to slightly change taxes such that condition (v) is also satisfied. Let us assume that more than $M$ budget constraints $H(s,x_s)$ intersect and let ${\cal I}$ be the set of students $s$ with some such budget constraint. Note that $|{\cal I}|$ can be smaller than $M$ if multiple budget constraints come from a single student $s$.
	
	 For each $\tilde{t}\in\tilde{\cal T}$, consider prices $\left\{ p_{c,r}(\tilde{t})\right\} _{c\in{\cal C}, r\in {\{1, ..., R\}}}$ defined by equation (\ref{eq:p(t)}) and the cutoffs defined by (\ref{eq:cutoff_p(t)}). 
	The definition of cutoffs $r^*_c(\tilde{t})$ implies that 
	
	\vspace{-4mm}
	$$
	\forall s\in{\cal I},c\in{\cal C}: r_{s,c}>r^{*}_c(\tilde{t}),\hspace{0.3cm}x_{s, c}\cdot p_{c,r_{s,c}}(\tilde{t})=0.
	$$
	In addition, the definition of cutoffs implies $p_{c,r_{s,c}}\geq \overline{b}$ for $1\leq r_{s,c}<r^{*}_c(\tilde{t})$. Therefore, a seat in course $c$ is not allocated to agent $s$ for $r_{s,c}<r^{*}_c(\tilde{t})$; that is, $x_{s, c}=0.$ Hence, we also have
	
	\vspace{-4mm}
	$$
	\forall s\in{\cal I},c\in{\cal C}: r_{s,c}<r^{*}_c(\tilde{t}),\hspace{0.3cm}x_{s, c}\cdot p_{c,r_{s,c}}(\tilde{t})=0.
	$$
	Therefore, we obtain that $x_{s, c}\cdot p_{c,r_{s,c}}(\tilde{t})$ might be non-zero only if $r_{s,c}=r^{*}_c (\tilde{t})$. That is, for $s \in {\cal I}$, entries in agent $s$'s budget constraint are non-zero only if $r_{s,c}=r^*_c(\tilde{t})$. Hence, for a given $\tilde{t} \in \tilde{\cal T}$, if $M$ budget constraints $H(s,x_s)$ intersect at $\tilde{t}$, we obtain that $\tilde{t}$ solves the set of linear equations below:
	\[
	\left\{\begin{array}{cc}
		x_{s,1}\cdot p_{1, r_{s, 1}}(\tilde{t})+x_{s,2}\cdot p_{2, r_{s, 2}}(\tilde{t})+...+x_{s,M}\cdot p_{M, r_{s, M}}(\tilde{t}) &=b_{s}+\tau_{s, x_s}\\
		.... & ....\\
		x_{s', 1}\cdot p_{1, r_{s', 1}}(\tilde{t})+x_{s', 2}\cdot p_{2, r_{s', 2}}(\tilde{t})+...+x_{s', M}\cdot p_{M, r_{s', M}}(\tilde{t}) &=b_{s'}+\tau_{s, x'_s}
	\end{array}\right.
	\]
	Here, $s,s'\in {\cal I}$, the coefficients $x_{s, c} \in \{0,1\}$ are in line with the bundles $x_s$ on the budget constraints, and each price satisfies $p_{c, r_{s, c}} = p_{c, r^*_c}$ or is such that $x_{s,c}\cdot p_{c, r_{s, c}} = 0$. Equivalently, setting $\tilde{x}_{s, c} = 1$ if $x_{s, c} = 1$ and $r_{s, c} = r^*_c$ and $\tilde{x}_{s, c} = 0$, we have the following set of equations:
	
	\[
	\left\{\begin{array}{cc}
		\tilde{x}_{s,1}\cdot p_{1, r^*_1}(\tilde{t})+\tilde{x}_{s,2}\cdot p_{2, r^*_2}(\tilde{t})+...+\tilde{x}_{s,M}\cdot p_{M, r^*_M}(\tilde{t}) &=b_{s}+\tau_{s, x_s}\\
		.... & ....\\
		\tilde{x}_{s', 1}\cdot p_{1, r^*_1}(\tilde{t})+\tilde{x}_{s', 2}\cdot p_{2, r^*_2}(\tilde{t})+...+\tilde{x}_{s', M}\cdot p_{M, r^*_M}(\tilde{t}) &=b_{s'}+\tau_{s, x'_s}
	\end{array}\right.
	\]

	There are more than $M$ such equations, but at most $M$ independent linear equations in prices $\{p_{1, r^*_1}(\tilde{t}), p_{2, r^*_2}(\tilde{t}), ..., p_{M, r^*_M}(\tilde{t})\}$. So, the Rouch\'e--Capelli theorem implies that we can choose $\tau_{s, x_s}$ such that only at most $M$ equations are satisfied for any $\tilde{t}$.\footnote{Note that when we change $\tau_{s, x_s}$, some other budget constraints might start intersecting. Since the set of possible intersecting budget constraints is finite and $\tau_{s, x_s}$ varies continuously, we can always choose $\tau_{s, x_s}$ without influencing the intersection property of the other budget constraints.}
\end{proof}

\vspace{2mm}
Our proof of Theorem \ref{theorem:existence} in Appendix \ref{sec:appendix-proofs} extends steps 1-3 of the proof of Theorem 1 from \cite{Budish2011} to accommodate priority-specific course prices. The rest of this section adapts the remaining steps from this proof to our setting.

\vspace{2mm}
\noindent {\bf Step 4.} Suppose that $t^{*}$ is not on any budget constraint. We show that $t^{*}$ is an exact competitive equilibrium price vector. There is a neighborhood around $t^{*}$ where each agent's demand is unchanging in price. At price $t^{*}$, $f$ is continuous, and as a result, $F(t^{*}) = f(t^{*})$. 
\begin{itemize}
	\item If $t^{*} = \tilde{t}^{*}$, then $F(\tilde{t}^{*}) = F(t^{*}) = f(t^{*})$ and, thus, $t^{*} = \tilde{t}^{*} \in F(\tilde{t}^{*})= f(\tilde{t}^{*})$. Therefore, $t^{*}$ is a fixed point. Hence, as we established earlier, it is an exact competitive equilibrium price vector.
	\item If $t^{*} \neq \tilde{t}^{*}$, we establish the following lemma.
	\begin{lemma}
		For any $\tilde{t} \in \tilde{\cal T}\setminus {\cal T}$, (i) $f(\tilde{t}) = f($\textup{trunc}$(\tilde{t}))$ and (ii) $F(\tilde{t}) \subseteq F($\textup{trunc}$(\tilde{t}))$.
		\label{lem:suplementary}
	\end{lemma}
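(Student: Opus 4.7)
The plan for part (i) is straightforward: $f$ depends on its argument only through the composition with $\text{trunc}$, and $\text{trunc}$ is the identity on ${\cal T}$. Setting $t = \text{trunc}(\tilde{t}) \in {\cal T}$, so that $\text{trunc}(t) = t$, I unfold the definition of $f$ to get
$$f(\tilde{t}) = \text{trunc}(\tilde{t}) + \gamma z(\text{trunc}(\tilde{t})) = t + \gamma z(t) = \text{trunc}(t) + \gamma z(\text{trunc}(t)) = f(t),$$
which is (i).

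For part (ii), let $t = \text{trunc}(\tilde{t})$ and fix an arbitrary $y \in F(\tilde{t})$, written as a convex combination $y = \sum_i \lambda_i y_i$ with each $y_i = \lim_w f(\tilde{t}^{w,i})$ along a sequence $\tilde{t}^{w,i} \to \tilde{t}$, $\tilde{t}^{w,i} \in \tilde{\cal T} \setminus \{\tilde{t}\}$. It suffices to show each $y_i$ lies in the set whose convex hull defines $F(t)$. My strategy is to push each sequence through trunc: set $s^{w,i} := \text{trunc}(\tilde{t}^{w,i})$. By continuity of trunc, $s^{w,i} \to t$, and by part (i) when $\tilde{t}^{w,i} \notin {\cal T}$ (or trivially when $\tilde{t}^{w,i} \in {\cal T}$), $f(s^{w,i}) = f(\tilde{t}^{w,i}) \to y_i$. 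If $s^{w,i} \neq t$ infinitely often, extracting a subsequence immediately places $y_i$ in the defining set for $F(t)$.

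The delicate case, and the only genuine obstacle, is $s^{w,i} = t$ for all sufficiently large $w$, which forces $y_i = f(t)$ and requires me to exhibit a different sequence approaching $t$ with $f$-values converging to $f(t)$. Here I will use the hypothesis $\tilde{t} \in \tilde{\cal T} \setminus {\cal T}$: pick a coordinate $c_0$ with $\tilde{t}_{c_0} \notin [0, R\overline{b}]$, say $\tilde{t}_{c_0} < 0$ (the case $\tilde{t}_{c_0} > R\overline{b}$ is symmetric), and define $s^w$ to agree with $t$ in every coordinate except $c_0$, where $s^w_{c_0} = -1/w$. For all large $w$, $s^w \in \tilde{\cal T}$, $s^w \neq t$, and $\text{trunc}(s^w) = t$ since the perturbed coordinate truncates back to $0 = t_{c_0}$. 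Hence $f(s^w) = f(t)$ along the whole sequence, placing $f(t) = y_i$ in the defining set. Combining the two cases, $y = \sum_i \lambda_i y_i \in F(t) = F(\text{trunc}(\tilde{t}))$. The key insight—and the main obstacle—is precisely this last construction: the naive pushforward through trunc can collapse to the single point $t$ and supply no valid ``nearby'' sequence, and the slack outside ${\cal T}$ guaranteed by $\tilde{t} \notin {\cal T}$ is exactly what allows perturbing $t$ in a direction that $\text{trunc}$ absorbs.
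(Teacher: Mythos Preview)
Your proof is correct and follows the same pushforward-through-trunc strategy as the paper's argument. In fact your version is more careful: the paper simply passes from the sequence $\tilde{t}^w$ to $\text{trunc}(\tilde{t}^w)$ and asserts the conclusion, silently assuming that the truncated sequence still satisfies the $\text{trunc}(\tilde{t}^w)\neq\text{trunc}(\tilde{t})$ requirement in the definition of $F$. You correctly identify that trunc can collapse the entire sequence onto $t$, and your construction---perturbing $t$ outward along a coordinate where $\tilde{t}$ already lies outside $[0,R\overline{b}]$, so that trunc absorbs the perturbation---is exactly the right fix. The explicit convex-combination decomposition $y=\sum_i\lambda_i y_i$ is not strictly needed (it suffices to show the generating set is contained, since convex hulls preserve inclusion), but it does no harm.
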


	\vspace{-6mm}
	\begin{proof}
		Statement (i) follows from the definition of $f$. For statement (ii), consider a point $y$ for which there exists a sequence $\tilde{t}^w \rightarrow \tilde{t}$, $\tilde{t}^w \neq \tilde{t}$ such that $f(\tilde{t}^w) \rightarrow y$. Consider trunc$(\tilde{t}^w)$. As trunc$(\cdot)$ is continuous, this sequence will converge to  trunc$(\tilde{t})$. Statement (i) implies $f($trunc$(\tilde{t}^w))$ converges to $y$. As a result, $y \in F(\tilde{t})$ implies that $y \in F($trunc$(\tilde{t}))$ and, thus, $F(\tilde{t}) \subseteq F($trunc$(\tilde{t}))$.
	\end{proof}
	As a result, since $F(t^{*}) = f(t^{*})$ and $\tilde{t}^{*} \in F(\tilde{t}^{*})$, $\tilde{t}^{*} \in F(t^{*}) = f(t^{*}) = f(\tilde{t}^{*})$ and, thus, $\tilde{t}^{*} = f(\tilde{t}^{*})$. Therefore, $t^{*}$ is an exact competitive equilibrium price vector.
\end{itemize}

\vspace{2mm}
\noindent {\bf Step 5.} Next, suppose that $t^{*}$ is on $1 \leq L \leq M$ budget constraints. We denote $\Phi = \{0, 1\}^L$ and construct a set of $2^L$ price vectors $\{t^{\phi}\}_{\phi \in \Phi}$ that satisfy the following conditions:
\begin{enumerate}
	\item Each $t^{\phi}$ is close enough to $t^{*}$ such that there is a path from $t^{\phi}$ to $t^{*}$ that does not cross any budget constraint.
	\item Each $t^{\phi}$ is on the ``affordable'' side of the $\ell$th budget constraint if $\phi_{\ell} = 0$ and is on the ``unaffordable'' side if $\phi_{\ell} = 1$.
\end{enumerate}
To construct vectors $\{t^{\phi}\}_{\phi \in \Phi}$, note that each of the $L$ intersecting budget constraints defines two sets:
\begin{align*}
	H^0_{\ell} &= \left\{ \tilde{t} \in \tilde{{\cal T}}:\sum_{c\in{\cal C}}x_{s_{\ell}c}\max(\tilde{t}_{c}-(r_{s_{\ell},c}-1)\overline{b},0)\leq b_{s_{\ell}}+\tau_{s_{\ell}, x_{s_{\ell}}}\right\}\\[2mm]
	H^1_{\ell} &= \left\{ \tilde{t} \in \tilde{{\cal T}}:\sum_{c\in{\cal C}}x_{s_{\ell}c}\max(\tilde{t}_{c}-(r_{s_{\ell},c}-1)\overline{b},0)> b_{s_{\ell}}+\tau_{s_{\ell}, x_{s_{\ell}}}\right\}
\end{align*}
The first set delineates the set of prices for which agent $s_{\ell}$ can afford schedule $x_{s_\ell}$, whereas the second set delineates the set of prices for which agent $s_{\ell}$ can't afford $x_{s_{\ell}}$. Let $\phi = (\phi_1, ..., \phi_L) \in \Phi$ be an $L$-dimensional vector of zeros and ones, and the polytope $\pi(\phi) := \cap_{\ell = 1}^L H^{\phi_{\ell}}_{\ell}$ be the set of points in ${\cal T}$ that belongs to the intersection of sets indexed by $\phi$. Let $H=\{H(s,x_s)\}_{s\in{\cal I},x_s\subseteq{\cal C}}$ be the finite set of all budget constraints formed by any student-schedule pair $(s, x_s)$. We then define the distance
\[ 
\delta < \inf_{\tilde{t}^{\prime\prime} \in {\cal T}, H \in {\cal H}} \{||(t^{*} - \tilde{t}^{\prime\prime})||_2 : \tilde{t}^{\prime\prime} \in H, t^{*} \notin H\},
\]
which is such that any budget constraint that $t^{*}$ does not belong to is further than $\delta$ away from $t^{*}$. Now, for each $\phi\in \Phi$, we define $t^{\phi}$ by an arbitrary element of $\pi(\phi) \cap B_{\delta}(t^{*})$, where $B_{\delta}(t^{*})$ is a $\delta$-ball of $t^{*}$. Then, the set of price vectors $\{t^{\phi}\}_{\phi \in \Phi}$ satisfies the two requirements above.

\vspace{2mm}
\noindent {\bf Step 6.} We now show that a perfect market-clearing excess demand vector lies in the convex hull of $\{z(t^\phi)\}_{\phi\in \Phi}$. For this purpose, we first show that for any $y\in F(t^*)$, we must have $y=t^*+\sum_{\phi\in \Phi} \lambda^\phi \gamma z(t^\phi)$, where $\sum_{\phi\in \Phi} \lambda^\phi=1.$ Take some $y\in F(t^*)$. Consider a sequence $t^w \rightarrow t^*, t^* \neq t^w \in \tilde{\cal T} \text{ such that } f(t^w) \rightarrow y'$. Note that the sequence $t^{w}$ consists of a finite number of subsequences $t^{w,\phi} \in\pi(\phi)\cap B_\delta(t^*)$ for some $\phi\in \Phi$.\footnote{Some subsequences can have only a finite number of elements.} Since all elements of the subsequence $t^{w,\phi}$ are on the same side of set $H^\phi_\ell$, every agent has the same choice at every point. Hence, if the subsequence has an infinite number of elements, we have $t^{w,\phi} \rightarrow t^*$ and 

\vspace{-4mm}
$$
f(t^{w,\phi})\rightarrow t^*+\gamma z(t^\phi).
$$
Therefore, the limit of $f(t^w)$ for the original sequence $t^w$ must be also $t^*+\gamma z(t^{\phi'})$ for some $\phi'\in \Phi$. So, $y'=t^*+\gamma z(t^{\phi'})$. Since, by definition, $y$ is a convex combination of such $y'$, we must have $y=t^*+\sum_{\phi\in \Phi} \lambda^\phi \gamma z(t^\phi)$, where $\sum_{\phi\in \Phi} \lambda^\phi=1.$

Lemma \ref{lem:suplementary} implies that $\tilde{t}^*\in F(\tilde{t}^*)\subseteq F(t^*)$. Hence, we must have 
$$
\tilde{t}^*=t^*+\sum_{\phi\in \Phi} \lambda^\phi \gamma z(t^\phi).
$$
for nonnegative $\{\lambda_\phi\}_{\phi \in \Phi}$ with $\sum_{\phi\in \Phi} \lambda^\phi=1.$ We also denote
$$
\zeta=\sum_{\phi\in\Phi}\lambda^\phi z(t^\phi)=\frac{\tilde{t}^*-t^*}{\gamma}.
$$ 
We note that $\zeta\leq 0$ and $\zeta_c< 0$ imply $t_c^*=0$. Hence, vector $\zeta$ is a perfect market-clearing excess demand vector, and it lies in the convex hull of $\{z(t^\phi)\}_{\phi\in \Phi}$.

\vspace{2mm}
\noindent {\bf Steps 7-8.} The structure of excess demand has the same geometric structure as in \cite{Budish2011}. In particular, let $L'$ be the number of agents whose budget constraints intersect at price $t^*$, and renumber agents so that $s=1,...,L'$. We denote the number of budgets of student $s$ that intersect at $t^*$ as $w_s$. Since at most $M$ budgets constraints can intersect, we must have $L\equiv\sum_{s=1}^{L'}w_s\leq M$. We also denote the bundles pertaining to $s$'s budget constraints as $x_s^1\succ ... \succ x_s^{w_s}$.

Similarly to \cite{Budish2011}, we consider bundles that $s$ demands at prices near $t^*$. In the set $H^0(s, x^1_s)$, agent $s$ can purchase her favorite bundle $x^1_s$. Hence, one does not need to know whether prices belong to sets $H^0(s, x_s^2), ..., H^0(s, x_s^{w_s})$. Let us denote the demand for prices at $H^0(s, x^1_s)\cap B_\delta(t^*)\cap\tilde{\cal T}$ as $d_s^0$. Similarly, we consider prices in $H^1(s, x^m_s)\cap H^0(s, x^{m+1}_s)$ and denote the corresponding demands as $d_s^m$ for $m=1,...,w_s$. Overall, agent $s=1,..., L'$ purchases $w_s+1$ distinct bundles at prices near to $t^*$. 

Let us denote the excess demand of the remaining agents as
$$
z_{S\backslash \{1,...,L'\}}(t^*)=\sum_{s=L'+1}^S d_s(t^*)-q.
$$
A perfect market-clearing excess demand vector lies in the convex hull of $\{z(t^\phi)\}_{\phi\in \Phi}$ with the elements 
$$
z_{S\backslash \{1,...,L'\}}(t^*)+\sum_{s=1}^{L'}\sum_{f=1}^{w_s}a_s^fd^f_s
$$
where $0\leq a_s^f\leq 1, s=1,...,L',f=1,...,w_s$ and $\sum_{f=1}^{w_s}a_s^f=1,$ $s=1,...,L'$. \cite{Budish2011} shows in Step 8 of Theorem 1 that there exists an element of the above geometric structure, $z(t^{\phi'})$, within $\sqrt{kM/2}$ distance from the perfect market-clearing excess demand vector. This purely mathematical argument remains unchanged.

\vspace{2mm} 
\label{theorem-1-step-9}\noindent {\bf Step 9.} While $z(t^{\phi'})$ is within $\sqrt{kM/2}$ distance from the perfect market-clearing excess demand vector, it may be that $t^{\phi'} \in \tilde{\mathcal{T}} \setminus \mathcal{T}$ rather than $\mathcal{T}$. Since we know that $t^* \in \mathcal{T}$, we can alter student budgets $b$ so that excess demand at $t^*$ equals $z(t^{\phi'})$.

For any student $s \in \mathcal{S} \setminus \{1, ..., L'\}$, we know that $s$ selects the same course schedule at $t^{\phi'}$ and $t^*$. As a result, setting $b^*_s = b_s + \tau'_{s, x^*_s}$, the price of any bundle preferred by $s$ to $x^*_s$ costs more than $b^*_s$. 

Any student $s \in \{1, ..., L'\}$ may select a different course schedules at $t^{\phi'}$ and $t^*$. For each such student, for each intersecting budget constraint $m = 1, ..., w_s$, if $t^{\phi'} \in H^1(s, x^m_s)$, then the course schedule $x^m_s$ is unaffordable for student $s$ at $t^{\phi'}$. To guarantee that $x^m_s$ is unaffordable at $t^*$ as well, as in \cite{Budish2011}, we change the reverse tax $\tau'_{s, x^m_s}$ to be $\tau''_{s, x^m_s} = \tau'_{s, x^m_s} - \delta_2$, where $\delta_2 > 0$ is small enough to ensure that conditions (i), (ii), and (iii) from Lemma \ref{lem:tau} still hold. If $t^{\phi'} \in H^0(s, x^m_s)$ (or, for any bundle $x \neq x^m_s$ for $m = 1, ..., w_s$), set $\tau''_{s, x^m_s} = \tau'_{s, x^m_s}$. Taxes still favor more preferred bundles, i.e., $\tau''_{s, x_s} > \tau''_{s x'_s}$ if $x'_s \succ_s x_s$. Then, letting $x^*_s = d_s(t^*, b_s, \tau''_s)$ and setting $b^*_s = b_s + \tau''_{s, x^*_s}$, the price of any bundle preferred by $s$ to $x^*_s$ costs more than $b^*_s$. 

As a result, with all taxes equal to $0$, at budgets $b^*$ and priority-specific course prices $p^*$ derived from $t^*$ using equation (\ref{eq:p(t)}), the resulting allocation $x^*$ generates excess demand equal to $z(t^{\phi'})$, which has a market-clearing error no greater than $\sqrt{kM/2}$.
\qed

\newpage
\renewcommand{\theequation}{C\arabic{equation}} %
\setcounter{equation}{0}

\renewcommand{\thelemma}{C\arabic{lemma}}
\setcounter{lemma}{0}

\renewcommand{\thedefinition}{C\arabic{definition}}
\setcounter{definition}{0}

\renewcommand{\theproposition}{C\arabic{propopsition}} %
\setcounter{proposition}{0}

\renewcommand{\thetheorem}{C\arabic{theorem}} %
\setcounter{theorem}{0}

\renewcommand{\thetabApp}{C\arabic{tabApp}}
\setcounter{tabApp}{0}

\section{The Student Utility Model Calibration}
\label{sec:appendix-student-utility-estimation}

This appendix describes the method of the simulated moments approach used to calibrate the student utility model. First, Section \ref{subsec:appendix-Model} outlines the student utility model. Then, Section \ref{subsec:course-reserve-adjustment} discusses adjustments to the university's course reserves data. The empirical moments used to calibrate the model are described in Section \ref{subsec:appendix-moment-conditions}. Sections \ref{subsec:appendix-calibration} and  \ref{subsec:appendix-inference} provide the calibration process, calibration results, and details on model fit.

\subsection{Model}
\label{subsec:appendix-Model}
We consider a model with \textit{student horizontal preferences}, \textit{course vertical components}, and an \textit{interaction term}. In particular, $u_{sc}$, student $s$'s utility from taking course $c$, takes the following parametric form:

\vspace{-2mm}
\begin{equation}
	\label{eq:student-utility-appendix}
	u_{sc} = Y'_s \theta + Y'_s\Gamma H_c + z_c + \varepsilon_{sc},
\end{equation}
where:
\begin{itemize}
	\item $Y_s$ and $H_c$ are binary vectors of student $s$'s observed characteristics and course $c$'s observed characteristics, each with one nonzero element. Denote the set of colleges by ${\cal A}=\{A,...,G\}$ and years of study by ${\cal Y}=\{1,...,4\}$. Then, $Y_s\in \{0,1\}^{|{\cal A}||{\cal Y}|}$ sets the college and year of study of student $s$ and $H_c\in \{0,1\}^{|{\cal A}|}$ sets the college of course $c$.
	\item $\theta\in \mathbb{R}^{|{\cal A}||{\cal Y}|}$  and $z=\{z_c\}_{c\in {\cal C}}$ are two vectors of unobserved coefficients; $\Gamma$ is a matrix of unobserved coefficients conformable with vectors $Y_s$ and $H_c$.
	\item $\{\varepsilon_{sc}\}_{s\in {\cal S}, c\in {\cal C}}$ are random utility components that are assumed to be identically and independently distributed according to the standard normal distribution $\varepsilon_{sc}\sim N(0,1)$.  
\end{itemize}

Each student can be enrolled in up to a maximum of five courses. We assume that the utility from taking multiple courses equals the sum of utilities of taking individual courses. The utility of taking no course is normalized to $u_{s0}=0$. 

The parametric form for student utilities in equation \eqref{eq:student-utility-appendix} has three non-random components. The first, $Y'_s\theta$ is \textit{a horizontal component} that is determined solely by the student's characteristics. The vector of coefficients $\theta\in \mathbb{R}^{|{\cal A}||{\cal Y}|}$ has $28$ elements in total to calibrate. The horizontal component determines the importance of taking courses for student $s$, with a larger value meaning student $s$ wants to enroll in more courses. This accounts for variability in the data in the number of courses taken by students from different colleges and years of study, as shown in Table \ref{tab:students-n-courses-taken}.

The second component, $Y'_s\Gamma H_c$, is an \textit{interaction term} that determines the heterogeneity of students' preferences over courses. We restrict the matrix $\Gamma$ to have $49$ independent parameters $\gamma_{aa'}$, where $a$ is student $s$'s college and $a'$ is course $c$'s college. $\Gamma$ accounts for the heterogeneity in patterns of course enrollment across colleges. For example, Table \ref{tab:percent-courses-outside-college} shows students from college $A$ take $71\%$ of courses from their home college, while more than $95\%$ of courses taken by students from college $C$ are courses outside their home college. 

The third component $z_c$ is a \textit{vertical component} that determines the attractiveness of course $c$. This component is common among all students.

\renewcommand{\arraystretch}{1.2} 
\newcommand{\cwidth}{1.4cm}
\begin{table}[t!]
	\begin{center}
		\refstepcounter{tab}\label{tab:students-n-courses-taken}
		\caption{The average number of courses taken by students.}
		\begin{tabular}{C{\cwidth} |C{\cwidth}|C{\cwidth}|C{\cwidth}|C{\cwidth}|C{\cwidth}}
		\multicolumn{1}{c}{} & \multicolumn{1}{c}{Year 1} & \multicolumn{1}{c}{Year 2} & \multicolumn{1}{c}{Year 3} & \multicolumn{1}{c}{Year 4} \\[2mm]
			A & 3.6 & 3.4 & 3.1 & 2.9 \\
			B & 4.0 & 4.3 & 3.9 & 3.0 \\
			C & 4.7 & 4.3 & 4.2 & 3.5 \\
			D & 4.4 & 4.3 & 4.1 & 3.4 \\
			E & 4.3 & 4.3 & 3.8 & 3.3 \\
			F & 4.2 & 4.0 & 3.8 & 2.9 \\
			G & 4.6 & 4.7 & 4.7 & 4.1 \\ 
		\end{tabular}
	\end{center}
	\vspace{1mm}
	\begin{spacing}{0.8}
	{\footnotesize \textit{Notes:} Each row presents the average number of courses taken by students from the same college and different years of study.}
\end{spacing}
\end{table}

\begin{table}[t!]
	\begin{center}
		\refstepcounter{tab}\label{tab:percent-courses-outside-college}
		\caption{Percentages of courses taken by students across colleges}
		\begin{tabular}{C{\cwidth} |C{\cwidth}|C{\cwidth}|C{\cwidth}|C{\cwidth}|C{\cwidth}|C{\cwidth}|C{\cwidth}}
			\multicolumn{1}{c}{\%} & \multicolumn{1}{c}{A} & \multicolumn{1}{c}{B} & \multicolumn{1}{c}{C} & \multicolumn{1}{c}{D} & \multicolumn{1}{c}{E} & \multicolumn{1}{c}{F} & \multicolumn{1}{c}{G} \\[2mm]			
			A & 71.4  & 0.5 & 1.9  & 16.9  & 1.7  & 5.0  & 2.7  \\
			B & 1.9  & 43.8 & 0.6  & 18.4  & 19.0  & 12.2  & 4.1  \\
			C & 10.2  & 1.3  & 4.5  & 37.2  & 29.2  & 15.2  & 2.4  \\
			D & 2.5  & 1.0  & 0.8  & 63.9  & 12.0  & 13.0  & 6.8  \\
			E & 1.2  & 2.6  & 0.1 & 24.1  & 53.7  & 16.1  & 2.1  \\
			F & 1.5  & 1.8  & 0.8  & 23.3  & 17.9  & 52.4  & 2.3  \\
			G & 1.3  & 0.4  & 0.3  & 31.3  & 7.7  & 6.1  & 53.0  \\

		\end{tabular}
	\end{center}
	\vspace{1mm}
	\begin{spacing}{0.8}
	{\footnotesize \textit{Notes:} This table reports the percentage of courses taken by students across colleges. The rows correspond to students' colleges, and the columns correspond to courses' colleges.}
\end{spacing}
\end{table}

\vspace{2mm}
\noindent {\bf Parameter normalization.}\label{normalization} To correctly interpret the interaction term $\Gamma$ and the vertical component $z$, we need to further normalize the model's parameters. As described above, for student $s$ from college $a$ and course $c$ from college $a'$, we have $Y'_s\Gamma H_c=\gamma_{aa'}$. Let $S_a$ and $S_{ay}$ be the set of students in college $a$ and the set of students in college $a$ and year $y$, respectively. The average utility of students in college $a$ from taking course $c$ in college $a'$ equals

\vspace{-4mm}
$$
	\overline{u}_{ac}=\frac{1}{|S_{a}|}\sum_{s\in {\cal S}_{a}}u_{sc}
	=\overline{\theta}_{a}+\gamma_{aa'}+z_{c},
$$
where $\overline{\theta}_{a}=\frac{1}{|S_{a}|}\sum_y|S_{ay}|\theta_{ay}$ is the average student-specific component among students in college $a$. We have three additive parameters that lead to the same value of the utility components $\overline{u}_{ac}$ and need a normalization. 

One possible normalization is $\overline{\theta}_a=0$ and $\gamma_{aa}=0$ for $a=1,...,7$. This normalization leads to the interpretation of $z_c$ as \textit{the average popularity} of the course among students from the same college. Then, $\gamma_{aa'}$ is \textit{an increase in student's utility} from taking a course in college $a'$ compared to a course in student's college $a$ with the same vertical component.

\vspace{2mm}
\noindent {\bf Students' choice sets.}\label{choicesets} The student choice sets are unobserved in the data. We do not track students' past enrollments, and as a result, we do not know each student's set of courses for which they satisfy course prerequisites. This complication requires us to take a stance on how students' choice sets are formed. We assume that each student's choice set is limited to $80$ courses. The courses in the student's choice set are drawn randomly such that the probability that a course from college $a'$ is drawn equals the share of students from the same college-year who are enrolled in courses in college $a'$ in the actual data.  
Additionally, we make adjustments for courses with course reservations. For each course reservation with $r$ reserved seats, we ensure that the course is included in the choice sets of at least $2r$ students who qualify for the reservation. This adjustment prevents course reservations from artificially limiting a course's capacity due to an insufficient number of qualified students.\footnote{If there are fewer than $2r$ qualified students, the course is included in the choice set of all such students.}

Random choice sets are a limitation of our model. Any changes in the way we draw the course lists could potentially change the calibrated parameter values. \cite{crawford2021survey} provide an insightful discussion of preference estimation techniques with unobserved choice sets. Still, the size of the student choice sets we consider is reasonable, as previous studies report that students typically consider only a limited subset of courses in a given semester \citep[see][]{budish2012multi,diebold2017matching}. We explore how the performance of mechanisms change for various sizes of choice sets in Appendix \ref{subsec:appendix-additional-simulations-comparative-statics}.

\subsection{Course Reserve Adjustment}
\label{subsec:course-reserve-adjustment}

We use the method of simulated moments to calibrate the student utility model parameters. Some adjustments to the course reserves need to be performed first. Course reserves are typically set large at the beginning of the actual course allocation process and relaxed at later stages. Hence, the final course allocation can violate the initial course reserves. In fact, $170$ out of $756$ courses violate initial course reservations in the data.

We next describe the necessary and sufficient conditions for the existence of a matching that assigns a given population of students to a course and satisfies several reservation criteria. We begin by outlining the conditions for two reservation criteria and then extend these conditions to the general case. Afterward, we describe the process of adjusting course reserves to ensure that the number of reserved seats is consistent with the course seat assignment observed in the data.  

\vspace{2mm}
\noindent {\bf Two course reserve criteria.} Consider a course with a capacity of $q$ seats. Denote the number of course reserves for students who satisfy criterion $A_i$ (student's department and year of study) as $\mathfrak{r}_{A_i}$, $i=1,2$. We assume that $\mathfrak{r}_{A_1}\geq 0$ or $\mathfrak{r}_{A_2}\geq 0$ and $\mathfrak{r}_{A_1}+\mathfrak{r}_{A_2}\leq q$. One should interpret the reservation such that if a student satisfies both $A_1$ and $A_2$, she is eligible to take a seat satisfying either criterion, but if she takes $A_1$-type seats it does not decrease the number of $A_2$-type seats. This interpretation is consistent with the actual course enrollment process in the data. The number of students who satisfy criterion $A_i$ and who are matched to the course in the final allocation is denoted by $x_{A_i}$, $i=1,2$. Students who do not satisfy any course reserve criteria are called $0$-students, and we denote the number of such students assigned to the course as $x_0$. Then, the assignment satisfies course reserves if and only if

\vspace{-4mm}
\begin{equation}
	\label{eq:two-criteria}
	\left\{
	\begin{array}{l}
		x_0 + x_{A_1\cup A_2} \leq q\\[-1mm]
		x_0\leq q-\mathfrak{r}_{A_1}-\mathfrak{r}_{A_2}\\[-1mm]
		x_0+x_{A_1\backslash A_2}\leq q- \mathfrak{r}_{A_2} \\[-1mm]
		x_0+x_{A_2\backslash A_1}\leq q - \mathfrak{r}_{A_1}
	\end{array}\right..	
\end{equation}
To see that the above conditions are necessary, we consider the criteria one by one. The first inequality corresponds to the criterion that puts no restrictions on students, the second inequality corresponds to students who do not satisfy $A_1$ and $A_2$, the third corresponds to students who do not satisfy $A_2$, and the fourth corresponds to students who do not satisfy $A_1$. To show sufficiency, we first enroll students $x_0$, $x_{A_1\backslash A_2}$, and $x_{A_2\backslash A_1}$. It is clearly possible to do so satisfying all criteria. The remaining students $x_{A_1\cap A_2}$ are constrained by the total course capacity. The first inequality guarantees that they could be enrolled as well.

\vspace{2mm}
\noindent {\bf The general case.} The number of constraints increases exponentially with the number of separate seat reservations a course has. However, there are 4 or fewer seat reservations per course in the university data, which makes the analysis computationally manageable. Below, we present the necessary and sufficient conditions that guarantee the existence of a matching that satisfies multiple course reserve criteria that are based on the seminal \textit{Hall's Marriage Theorem} \citep[see][]{philip1935representatives}.

To state the theorem, let us consider a set of students $S$ and course reserves criteria $A_1,...,A_I$ for some $I\geq 1$. The number of reserved seats corresponding to criterion $A_i$ equals $\mathfrak{r}_{A_i}\geq 0$. We denote the set of available course seats as $Y$ with $|Y|=q$ and assume $\sum_{i=1}^I\mathfrak{r}_{A_i}\leq q$. Each student could potentially satisfy multiple criteria. We consider a bipartite graph $(S,Y,E)$ where $E$ is the set of edges connecting student and seats. A student $s\in {\cal S}$ is connected to seat $y\in Y$ with an edge if student $s$ is eligible to take seat $y$. For example, if there are no course reserves, each student is connected with an edge to each seat ($q$ edges). If there is one course reservation criterion $A$, each $0$-student has $q-\mathfrak{r}_A$ edges, and each student who satisfies criterion $A$ has $q$ edges. 

For a subset of students $S'\subseteq S$, let $N(S')$ denote the set of seats in $Y$ that are connected to at least one student in $S$. We call these seats the neighbors of $S'$. A perfect matching is a matching between students and seats in which each student is assigned exactly one seat. 
\begin{theorem*}[Hall's Marriage Theorem]
	The necessary and sufficient condition for the existence of a perfect matching is that every subset $S'\subseteq S$ is connected to as many neighbors as the number of its elements
	\begin{equation}
		\label{eq:Hall-inequality}
		|S'|\leq |N(S')|.
	\end{equation}
\end{theorem*}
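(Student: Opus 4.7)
The plan is to proceed by strong induction on $|S|$, handling necessity first and then sufficiency through a two-case branching argument. Necessity is immediate: given a perfect matching, the partners of any subset $S' \subseteq S$ are distinct elements of $N(S')$, so $|N(S')| \geq |S'|$. For sufficiency, the case split will turn on whether Hall's condition is strict (``slack'') for every proper nonempty subset of $S$, or whether it is tight for some such subset.

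For the base case $|S|=1$, the hypothesis $|N(\{s\})| \geq 1$ guarantees $s$ has an available seat, and we match them. For the inductive step, assume the theorem holds for every ground set of size strictly less than $|S|$. In the slack case, $|N(S')| \geq |S'|+1$ for every nonempty $S' \subsetneq S$. Pick any $s \in S$ and any $y \in N(\{s\})$, match them, and delete both; then for every $T \subseteq S \setminus \{s\}$,
\[
|N(T) \setminus \{y\}| \;\geq\; |N(T)| - 1 \;\geq\; |T|,
\]
so the inductive hypothesis yields a perfect matching of $S \setminus \{s\}$ in the reduced graph, which extends by the edge $(s,y)$.

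In the tight case, fix $\emptyset \neq S' \subsetneq S$ with $|N(S')| = |S'|$. The inductive hypothesis applied to the subgraph on $S' \cup N(S')$ produces a perfect matching of $S'$ onto $N(S')$. For the residual problem of matching $S \setminus S'$ into $Y \setminus N(S')$, the identity $N(T) \cap (Y \setminus N(S')) = N(T \cup S') \setminus N(S')$ valid for $T \subseteq S \setminus S'$, combined with the disjointness of $T$ and $S'$ and Hall's condition on $T \cup S'$, gives
\[
|N(T) \cap (Y \setminus N(S'))| \;=\; |N(T \cup S')| - |N(S')| \;\geq\; |T \cup S'| - |S'| \;=\; |T|.
\]
Induction applies again, and the two matchings combine into one on $S$. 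The main obstacle will be verifying Hall's condition in the residual graph of the tight case: one must argue that peeling off a tight subset together with its entire neighborhood preserves Hall's condition everywhere else, which is exactly what the displayed identity and the disjointness of $T$ and $S'$ accomplish. The slack hypothesis in the other case is precisely what makes the opposite operation---removing a single matched seat---innocuous, so the two cases neatly exhaust the possibilities.
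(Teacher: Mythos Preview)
Your proof is correct; it is the classical Halmos--Vaughan induction argument, with the usual slack/tight case split, and all the verifications (in particular the residual Hall condition after peeling off a tight set) are carried out properly. There is nothing to compare against: the paper does not supply its own proof of Hall's Marriage Theorem but simply states it as a classical result with a citation to Hall (1935), and then applies it to derive the inequalities \eqref{eq:Hall-inequality-x-q-r} characterizing feasible course-reserve assignments.
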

It is not hard to see that the conditions of Hall's Marriage Theorem transform to the system of inequalities \eqref{eq:two-criteria} when there are only two criteria. To obtain the necessary and sufficient constraints in the general case, we need to consider only those inequalities where for a fixed set of seats $N(S')$ there is the maximum possible $|S'|$. 
\begin{itemize}
	\item The number of students who satisfy no criteria, that is, $0$-students
	
	\vspace{-5mm}
	$$
	x_0\leq q-\sum_{i=1}^I\mathfrak{r}_{A_i}.
	$$
	\item The number of students who satisfy \textit{only} criterion $A_i$ or no criteria (adding $0$-students only increases the number on the left-hand side of inequality \eqref{eq:Hall-inequality} and does not change the number on the right-hand side of the inequality)
	
	\vspace{-5mm}
	$$
	x_{A_i\backslash \cup_{j\neq i} A_j}+x_0\leq q-\sum_{j\neq i} \mathfrak{r}_{A_j}
	$$
	\item Let us now denote ${\cal I}=\{1,...,I\}$ and take some subset of available course reserve criteria ${\cal I'}\subseteq {\cal I}$. We consider the number of students who satisfy at least one criterion in the set $\{A_i:i\in {\cal I'}\}$ or no criteria and who do not satisfy the criteria in the complementary set $\{A_i:i\in {\cal I}\backslash {\cal I'}\}$.  Hall's Marriage Theorem implies that the number of these students must satisfy
	
	\vspace{-5mm}
	\begin{equation}
		\label{eq:Hall-inequality-x-q-r}
		x_{\cup_{i\in {\cal I'}}A_i\backslash \cup_{j\notin {\cal I'}} A_j}+x_0\leq q- \sum_{j\notin {\cal I'}} \mathfrak{r}_{A_j}.	
	\end{equation}	
\end{itemize}
Overall, Hall's Marriage Theorem implies that the set of inequalities \eqref{eq:Hall-inequality-x-q-r} for all possible subsets of criteria ${\cal I'}\subseteq {\cal I}$ provides the necessary and sufficient conditions for the existence of a matching of students to course seats that satisfies the course reserve criteria.

\vspace{4mm}
\noindent {\bf Course reserve adjustment.} Using the above conditions, we have identified 170 out of 756 courses that violate the initial course reserve criteria in the university data. For these courses, we order course seat reservations by year of study in decreasing order. We then take the first course reserve and decrease the number of its reserves by one. If the Hall's inequalities are still violated, we move to the next one and also decrease the number of its reserves by one. If decreasing all course reserves by one does not make matching possible, we repeat this procedure. Clearly, the procedure terminates in a finite number of steps, as a matching with zero course reserves always possible.\footnote{For over-enrolled courses, we first set $q$ equal the maximum between course capacity and the number of students assigned in the university data.} 

\subsection{Moment Conditions}
\label{subsec:appendix-moment-conditions}

For each set of parameter values and realization of random components of the student utility model described in Section \ref{subsec:appendix-Model}, we use the Random Serial Dictatorship with \textit{adjusted course reserves} to simulate the allocations of courses to students. Then, we use these allocations to calculate the simulated moments specified below.

For each horizontal component of students' preferences $\theta_{ay},a\in {\cal A}, y\in {\cal Y}$, we consider the number of courses taken by students with the same characteristics as a moment, i.e.,

\vspace{-2mm}
$$
\breve{m}_{ay}=\sum_{s\in {\cal S}_{ay}}\sum_{c\in {\cal C}} x_{s,c},
$$
where $x_{s,c}\in \{0,1\}$ indicates student $s$'s enrollment into course $c$ and $S_{ay}$ is the set of students from college $a$ and year $y$. 

Similarly, the number of students from college $a$ who enrolled in courses in college $a'$ is a natural moment for the interaction term $\gamma_{aa'},a,a'\in {\cal A}$; that is, 

\vspace{-2mm}
$$
\hat{m}_{aa'}=\sum_{s\in {\cal S}_{a}}\sum_{c\in {\cal C}_{a'}} x_{s,c},
$$
where $C_{a'}$ is the set of courses in college $a'$ and $S_{a}$ is the set of students from college $a$. 

For the vertical component of course-specific qualities $z_c,c\in {\cal C}$, it is natural to consider the course enrollment as a moment, i.e.,

\vspace{-2mm}
$$
\tilde{m}_c=\sum_{s\in {\cal S}} x_{s,c}.
$$

\label{courses-at-maximum-capacity-start}
\noindent{\bf Courses at maximum capacity.} A critical aspect of the data is that $85$ out of $756$ courses are at \textit{maximum capacity} in the data, which creates a problem with identifying course-specific parameters $z_c$ for such courses. To address this issue, we replace moments $\tilde{m}_c$ for such courses with moments based on the timing of course enrollment. We first define a ranking of students.

\vspace{-1mm}
\begin{definition}
	\label{definition:student ranking}
	The rank of student $s$, $\text{rank}_s$, is her position in the order of students to enroll in the Random Serial Dictatorship with course reserves after breaking ties.
\end{definition}

\vspace{-1mm}
\noindent Hence, more popular courses are filled up earlier with students of a smaller rank. Less popular courses are filled up later with students of a larger rank.\footnote{More senior students have earlier time slots in the registration process. Hence, measuring popularity by how fast a course fills up is biased towards the popularity of courses among more senior students.}

To define student ranking in the university data, we considered the earliest timestamp of each student in our dataset; that is, the time students registered for their first course. Using these timestamps, we rank students in the university data.\footnote{An alternative measure of how fast courses fill up is course registration times. Both student ranking and course registration time are closely related in our simulations, as each student registers for all desired courses simultaneously. This connection is less evident in the data. Many students do not register for courses at the same time. This choice pattern is probably connected with students changing their minds or new possibilities arising during the actual course registration process. We found that student ranking better connects our simulations to the actual university data than course registration time.} To measure course popularity, we use the mean student rank enrolled in the course:

\vspace{-4mm}

$$
Mean(\{\text{rank}_s:x_{s,c}=1,\,s\in {\cal S}\}).
$$
For courses at maximum capacity, we replace moments $\tilde{m}_c$ with the mean rank across the registered students.\footnote{We also tried to use the largest student rank and some quantiles of students registered for the course to measure course popularity. These measures did not work very well. Last-minute course registrations are a noisy measure that do not reflect course popularity well. Measures based on quantiles proved to be discontinuous, with a small change in utility parameters often resulting in jumps in the student ranks at a specific quantile and a bad model fit.} We keep using the symbol $\tilde{m}_c$ for all moments corresponding to vertical components $z$ to avoid clutter in notation. Overall, there are $756+28+49=833$ moments $m=(\tilde{m},\breve{m},\hat{m})$ and $819$ independent parameters $\pi=(\theta,\Gamma,z)$ to calibrate. Hence, the model is over-identified, allowing us to test for model fit in Section \ref{subsec:appendix-inference}. 
\label{courses-at-maximum-capacity-end}

\subsection{Calibration}
\label{subsec:appendix-calibration}

We use the Random Serial Dictatorship with adjusted course reserves to calculate simulated moments and find $\pi^*=(\theta^*,\Gamma^*,z^*)$ that minimize a criterion function:

\begin{equation}
	\label{eq:objective}
	||m-m^s(\pi)||_W^2=(m-m^s(\pi))'W(m-m^s(\pi)),
\end{equation}

\vspace{2mm}
\noindent where $m$ is a set of moments constructed from the empirical data, $m^s(\pi)$ is the average of moments constructed from $100$ simulation rounds for a given vector $\pi=(\theta,\Gamma,z)$ (drawing random student utility components and choice sets for each round), and $W$ is a weight matrix. We use the following procedure \citep[see][]{jalali2015using}:
\begin{itemize}
	
	\smallskip
	\item {\bf First Stage.} We calibrate the optimal parameters $\pi$ using an initial weight matrix $W_0$. $W_0$ is a diagonal matrix with diagonal element $d$ equal to $1/m^2_d$, where $m_d$ is the value of the moment constructed from the empirical data. Hence, the first stage minimizes the sum of percentage errors for each moment.
	
	Using estimated $\pi$, we calculate the variance-covariance matrix of simulated moments
	
	\vspace{-6mm}
	{\small
	\begin{equation}
		\hat{S}=\frac{1}{L_1}\sum^{L_1}_{\ell_1=1} \left[m^{\ell_1}(\pi,\varepsilon^{\ell_1})-\frac{1}{L_2}\sum^{L_2}_{\ell_2=1}m^{\ell_2}(\pi,\varepsilon^{\ell_2})\right] \left[m^{\ell_1}(\pi,\varepsilon^{\ell_1})-\frac{1}{L_2}\sum^{L_2}_{\ell_2=1}m^{\ell_2}(\pi,\varepsilon^{\ell_2})\right], 
	\end{equation}}

	\noindent where $\varepsilon^{\ell_1}$ and $\varepsilon^{\ell_1}$ are draws of random component for $\ell_1=1,...,L_1$ and $\ell_2=1,...,L_2$. Following best practices, we take $L_1=L_2=1000$ \citep[see][]{adda2003dynamic}. The optimal weight matrix is the inverse of the variance-covariance matrix $W^*=\hat{S}^{-1}.$ 
	
	\smallskip
	\item {\bf Second Stage.} We start with the optimal weight matrix $W^*_\ell=W^*$ for $\ell=1$. We then use the optimal weight matrix to obtain more precise estimates for parameters $\pi^*_\ell$. Using the more precise estimate $\pi^*_\ell$, we compute the optimal weight matrix $W^*_{\ell+1}$ and repeat the process $\ell=2,3,...$ until the maximum change of calibrated parameters $|\pi^*_{l+1}-\pi^*_l|$ becomes smaller than $0.01$ to ensure convergence.
\end{itemize}

\vspace{2mm}
\label{calibration-results-start}
\noindent {\bf Calibration results.} Table \ref{tab:appendix-calibrated-parameters} provides the calibrated parameters of the student utility model. The top portion of the table displays the \textit{horizontal components} $\theta$, with rows corresponding to colleges and columns corresponding to year of study. The coefficient $\theta_{ay}$ represents the overall importance of taking courses for students in college $a$ and year $y$. Our normalization ensures that the sum of the coefficients in each row, weighted by the share of the student population across years, equals zero.

The horizontal components are positive for students in their first and second years and mostly negative in their third and fourth years. This observation suggests that students prefer enrolling in more courses during their initial years of study. The tendency to enroll in fewer than five courses in  later years may be due to several factors. Such students may choose part-term courses that are not in our data or have already accumulated enough credits for their degree, allowing them to ease their course load in the final years.

The middle portion of the table presents \textit{interaction terms} of matrix $\Gamma$. The general trend of negative utility changes emphasizes a preference for internal college courses, but specific positive changes indicate potential areas for beneficial inter-college course offerings. For example, students from colleges $C$ and $D$ value highly courses in college $A$. 

The bottom portion of the table presents quantiles of $756$ coefficients of \textit{vertical component} $z$, with each coefficient $z_c$ being responsible for the average popularity of the course among students from the same college. The values range from a minimum of $-2.32$ to a maximum of $2.28$. This wide range indicates significant variation in the popularity of courses within colleges. Most of the quantile values are negative, up to the $90th$ percentile, indicating that a majority of the courses have negative values. This result is most likely manifested by the student utility model that relies on college-year parameters to identify student preferences. Students randomly draw $80$ courses in their choice set but take no more than $5$ courses. Our model, which relies on college-level and year-level parameters, matches this empirical pattern by putting the values of the vertical components of most courses in the negative region.
\label{calibration-results-end}

\subsection{Model Fit}
\label{subsec:appendix-inference}

We test for model fit using the calibrated parameters. Our student utility model is over-identified, as we have $833$ moments and $819$ independent parameters. The $J$-test statistic is given by

\vspace{-4mm}
$$
J=\frac{K}{K+1}(m-m^s(\pi^*))'W^*(m-m^s(\pi^*))\sim\chi^2_{14}.
$$
where $W^*$ is the variance-covariance matrix of moments estimated at the terminal step of stage $2$ in the calibration process. This statistic has chi-square distribution with $14$ degrees of freedom under the null hypothesis that the model fits the data. The $J$-test statistic is $J=2.42$, which is less than the critical value of $29.14$ at $99\%$ cut-off level. Hence, the $J$-test does not reject the null hypothesis that the model fits the data at $99\%$ confidence level.

\renewcommand{\arraystretch}{0.95} 

\begin{table}[H]
	\begin{center}
		\refstepcounter{tab}\label{tab:appendix-calibrated-parameters}
		\caption{The calibrated parameters of the student utility model.}
		
		\vspace{1mm}		
		\begin{subtable}{\linewidth}
			\centering
			\caption*{Coefficients $\theta$}%
			\begin{tabular}{C{\cwidth} |C{\cwidth}|C{\cwidth}|C{\cwidth}|C{\cwidth}|C{\cwidth}}
				\multicolumn{1}{c}{} & \multicolumn{1}{c}{Year 1} & \multicolumn{1}{c}{Year 2} & \multicolumn{1}{c}{Year 3} & \multicolumn{1}{c}{Year 4} \\ 
		A & 0.12 & 0.20 & -0.04 & -0.27 \\ 
        		B & 0.13 & 0.19 & 0.01 & -0.31 \\ 
        		C & 0.28 & 0.21 & 0.01 & -0.44 \\ 
        		D & 0.09 & 0.19 & -0.03 & -0.32 \\ 
        		E & 0.20 & 0.15 & -0.13 & -0.29 \\ 
        		F & 0.17 & 0.08 & -0.09 & -0.28 \\ 
        		G & 0.19 & 0.11 & 0.01 & -0.37 \\ 
			\end{tabular}
		\end{subtable}
		
		\vspace{5mm}
		\begin{subtable}{\linewidth}
			\centering
			\caption*{Coefficients $\gamma$}%
			\begin{tabular}{C{\cwidth} |C{\cwidth}|C{\cwidth}|C{\cwidth}|C{\cwidth}|C{\cwidth}|C{\cwidth}|C{\cwidth}|}
				\multicolumn{1}{c}{}& \multicolumn{1}{c}{A} & \multicolumn{1}{c}{B} & \multicolumn{1}{c}{C} & \multicolumn{1}{c}{D} & \multicolumn{1}{c}{E} & \multicolumn{1}{c}{G} & \multicolumn{1}{c}{F} \\ 
				A & 0.00 & -0.65 & -0.58 & -0.28 & -0.55 & -0.70 & -0.52 \\  
        				B & -0.11 & 0.00 & -0.54 & -0.24 & -0.09 & -0.46 & -0.48 \\  
				C & 0.37 & -0.22 & 0.00 & -0.01 & 0.01 & -0.28 & -0.26 \\  
				D & 0.14 & -0.12 & -0.39 & 0.00 & -0.16 & -0.32 & -0.27 \\  
				E & 0.02 & -0.55 & -0.34 & -0.17 & 0.00 & -0.33 & -0.40 \\  
				F & -0.07 & -0.65 & -0.57 & -0.21 & -0.17 & 0.00 & -0.55 \\  
				G & -0.19 & -0.56 & -0.58 & 0.04 & -0.20 & -0.49 & 0.00 \\ 
			\end{tabular}
		\end{subtable}
		
		\vspace{5mm}
		\begin{subtable}{\linewidth}
			\centering
			\caption*{The Quantiles of Coefficients $z$}%
			\begin{tabular}{cccccccc}
				\hline
				\hline\\[-2ex]
				Quantile& Min & 0.10 & 0.25 & 0.50 & 0.75 & 0.90 & Max\\[1ex]
				\hline\\[-2ex]
				 $z_c$ & -2.32 & -1.89 & -1.70 & -1.47 & -1.12 & -0.67 & 2.28
				 \\[1ex]
				\hline
				\hline
			\end{tabular}			
		\end{subtable}
	\end{center}	
	
	\vspace{2mm}
	\begin{spacing}{1}
		{\footnotesize \textit{Notes:} This table provides information about the calibrated parameters of the student utility model. The upper portion of this table presents the \textit{horizontal components} $\theta$ of students' utility, with the rows corresponding to the student's college and the columns corresponding to the student's year of study. Our normalization ensures that the sum of the parameters in each row weighted with the share of student population across years equals zero. The middle portion presents \textit{interaction terms} of matrix $\Gamma$. Each non-zero interaction coefficient $\gamma_{aa'},a,a'\in {\cal A}$ corresponds to the increase in student's utility from taking a course in college $a'$ (column) compared to a course in student's college $a$ (row) with the same vertical component. Note that we normalize $\gamma_{aa}=0,a\in {\cal A}$. The bottom portion presents some quantiles of \textit{vertical components} $z$. The interpretation of $z_c$ is the average popularity of the course among students within the course's college.}
	\end{spacing}	
\end{table}

\renewcommand{\theequation}{D\arabic{equation}} 
\setcounter{equation}{0}

\renewcommand{\thelemma}{D\arabic{lemma}}
\setcounter{lemma}{0}

\renewcommand{\thetab}{D.\arabic{tab}}
\setcounter{tab}{0}

\renewcommand{\thedefinition}{D\arabic{definition}}
\setcounter{definition}{0}

\renewcommand{\theproposition}{D\arabic{propopsition}} 
\setcounter{proposition}{0}

\renewcommand{\thefig}{D\arabic{fig}} 
\setcounter{fig}{0}

\renewcommand{\thetheorem}{D\arabic{theorem}} 
\setcounter{theorem}{0}

\renewcommand{\theexample}{D\arabic{example}} 
\setcounter{example}{0}

\counterwithin{figure}{section}

\newpage
\section{Additional Simulations}
\label{sec:appendix-additional-simulations-results}

\counterwithin{figure}{section}
This appendix provides additional simulation results. First, for the simulations conducted in the main text, we present additional results on the equilibrium outcomes of the Pseudo-Market with Priorities (PMP) mechanism (Section \ref{subsec:appendix-pmp-prices}) and compare performance with some alternative mechanisms (Section \ref{subsec:appendix-additional-sumulations-other-mechanisms}). Second, we supplement the main text with simulation results with a department-first priority structure (Section \ref{subsec:appendix-additional-sumulations-dept-first}). Last, we report robustness checks by considering several sizes for student choice sets and the standard deviation of the noise utility parameter (Section \ref{subsec:appendix-additional-simulations-comparative-statics}).

\subsection{Outcomes of the Pseudo-Market with Priorities Mechanism}
\label{subsec:appendix-pmp-prices}

First, we present statistics on the equilibrium prices in the PMP mechanism for the simulations reported in Section \ref{subsec:simulation-results}. The top part of Table \ref{tab:pmp-prices} provides information about where the cutoff levels of priority are across courses and how high the prices are for students at the cutoff level of priority. The bottom part provides information about the percentage of students within each year of study allocated different numbers of courses with positive prices.

Around $70.8\%$ of courses have their cutoff priority at level $1$. The large share of courses at the lowest priority level reflects the non-binding enrollment of most courses in the university data, where just $11.2\%$ of courses are at or above their maximum enrollment capacity (see Table \ref{tab:binding-capacity} on p. \pageref{tab:binding-capacity}). Above level $1$, there is a noticeable difference in the percentage of courses with odd and even cutoff levels of priority. Substantially more courses have odd cutoff levels ($8.0\%$, $8.5\%$, $5.9\%$ for levels $3$, $5$, and $7$, respectively) than even ones ($2.1\%$, $2.6\%$, $1.7\%$, $0.5\%$ for levels $2$, $4$, $6$, and $8$). The hybrid priority structure we utilize places students at odd priority levels in a course if they are ineligible for any of its reserved seats. The students eligible for a course's reserved seats typically only comprise a small subset of the interested students. Furthermore, some courses do not reserve any seats. As a result, the cutoff level of priority often ends up at an odd level rather than an even one.

\begin{table}[h!]
	\begin{center}
		\footnotesize
		
		\vspace{-1mm}
		 \refstepcounter{tab}\label{tab:pmp-prices}
		\caption{Price statistics for the Pseudo-Market with Priorities mechanism.}		
		\begin{tabular}{cccccccccc}
			\hline
			\hline\\[-2mm]
			\multicolumn{9}{l}{\bf Cutoff levels of priority and their average prices} \\[2mm]
			\hline\\[-2mm]
			\makecell[lc]{Cutoff level of priority}& $1$ & $2$ & $3$ & $4$ & $5$ & $6$ & $7$ & $8$\\[1mm]
			\hline\\[-1ex]
			\makecell[lc]{\% of courses}& 70.8\% & 2.1\% & 8.0\% & 2.6\% & 8.5\% & 1.7\% & 5.9\% & 0.5\%\\[1mm]
			\makecell[lc]{Average cutoff price}& $0.06$ & $0.76$ & $0.83$ & $0.75$ & $0.88$ & $0.80$ & $0.94$ & $0.81$\\[2mm]

			\hline\\
			\multicolumn{9}{l}{\bf Courses with positive prices taken by students}\\[1ex]
			\hline\\[-2mm]
			\# of courses paid for &&& Year 1 & Year 2 & Year 3 & Year 4\\[2mm]
			\hline\\[-2mm]
			0  &&& 42.6\% & 36.7\% & 44.0\% & 59.01\% \\[1mm]
			1  &&& 45.2\% & 51.2\%  & 45.4\% & 37.0\% \\[1mm]
			2  &&& 10.9\% & 10.9\% & 9.7\% & 3.8\% \\[1mm]
			3  &&& 1.2\% & 1.1\% & 0.9\% & 0.2\% \\[1mm]
			4  &&& 0.05\% & 0.06\% & 0.03\% & 0.006\% \\[1mm]
			5  &&& 0.001\% & 0.002\% & 0.001\% & 0\% \\[2mm]
			\hline
			\hline			
		\end{tabular}
	\end{center}

\vspace{0mm}
{\footnotesize \textit{Notes:} This table presents statistics on the equilibrium prices in the Pseudo-Market with Priorities mechanism. Results are averages across $100$ runs with different random component draws.}	
\end{table}

Among courses with a cutoff priority level of $1$, the average price at the cutoff is just $0.06$. As many courses are free for all students, the average cutoff price for these courses is considerably lower than for courses with a higher cutoff level. For higher cutoff levels, the average cutoff prices lie between $0.75$ and $0.94$, with higher prices for odd cutoff levels. Student budgets are assigned in $[1, 1.25]$, meaning that, at the average cutoff price, each student can afford one but not two courses for which they are at the cutoff level of priority. Still, students manage to take $k = 5$ courses by taking courses for which they are above the cutoff level of priority, where they face a priority-specific price of $0$.

\label{year-4-prices}The bottom part of Table \ref{tab:pmp-prices} further explores the number of courses students pay a positive price for. Across all years of study, at least $36\%$ of students obtain all their course seats for free, and fewer than $13\%$ of students pay a positive price for more than one course. Many courses have a zero price for all students, leading some students to obtain their most preferred course schedule for free. In addition, students often cannot afford a second course with a positive price after obtaining one.

These findings differ across years of study. Less than $7\%$ of cutoff levels of priority are $7$ or $8$, implying that Year $4$ students face zero price for the vast majority of courses. Almost $60\%$ of these students pay for none of their assigned courses, with about $4\%$ obtaining multiple course seats with positive prices. 

In earlier years of study, a larger percentage of students pay for course seats. Under $45\%$ of first- through third-year students obtain all their course seats for free, with more than $10\%$ needing to pay for multiple courses in each year.

\begin{table}[t!]
    \begin{center}
        \footnotesize
        
        \vspace{-1mm}
        \refstepcounter{tab}\label{tab:market-clearing}
        \caption{Over-enrolled courses in the Pseudo-Market with Priorities mechanism.}
        \begin{tabular}{p{0.35\textwidth}>{\centering}m{0.07\textwidth}>{\centering}m{0.07\textwidth}>{\centering}m{0.07\textwidth}>{\centering}m{0.07\textwidth}>{\centering\arraybackslash}m{0.07\textwidth}}
            \hline
            \hline\\[-1ex]
            \makecell[lc]{\# seats above max capacity}& $\geq 1$ & $\geq 2$ & $\geq 3$ & $\geq 4$ & $\geq 5$\\[2ex]
             \hline\\[-1ex]
             \makecell[lc]{\% of courses}& $2.5$ & $0.4$ & $0.1$ & $0.02$ & $0.0$ \\[2ex]
             \hline
             \hline
         \end{tabular}
     \end{center}
     
     \vspace{0mm}
     {\footnotesize \textit{Notes:} This table provides information about the distribution of over-enrolled courses in the Pseudo-Market with Priorities mechanism. Results are averages across 100 runs with different random component draws.}
\end{table}

\label{subsec:appendix-pmp-errors}
We also present results on course over-enrollment in the PMP mechanism's outcomes. The average market-clearing error in our simulations is $21.4$, less than half of the theoretical worst-case bound of $\alpha \approx 43.5$. Table \ref{tab:market-clearing} reports that $2.5\%$, or roughly 19 courses, are over-enrolled by one seat. In addition, $0.4\%$ (3 courses) are over-enrolled by two seats, and $0.12\%$ ($<$1 course) are over-enrolled by at least three seats. No courses are over-enrolled by more than four seats. In contrast, the university's original allocation has $7.3\%$ of courses over-enrolled, with $1.9\%$ over-enrolled by at least five seats (see Table \ref{tab:binding-capacity}). A lower market-clearing error and over-enrollment are possible by allowing a longer runtime for our algorithm.

\vspace{0mm}
\subsection{Alternative Course Allocation Mechanisms}
\label{subsec:appendix-additional-sumulations-other-mechanisms}

Our main simulation results compare the Pseudo-Market with Priorities mechanism and two variants of the Deferred Acceptance mechanism, each of which \textit{respects the course priority structure}. That is, the outcomes of these mechanisms contain no student who wants to enroll in a course that assigns a seat to a student of strictly lower priority.

In turn, the Random Serial Dictatorship with course reserves (RSD) benchmark does not respect any priority structure. The cause of such violations is the presence of course reserves that can be interpreted as \textit{seat-specific priorities}. Each of a course's reserved seats has a priority order over students that places reserve-eligible students above others in the same year of study. Unreserved seats only prioritize students by year of study. Other mechanisms in the literature have seat-specific priorities. For example, the Deferred Acceptance with minority reserves mechanism introduced in \cite{hafalir2013effective} is a modification of the Deferred Acceptance mechanism with seat-specific priorities.

The main objective of this section is to compare the performance of the PMP mechanism with six alternative mechanisms. Four of these mechanisms also have \textit{seat-specific priorities} (DA-STB and DA-MTB with minority reserves, RSD with university course reserves, Probabilistic Serial with Seniority and Reserves), one \textit{does not respect any priorities} (A-CEEI), and one \textit{respects priorities in a kludgy way} (A-CEEI with kludgy pricing). 

\noindent {\bf Deferred Acceptance (DA) with Minority Reserves with single and multiple tie-breakings.} The Deferred Acceptance with minority reserves mechanism was proposed by \cite{hafalir2013effective} to address affirmative action policies in school choice. This mechanism is similar to the Deferred Acceptance mechanism, but it allows each school to have a number of reserved seats designated for affirmative action students. These seats give higher priority to minority students up to the point that minorities fill the reserves. We extend this mechanism to a many-to-many matching setting with indifferences, using the optimal course reserves discussed in Section \ref{subsec:mechanisms}. We consider two cases: where indifferences are broken with a single tie-breaking rule and where indifferences are broken with multiple tie-breaking rules, one for each course.\footnote{We do not claim that the theoretical properties of \cite{hafalir2013effective} extend to many-to-many settings with indifferences. We leave the analysis of this important question for further research.}

\vspace*{2mm}
\noindent {\bf Random Serial Dictatorship (RSD) with Actual Course Reserves.} Course reserves are typically set large at the beginning of the course allocation process at major US universities and relaxed at later stages. In the main text, motivated by this observation, we estimate the optimal reserved seats for each course in the RSD mechanism. Here, we run the same mechanism, using the university's actual course reserves.

\vspace*{2mm}
\noindent {\bf Approximate Competitive Equilibrium from Equal Incomes (A-CEEI).} This mechanism was originally proposed by \cite{Budish2011} to allocate courses in business schools. It is a special case of the PMP mechanism in which all courses place all students at the same priority level, leading each course to have a single price. It shares many favorable theoretical properties with the PMP mechanism.

\vspace*{2mm}
\noindent {\bf Approximate Competitive Equilibrium from Equal Incomes with Kludgy Pricing (A-CEEI with Kludgy Pricing).} This variant of the A-CEEI mechanism aims to satisfy course priorities by using ``kludgy pricing''. As in the A-CEEI mechanism, this mechanism uses one price $p_c$ per course $c$. However, students receive a price discount depending on their priority. A student $s$ at level of priority $r_{s,c}$ in course $c$ faces price $p_c \cdot \left(1-(r_{s,c}-1)/R\right)$, where $R$ is the largest possible priority level. This pricing ensures that students always face a lower price than students at a lower level of priority but leaves open the possibility of a lower-priority student taking a seat when a higher-priority student cannot afford one.

\vspace*{2mm}
\noindent {\bf Probabilistic Serial (PS) with Seniority and Reserves.} The PS mechanism was proposed by \cite{bogomolnaia2001new} to randomly assign objects to agents in settings with single-unit demand for objects and single-unit supply of object types. \cite{kojima2009random} extended the mechanism to settings with multi-unit demand for objects, and \cite{budish2013designing} extended the mechanism to settings with multi-unit supply of objects. The undergraduate course allocation setting requires both multi-unit demand and multi-unit supply of courses. The only paper that considers both multi-unit demand and multi-unit supply of courses with preferences over individual courses is an unpublished paper by \cite{pycia2011}.\footnote{\cite{nguyen2016assignment} also consider multi-unit demand and multi-unit supply of objects but consider agent preferences over bundles of objects. They propose a version of a probabilistic serial mechanism called the Bundled Probabilistic Serial mechanism that has been successfully implemented at the Technical University of Munich by \cite{bichlerrandomized}. One of the versions of their probabilistic serial mechanism could also be applied to settings with agent preferences over individual objects.} The setting of undergraduate course allocation is further complicated by the presence of student seniorities and course reserves. We define an extension of the probabilistic serial mechanism called \textit{Probabilistic Serial with Seniority and Reserves}, using the optimal course reserves discussed in Section \ref{subsec:mechanisms}.

In this mechanism, each course seat is regarded as a divisible object of probability shares. We start with the most senior students (e.g., Year 4), each of whom ``eats'' with speed one the most preferred \textit{eligible} course seat that is available. Each student is eligible for unreserved seats and each reserved seat specifying her department and year of study. We assume that students first consume reserved course seats before consuming unreserved ones of a particular course. If some reserved/unreserved seats are exhausted, or some student has consumed a full probability share of a certain course's seat, we stop and redetermine the student's most preferred eligible course to eat next. This proceeds until time $k$, where the most senior students have consumed up to $k$ courses. We then repeat the procedure with the students of the next seniority level (e.g., Year 3, Year 2, and Year 1). The resulting profile of shares of course seats eaten by agents by time $4k$ corresponds to a random assignment.

\renewcommand{\arraystretch}{1.1} 
\newcommand{\cawidth}{0.108\textwidth}
\newcolumntype{C}[1]{>{\centering\arraybackslash}p{#1}}

\begin{figure}[t!]
	\begin{center}
	
	\vspace{-1mm}
	\includegraphics[width=0.73\textwidth]{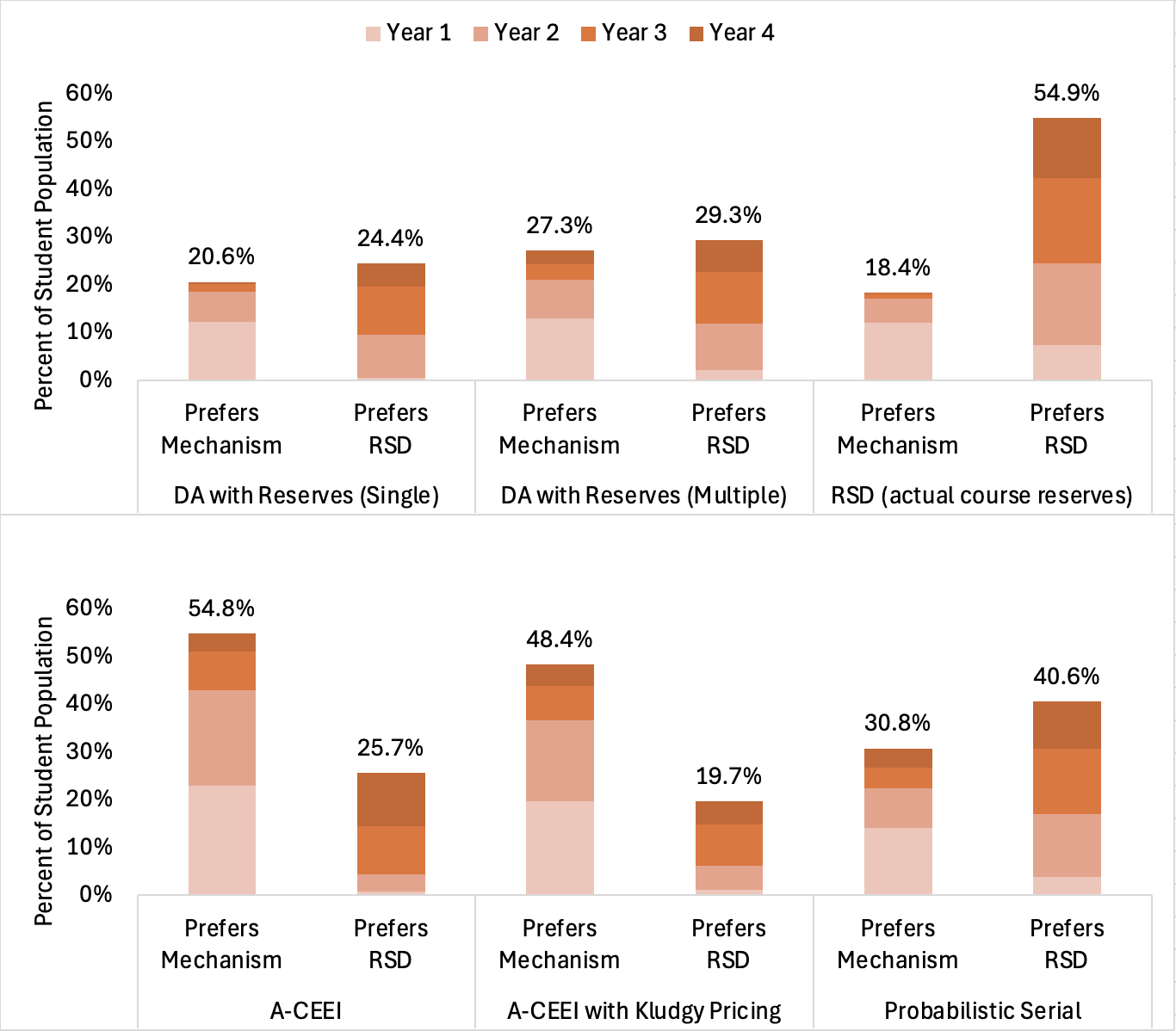}
	\caption{Each student's preferred mechanism for the alternative mechanisms.}
	\label{fig:preferred-mechanism-appendix-mechanisms}
	
	\end{center}
	
	\vspace{-2mm}
	{\footnotesize \textit{Notes:} This figure reports the percent of students who strictly prefer each alternative mechanism to the \textit{Random Serial Dictatorship with optimal course reserves} benchmark and the percent of students who strictly prefer the benchmark. Results are averages across 100 runs with different random component draws.}

\end{figure}

\vspace{2mm}
{\bf Simulation results}. Our simulation results for the six alternative mechanisms are presented below. Figure \ref{fig:preferred-mechanism-appendix-mechanisms}, as in Figure \ref{fig:preferred-mechanism} in the main text, reports the percentage of students who prefer the allocation of a given mechanism to the benchmark of the Random Serial Dictatorship with optimal course reserves and the percentage of students who prefer the benchmark. Figure \ref{fig:utility-all-changing-students-appendix-mechanisms}, as in Figure \ref{fig:utility-all-changing-students}, reports the percent improvement in mean utility among students with changing schedules for the six mechanisms.\footnote{On average, the DA-STB with minority reserves mechanism changes schedules for 775, 923, 698, and 319 students in Years 1, 2, 3, and 4, respectively; the DA-MTB with minority reserves mechanism does so for 918, 1071, 845, 576 students; the RSD mechanism with actual course reserves does so for 1176, 1131, 1135, and 774 students; the A-CEEI mechanism does so for 1437, 1413, 1102, and 898 students; the A-CEEI with kludgy pricing mechanism does so for 1256, 1131, 951, and 562 students; and the PS mechanism does so for 1078, 1302, 1080, and 838 students.}  Figure \ref{fig:st-dev-appendix-mechanisms}, as in Figure \ref{fig:st-dev-PMP-DA-DAm}, presents the change in the standard deviation of student utility for each mechanism by year of study. Table \ref{tab:appendix-envy-other-mechamisms}, parallel to Table \ref{tab:student-envy}, reports students who experience schedule envy towards students of the same priority or lower priority. Table \ref{tab:appendix-priority-violations-other-mechamisms} presents the percentage of students who experience a \textit{priority violation}: who want to enroll in a course being taken by a student of strictly lower priority. The PMP, DA-STB, and DA-MTB mechanisms studied in the main text each have no priority violations.

\begin{figure}[t!]
	\begin{center}
	
	\vspace{-1mm}
	\includegraphics[width=0.73\textwidth]{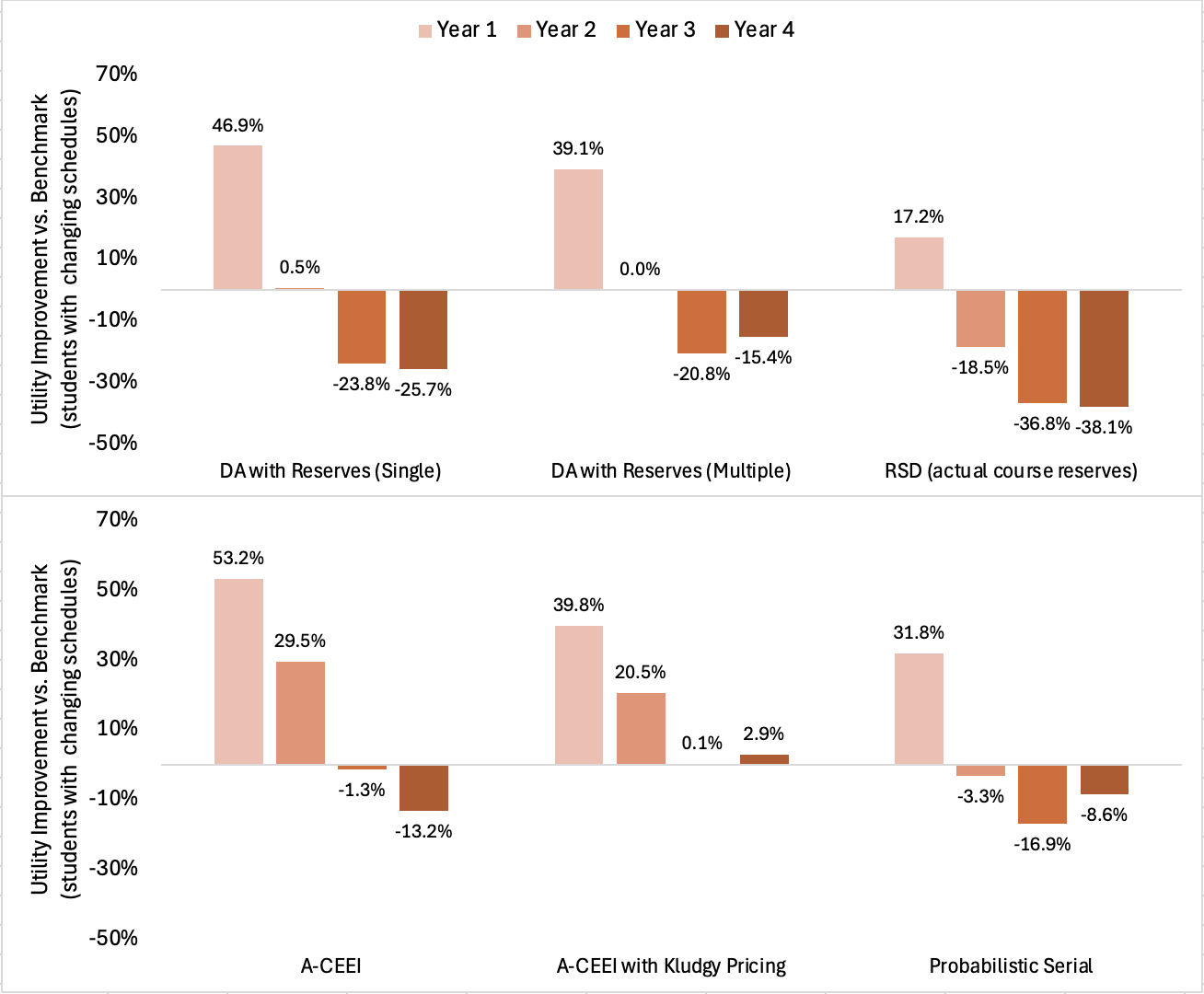}
	\caption{Improvements in utility among students with changing schedules for the alternative mechanisms.}
	\label{fig:utility-all-changing-students-appendix-mechanisms}
	
	\end{center}
	
	\vspace{-2mm}
	  {\footnotesize \textit{Notes:} This figure reports the percent improvement in mean utility among students with changing schedules for each alternative mechanism over the \textit{Random Serial Dictatorship with optimal course reserves} benchmark. Results are averages across $100$ runs with different random component draws.}

\end{figure}

\vspace{2mm}
\label{DA with minority reserves comparison}
\textbf{The DA with minority reserves (with single and multiple tie-breakings)} mechanisms each have a smaller percentage of students who prefer the mechanism than students who prefer the benchmark. This negative result contrasts with the result for their deferred acceptance counterparts, as studied in Section \ref{sec:simulations}. These mechanisms only benefit Year 1 students, who significantly improve in mean utility ($39.1\% - 46.9\%$ among students with changing schedules). These results are a direct consequence of the presence of course reserves, which ensure better access to course seats for junior students. Course reserves do not provide equal access to seats, leading to an increase in the standard deviation of student utility for junior students ($6.0\%-14.7\%$) compared to the RSD benchmark. We see similar underperformance in other measures of allocation fairness. In both mechanisms, more than $20\%$ of students experience schedule envy toward a student of weakly lower priority. Furthermore, more than $40\%$ of students face a priority violation, indicating that many students can benefit from a course seat assigned to a student of lower priority.

\begin{figure}[t!]
	\begin{center}
	
	\vspace{-1mm}
	\includegraphics[width=0.73\textwidth]{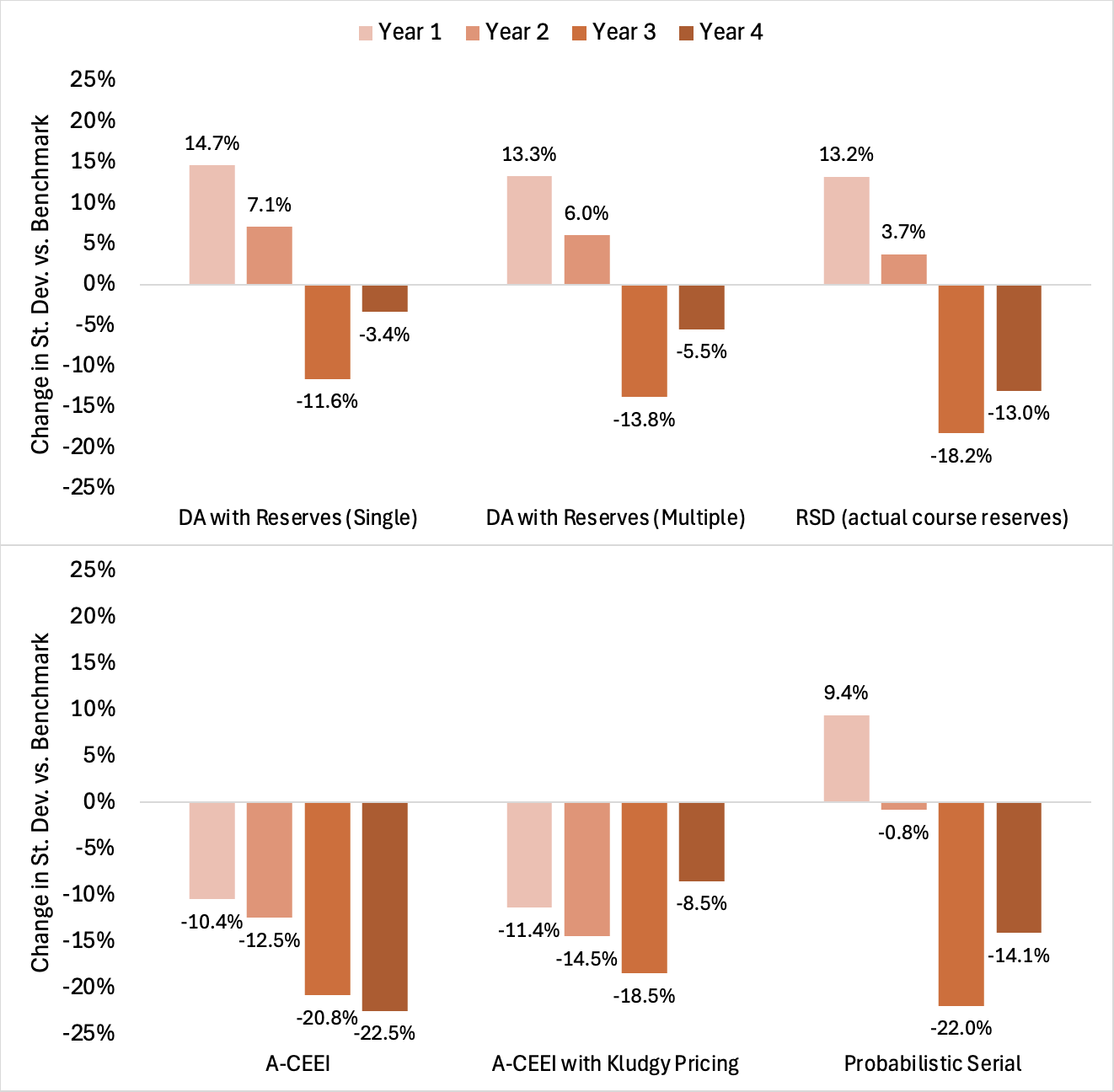}
	\caption{The change in the standard deviation of students' utilities for the alternative mechanisms.}
	\label{fig:st-dev-appendix-mechanisms}
	
	\end{center}
	
	\vspace{-2mm}
	{\footnotesize \textit{Notes:} This figure reports the percent change in the standard deviation of students' utilities for each alternative mechanism compared to the \textit{Random Serial Dictatorship with optimal course reserves} benchmark. Results are averages across $100$ runs with different random component draws.}     
	
\end{figure}
\vspace{2mm}
{\bf The RSD with actual course reserves} mechanism performs quite poorly compared to the benchmark. More than $50\%$ of students strictly prefer the RSD mechanism with optimal course reserves over the RSD mechanism with actual course reserves. Using the actual course reserves only improves outcomes for first-year students, who have greater access to reserved seats. The mechanism underperforms in terms of allocation fairness as well, with $27.8\%$ of students preferring the schedule of a student with a weakly lower priority in each course and with $59\%$ of students experiencing priority violations. These results justify why we consider the RSD mechanism with optimal course reserves as the benchmark mechanism rather than the RSD mechanism with actual course reserves. Using the actual course reserves presents an unfair comparison, as the university sets reserves large at the beginning of the process and relaxes them later. This results in an allocation of courses to students, which, to a large extent, happens after the main registration process in practice.  

\begin{table}[t!]
	\begin{center}
		\footnotesize
		
		\vspace{-1mm}
		
		\refstepcounter{tab}
		\label{tab:appendix-envy-other-mechamisms}
		\caption{The percentage of students who experience schedule envy for the alternative mechanisms.}
	
		\vspace{0mm}
		\begin{tabular}{m{0.22\textwidth}>{\centering}m{0.09\textwidth}>{\centering\arraybackslash}m{0.09\textwidth}>{\centering\arraybackslash}m{0.09\textwidth}>{\centering\arraybackslash}m{0.11\textwidth}>{\centering\arraybackslash}m{0.11\textwidth}>{\centering\arraybackslash}m{0.11\textwidth}}
			\hline
			\hline\\[-3mm]
			& 0 courses & 1 course & 2 courses & 3 courses & 4 courses & 5 courses\\[2mm]
			\hline\\[-3mm]
			\makecell[lc]{DA with minority\\reserves (single)} & 78.9 (0.5) & 18.0 (0.5) & 2.7 (0.2) & 0.4 (0.2) & 0.02 (0.03) & 0.0007 (0.01) \\[2mm]
			\hline\\[-3mm]
			\makecell[lc]{DA with minority\\reserves (multiple)} & 79.5 (0.5) & 18.9 (0.6) & 1.6 (0.2) & 0.1 (0.03) & 0.003 (0.01) & 0.0 (0.0) \\[2mm]
			\hline\\[-3mm]
			\makecell[lc]{RSD with\\actual course reserves} & 72.2 (0.5) & 23.9 (0.5) & 3.6 (0.2) & 0.3 (0.1) & 0.01 (0.02) & 0.0 (0.0) \\[2mm]
			\hline\\[-3mm]
			\makecell[lc]{A-CEEI} & 95.1 (0.4) & 4.9 (0.4) & 0.0 (0.0) & 0.0 (0.0) & 0.0 (0.0) & 0.0 (0.0) \\[2mm]
			\hline\\[-3mm]
			\makecell[lc]{A-CEEI with \\kludgy pricing} & 96.3 (0.3) & 3.7 (0.3) & 0.0 (0.0) & 0.0 (0.0) & 0.0 (0.0) & 0.0 (0.0) \\[2mm]
			\hline\\[-3mm]
			\makecell[lc]{Probabilistic Serial\\ with reserves} & 89.1 (0.6) & 9.9 (0.6) & 0.9 (0.2) & 0.05 (0.03) & 0.0007 (0.003) & 0.0 (0.0) \\[2mm]
			\hline
			\hline
	\end{tabular}

	\end{center}
		
	\vspace{0mm}
	\begin{spacing}{1}
		{\footnotesize \textit{Notes:} This table reports the percentage of students who prefer the schedule of a student with weakly lower priority in each course in each of the alternative mechanisms. The first column includes students who experience no envy. The other columns display students who experience schedule envy bounded by $1, ..., 5$ courses. Results are averages across 100 runs with different random component draws.}		
	\end{spacing}
\end{table}

\vspace{2mm}
{\bf The A-CEEI and A-CEEI with Kludgy Pricing} mechanisms do not strictly enforce course priorities through prices, unlike the PMP mechanism. These relaxed restrictions result in more course seats available to junior students (Years 1 and 2), who overwhelmingly prefer each mechanism to the RSD benchmark (see the bottom section of each bar in Figure \ref{fig:preferred-mechanism-appendix-mechanisms}). While the results are reversed for senior students (Years 3 and 4), many more students overall prefer each of these mechanisms ($48.4\% - 54.8\%$) than the number of students who prefer RSD ($19.7\% - 25.7\%$). Both mechanisms reduce the standard deviation of student utility across all years of study ($8.5\%-22.5\%$). This decrease is substantially larger than that observed in Section \ref{sec:simulations} for the PMP mechanism, suggesting that the relaxed priority constraints allow students to receive more equal access to courses. Similarly, over $95\%$ of students experience no envy toward a student of weakly lower priority, with schedule envy bounded by a single course in each mechanism. However, without these constraints, priority violations are common in both mechanisms, with $61.1\%$ of students in the A-CEEI mechanism and $44.6\%$ of students in the A-CEEI with Kludgy Pricing mechanism whose schedule would benefit from a course seat taken by a student of strictly lower priority. Additionally, the average time to calculate the allocation for the A-CEEI and A-CEEI with Kludgy Pricing mechanisms (2003 and 1894 seconds, respectively) is almost twice that of the PMP mechanism (1112 seconds).

\begin{table}[t!]
	\begin{center}
		\refstepcounter{tab}
		\label{tab:appendix-priority-violations-other-mechamisms}
				
		\vspace{-1mm}
		\footnotesize
		\caption{The percentage of students with a priority violation for the alternative mechanisms.}
		\vspace{1mm}		
		\centering
		\begin{tabular}{C{0.168\textwidth}C{0.174\textwidth}C{0.137\textwidth}C{0.078\textwidth}C{0.135\textwidth}C{0.144\textwidth}}
				\hline
				\hline\\[-2mm]
				\makecell[cc]{DA with minority\\reserves (single)} & \makecell[cc]{DA with minority\\reserves (multiple)} & \makecell[cc]{RSD with \\ actual reserves} & \makecell[cc]{A-CEEI\\}&\makecell[cc]{A-CEEI with\\ kludgy pricing} & \makecell[cc]{PS w. seniority\\ and reserves}\\[4mm]
				\hline\\[-2mm]
				46.1\% & 47.3\% & 59.0\% & 61.1\% & 44.6\% & 81.0\%\\[2mm]
				\hline
				\hline
			\end{tabular}
	\end{center}
	
	\vspace{0mm}
	\begin{spacing}{1}
		{\footnotesize \textit{Notes:} This table reports the percentage of students whose course schedule can be improved by taking a seat occupied by a student of strictly lower priority for each of the alternative mechanisms. Results are averages across $100$ runs with different random component draws.}
	\end{spacing}		
\end{table}

\vspace{2mm}
{\bf The PS mechanism}\label{PSSR:start} is the only mechanism considered here that produces \textit{random assignments.} Random mechanisms are more flexible, allowing the assignment of probability shares of course seats to students rather than full seats. The PS mechanism, however, delivers more beneficial outcomes only to Year $1$ students. More second- through fourth-year students prefer their schedule in the RSD benchmark than the number of students who prefer their PS outcome. These outcomes are also reflected in utility improvements among students with changing schedules. There is a $31.8\%$ improvement in average utility for Year 1 students. In contrast, all the other student cohorts experience a decrease in average utility (see Figure \ref{fig:utility-all-changing-students-appendix-mechanisms}). In terms of allocation fairness, the probabilistic serial mechanism reduces the standard deviation of students' utilities for all years of study and results in just $10.9\%$ of students with envy towards a student of weakly lower priority.\footnote{\cite{kojima2009random} and \cite{budish2013designing} prove that the probabilistic serial mechanism results in no envy in multi-unit demand and multi-unit supply settings, respectively. This result does not extend to settings where agents demand multiple units of objects and objects are supplied in multiple units simultaneously.} In doing so, the mechanism results in a large percentage of students who experience priority violation ($84.6\%$). This can be explained, in part, by the flexibility of the mechanism. We count a student as experiencing a priority violation if there is a positive probability share of a course that she prefers to some assigned (possibly empty) probability share that is instead assigned to some student at a strictly lower priority level. This matches the definition of ex-ante stability proposed in \cite{kesten2015theory}. We also consider envy in terms of the ex-ante assignment, not an ex-post assignment. Hence, the results are not directly comparable.
	\label{PSSR:end}

\subsection{Department-First Priorities}
\label{subsec:appendix-additional-sumulations-dept-first}

Some U.S. universities consider department-specific priorities as more important than those based on year of study. For example, Dartmouth College explicitly states these priorities on its website.\footnote{See \url{https://www.dartmouth.edu/reg/registration/course_priorities.html}.} Here, we present additional simulation results comparing the performance of all considered mechanisms under a \textit{department-first priority structure}, where the department-specific priority takes precedence over the year-specific priority. We modify Example \ref{example:priority-structures} from Section \ref{sec:simulations} to provide an illustration.

\begin{example}
\label{example:priority-structures-dept-first}
	The course reservation for Real Analysis in Table \ref{tab:course-reservations} results in the following department-first priority structure for the course:
\begin{itemize}[itemsep=0cm]
	\item $r_{s, c} = 8$: 4th-year students in Dept 2
	\item $r_{s, c} = 7$: 3rd-year students in Dept 2
	\item $r_{s, c} = 6$: $\varnothing$
	\item $r_{s, c} = 5$: $\varnothing$
	\item $r_{s, c} = 4$: 4th-year students in all departments except Dept 2
	\item $r_{s, c} = 3$: 3rd-year students in all departments except Dept 2
	\item $r_{s, c} = 2$: 2nd-year students in all departments
	\item $r_{s, c} = 1$: 1st-year students in all departments
\end{itemize}	
\end{example}

\renewcommand{\arraystretch}{1.1} 
\begin{figure}[t!]
	\begin{center}
	
	\vspace{-1mm}
	
	\includegraphics[width=0.73\textwidth]{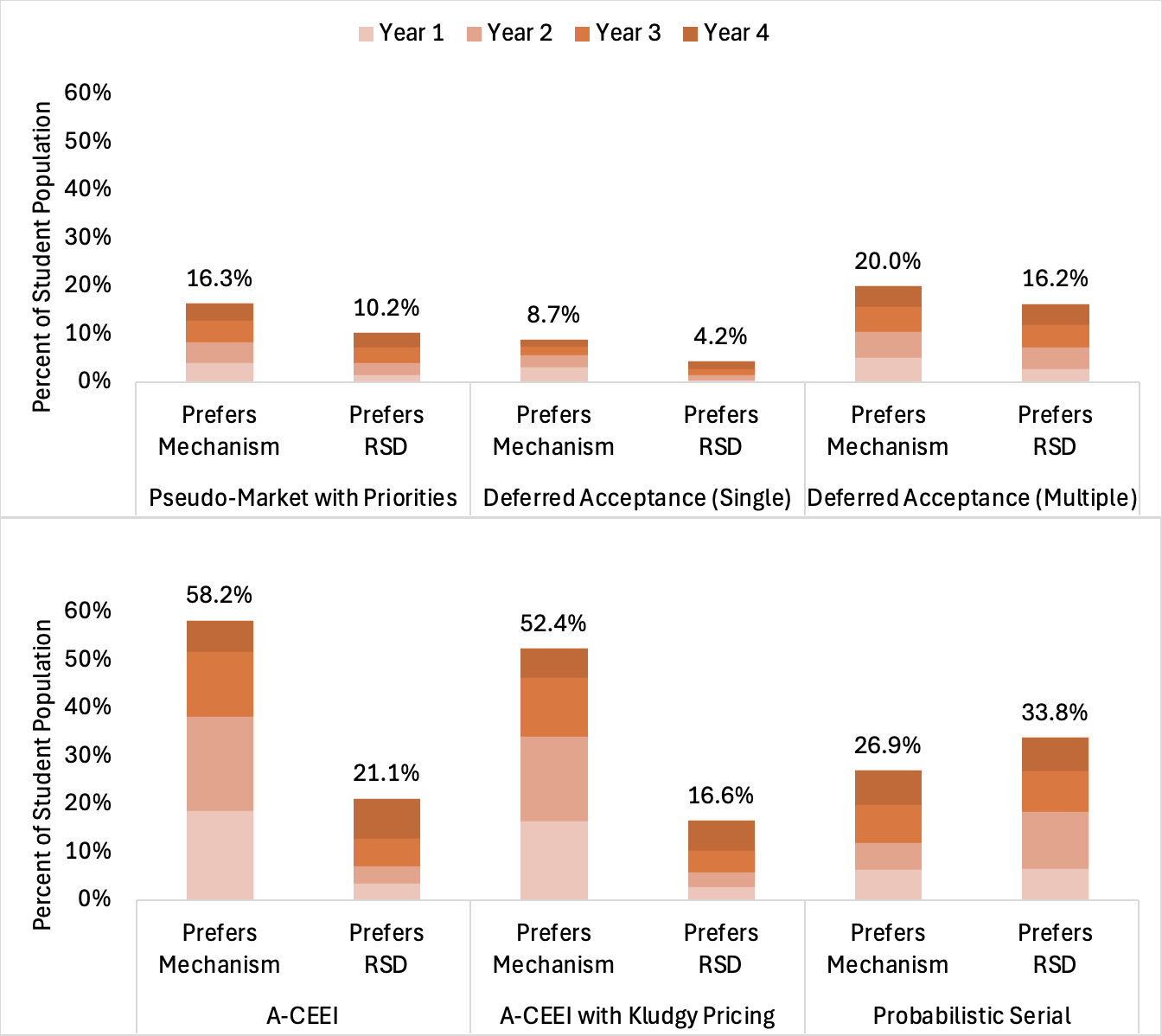}
	\caption{Each student's preferred mechanism under the department-first priority structure.}
	\label{tab:appendix-preferred-mechanism-other-mechamisms-dept-first}
	
	\end{center}
	
	\vspace{-2mm}
	 {\footnotesize \textit{Notes:} This figure reports the percent of students who strictly prefer each given mechanism to the \textit{Random Serial Dictatorship with optimal course reserves} benchmark and the percent of students who strictly prefer the benchmark under the department-first priority structure. Results are averages across 100 runs with different random component draws.}         
	
\end{figure}

With a department-first priority structure, the Deferred Acceptance mechanism coincides with the Deferred Acceptance mechanism with minority reserves of \cite{hafalir2013effective}. As in Example \ref{example:priority-structures-dept-first}, this is because students eligible for reserved seats receive the highest priority for all course seats. Hence, we only consider the Deferred Acceptance mechanisms with single and multiple tie-breakings in our simulations. We also drop the analysis of Random Serial Dictatorship with actual course reserves, as its performance is subpar compared to all the other mechanisms. The results of our simulations are presented in Figures \ref{tab:appendix-preferred-mechanism-other-mechamisms-dept-first} -- \ref{tab:appendix-st-dev-other-mechanisms-dept-first} and Tables \ref{tab:appendix-envy-other-mechamisms-dept-first} -- \ref{tab:appendix-priority-violations-other-mechamisms-dept-first}.

\begin{figure}[t!]
	\begin{center}
	
	\vspace{-1mm}
	\includegraphics[width=0.73\textwidth]{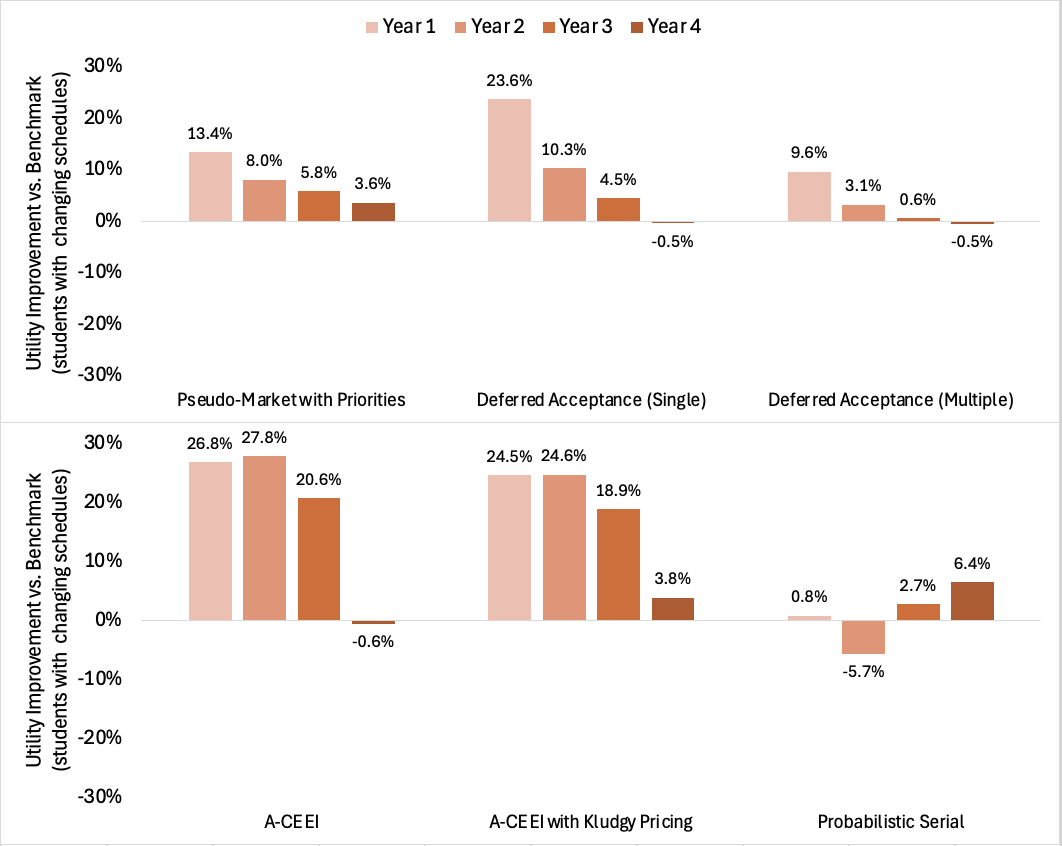}
	\caption{Improvements in utility among students with changing schedules under the department-first priority structure.}
	\label{fig:utility-all-changing-students-department-first}
	
	\end{center}
	
	\vspace{-2mm}
	 {\footnotesize \textit{Notes:} This figure reports the percent improvement in mean utility among students with changing schedules for each given mechanism over the \textit{Random Serial Dictatorship with optimal course reserves} benchmark under the department-first priority structure. Results are averages across $100$ runs with different random component draws.}

\end{figure}

Figure \ref{tab:appendix-preferred-mechanism-other-mechamisms-dept-first} displays the results on each student's preferred mechanism under the \textit{department-first priority structure}. We can observe that an aggregate relative comparison is similar to the results for the \textit{hybrid priority structure} (Figures \ref{fig:preferred-mechanism} and \ref{fig:preferred-mechanism-appendix-mechanisms}). In particular, there is a larger percentage of students that strictly prefer each of the given mechanisms than students who prefer the RSD benchmark. The only exception is the PS mechanism, which is consistent across both priority structures.

As opposed to the \textit{hybrid priority structure}, the department-first priority structure has all students eligible for reserved seats prioritized above more senior but ineligible students. For PMP, DA-STB, and DA-MTB, this preferential treatment benefits more junior students (Years $1$ and $2$) and harms senior students (Years $3$ and $4$). This is illustrated in Figure \ref{fig:utility-all-changing-students-department-first} by the magnitudes in improvements in utility among students with changing schedules.\footnote{On average, the PMP mechanism changes schedules for 317, 414, 466, and 397 students in Years 1, 2, 3, and 4, respectively; the DA-STB mechanism does so for 198, 216, 196, and 169 students; the DA-MTB mechanism does so for 468, 586, 599, and 527 students; the A-CEEI mechanism does so for 1313, 1412, 1149, and 904 students; the A-CEEI with kludgy pricing mechanism does so for 1140, 1254, 1008, and 753 students; and the PS mechanism does so for 772, 1044, 992, and 848 students.} The results are less sharp for the A-CEEI and A-CEEI with kludgy pricing mechanisms. These do not respect course priorities, leading to the benefit being spread across years of study. We also observe the opposite effect for the PS mechanism: an improvement in utility is larger for more senior students compared to those who are more junior.

Figure \ref{tab:appendix-st-dev-other-mechanisms-dept-first} reports each mechanism's change in the standard deviation of students' utilities relative to the RSD benchmark. The largest improvements happen for the A-CEEI, A-CEEI with kludgy pricing, and Probabilistic Serial with seniority and reserves mechanisms. Table \ref{tab:appendix-priority-violations-other-mechamisms-dept-first} shows that this happens at the expense of priority violations, which more than $50\%$ of students experience in each of these three mechanisms. As in the hybrid priority structure results, this is a larger percentage of priority violations than we observe in the benchmark ($32.5\%$). In turn, the PMP, DA-STB, and DA-MTB mechanisms have no priority violations, but perform closer to the RSD benchmark in terms of the standard deviation of students' utilities.

\begin{figure}[t!]
	\begin{center}
	
	\vspace{-1mm}
	\includegraphics[width=0.73\textwidth]{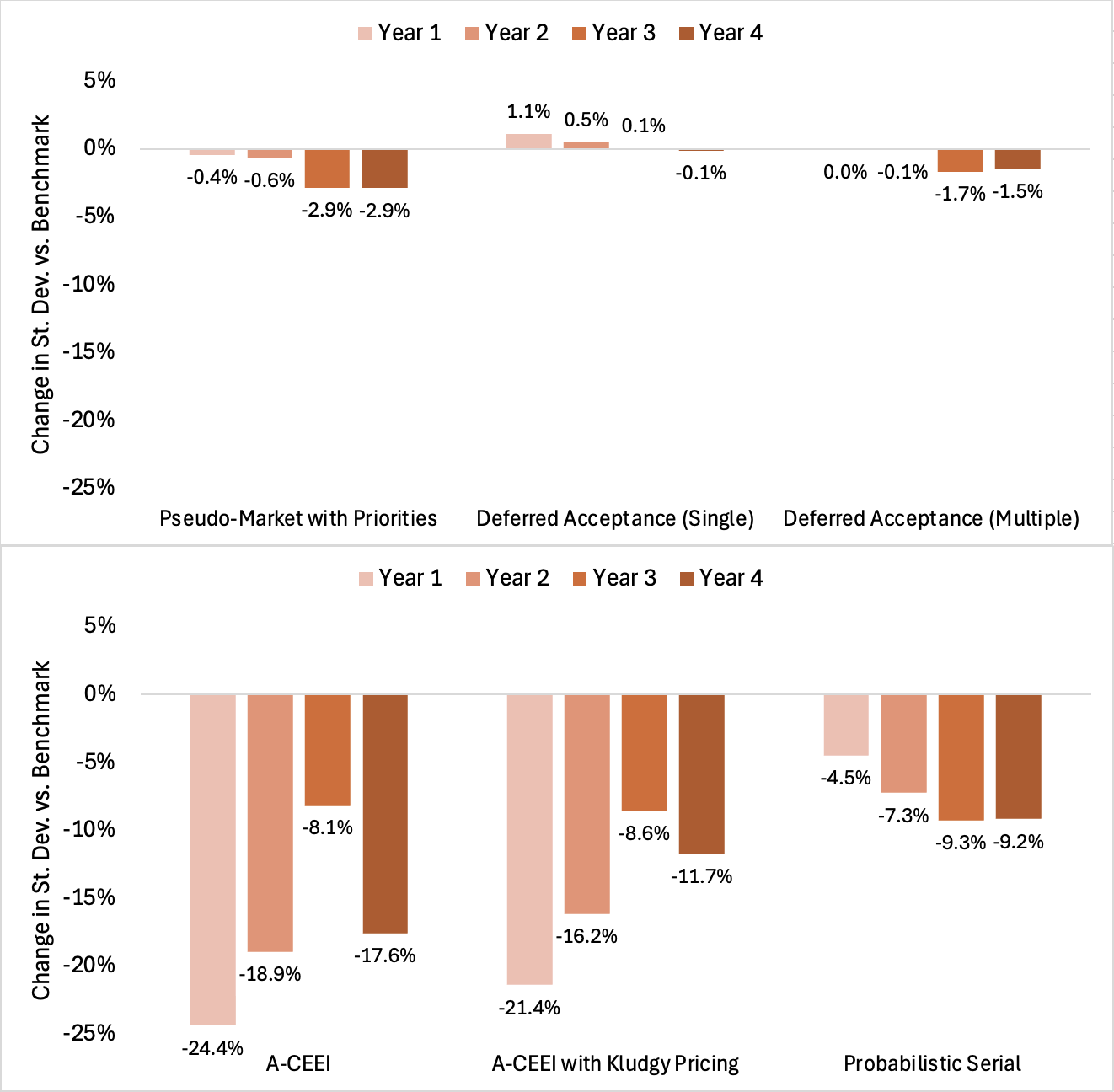}
	\caption{The change in the standard deviation of students' utilities under the department-first priority structure.}
	\label{tab:appendix-st-dev-other-mechanisms-dept-first}
	
	\end{center}
	
	\vspace{-2mm}
    {\footnotesize \textit{Notes:} This figure reports the percent change in the standard deviation of students' utilities compared to the \textit{Random Serial Dictatorship with optimal course reserves} benchmark under the department-first priority structure. Results are averages across 100 runs with different random component draws.}
	
\end{figure}

Table \ref{tab:appendix-envy-other-mechamisms-dept-first} shows that the percentage of students experiencing schedule envy in each mechanism is similar in comparison to the hybrid priority structure (see Tables \ref{tab:student-envy} and \ref{tab:appendix-envy-other-mechamisms}). The PMP, A-CEEI, and A-CEEI with kludgy pricing mechanisms deliver schedule envy bounded by a single course toward students of weakly lower priority, with only $8.2\%$, $4.9\%$, and $4.1\%$ experiencing any envy, respectively. The DA-STB and DA-MTB mechanisms each result in about $13\%$ of students experiencing schedule envy, with about $2\%$ experiencing envy for several courses in the DA-STB mechanism and under $1\%$ in DA-MTB. The RSD with optimal course reserves mechanism has $14.5\%$ of students experiencing schedule envy, with about $2\%$ experiencing envy for several courses. Just $1\%$ of students experience schedule envy in the random assignment of the Probabilistic Serial with seniority and reserves mechanism.

\pagebreak
\renewcommand{\arraystretch}{1} 
\begin{table}[h]
	\begin{center}
		\footnotesize
		\refstepcounter{tab}
		\label{tab:appendix-envy-other-mechamisms-dept-first}
		\caption{The percentage of students who experience schedule envy under the department-first priority structure.}
		
		\vspace{0mm}
		\begin{tabular}{m{0.22\textwidth}>{\centering}m{0.09\textwidth}>{\centering\arraybackslash}m{0.09\textwidth}>{\centering\arraybackslash}m{0.09\textwidth}>{\centering\arraybackslash}m{0.11\textwidth}>{\centering\arraybackslash}m{0.11\textwidth}>{\centering\arraybackslash}m{0.11\textwidth}}
			\hline
			\hline\\[-2mm]
			& 0 courses & 1 course & 2 courses & 3 courses & 4 courses & 5 courses\\[2mm]
			\hline\\[-3mm]
			\makecell[lc]{Pseudo-Market\\with Priorities} & 91.8 (0.8) & 8.2 (0.8) & 0.0 (0.0) & 0.0 (0.0) & 0.0 (0.0) & 0.0 (0.0) \\[2mm]
			\hline\\[-3mm]
			\makecell[lc]{RSD with\\optimal course reserves} & 85.5 (0.6) & 12.4 (0.6) & 1.8 (0.2) & 0.3 (0.1) & 0.02 (0.02) & 0.0004 (0.002) \\[2mm]
			\hline\\[-3mm]
			\makecell[lc]{DA with \\ single tie-breaking} & 86.8 (0.6) & 11.3 (0.6) & 1.6 (0.2) & 0.3 (0.2) & 0.01 (0.02) & 0.0007 (0.007) \\[2mm]
			\hline\\[-3mm]
			\makecell[lc]{DA with \\ multiple tie-breaking} & 87.7 (0.6) & 11.9 (0.6) & 0.4 (0.1) & 0.002 (0.005) & 0.0 (0.0) & 0.0 (0.0) \\[2mm]
			\hline\\[-3mm]
			A-CEEI & 95.1 (0.4) & 4.9 (0.4) & 0.0 (0.0) & 0.0 (0.0) & 0.0 (0.0) & 0.0 (0.0) \\[2mm]
			\hline\\[-3mm]
			\makecell[lc]{A-CEEI with\\kludgy pricing} & 95.9 (0.3) & 4.1 (0.3) & 0.0 (0.0) & 0.0 (0.0) & 0.0 (0.0) & 0.0 (0.0) \\[2mm]
			\hline\\[-3mm]
			\makecell[lc]{Probabilistic Serial with \\ seniority and reserves} & 99.0 (0.2) & 1.0 (0.2) & 0.0 (0.0) & 0.0 (0.0) & 0.0 (0.0) & 0.0 (0.0) \\[2mm]
			\hline
			\hline
		\end{tabular}
	\end{center}
	
	\vspace{0mm}
	\begin{spacing}{1}
		{\footnotesize \textit{Notes:} This table reports the percentage of students who prefer the schedule of a student with weakly lower priority in each course in each mechanism under the department-first priority structure. The first column includes students who experience no envy. The other columns display students who experience schedule envy bounded by $1,...,5$ courses. Results are averages across $100$ runs with different random component draws.}		
	\end{spacing}
\end{table}

\vspace{0cm}
\renewcommand{\arraystretch}{1} 
\begin{table}[b!]
	\begin{center}
		\refstepcounter{tab}
		\label{tab:appendix-priority-violations-other-mechamisms-dept-first}
				
		\footnotesize
		\caption{The percentage of students with a priority violation under the department-first priority structure.}

		\vspace{1mm}		
		\centering
		\begin{tabular}{C{0.05\textwidth}C{0.119\textwidth}C{0.14\textwidth}C{0.15\textwidth}C{0.08\textwidth}C{0.134\textwidth}C{0.144\textwidth}}
				\hline
				\hline\\[-2mm]
				\makecell[cc]{PMP\\} & \makecell[cc]{RSD with\\ opt. reserves} &
				\makecell[cc]{DA with single\\tie-breaking} & \makecell[cc]{DA with mult.\\ tie-breaking} &  \makecell[cc]{A-CEEI}&\makecell[cc]{A-CEEI with\\ kludgy pricing} & \makecell[cc]{PS w. seniority\\ and reserves}\\[4mm]
				\hline\\[-2mm]
				0\% & 32.5\% & 0\% & 0\% & 64.1\% & 53.2\% & 84.8\%\\[2mm]
				\hline
				\hline				
			\end{tabular}
	\end{center}
		
	\vspace{0mm}
	\begin{spacing}{1}
		{\footnotesize \textit{Notes:} This table reports the percentage of students whose course schedule can be improved by taking a seat occupied by a student of strictly lower priority for each mechanism under the department-first priority structure. Results are averages across $100$ runs with different random component draws.}
	\end{spacing}		
\end{table}

\pagebreak
\subsection{Robustness: Student Choice Sets and Noise Term}
\label{subsec:appendix-additional-simulations-comparative-statics}

This section performs some additional robustness checks. We take the calibrated student utility model and consider changes in two parameters: the size of student choice sets ($60$, $70$, $80$, $90$) and the standard deviation of the random component of student utility ($\sigma=0.75,1,1.25,1.5$). Figure \ref{fig:comparative-statics-choice-set-sigma} reports the performance of the PMP mechanism compared to the RSD benchmark. The graphs show the percentages of students who are indifferent (solid line), prefer the PMP mechanism (dashed line), and prefer the RSD benchmark (dotted line). Reported simulation results are based on $50$ random component draws.

The top four graphs show comparative statics results for the size of student choice sets. The percentage of students indifferent between the PMP and RSD mechanisms is relatively constant across all sizes ($68.5\%$ to $76.3\%$), slightly lower for student choice sets of size $90$. A relatively equal percentage of Year $1$ students strictly prefer each mechanism. For Year $2$ through Year $4$ students, a larger percentage of students prefer the PMP mechanism regardless of the size of the choice set. 

The bottom four graphs show comparative statics results for $\sigma$. The percentage of students indifferent between the PMP mechanism and RSD mechanism significantly declines from $\sigma = 0.75$ to $\sigma = 1.5$, with the largest decrease for Year 1 students ($85.7\%$ to $13.8\%$). Again, a relatively equal percentage of Year $1$ students strictly prefer each mechanism. For Year $2$ through Year $4$ students, a larger percentage of students prefer the PMP mechanism regardless of the size of the standard deviation. The benefits of the PMP mechanism are larger for noisier environments.

Overall, the PMP mechanism outperforms the RSD benchmark for all considered values of the size of student choice sets and the standard deviation of the random component of student utility. Our results are robust in response to the changes in these parameters.

\newpage
\begin{figure}[h!]
	
	\caption{Additional Simulations: Student Choice Sets and Noise Term.}

	\vspace{-4mm}
	\begin{center}
		\includegraphics[height=0.38\textheight]{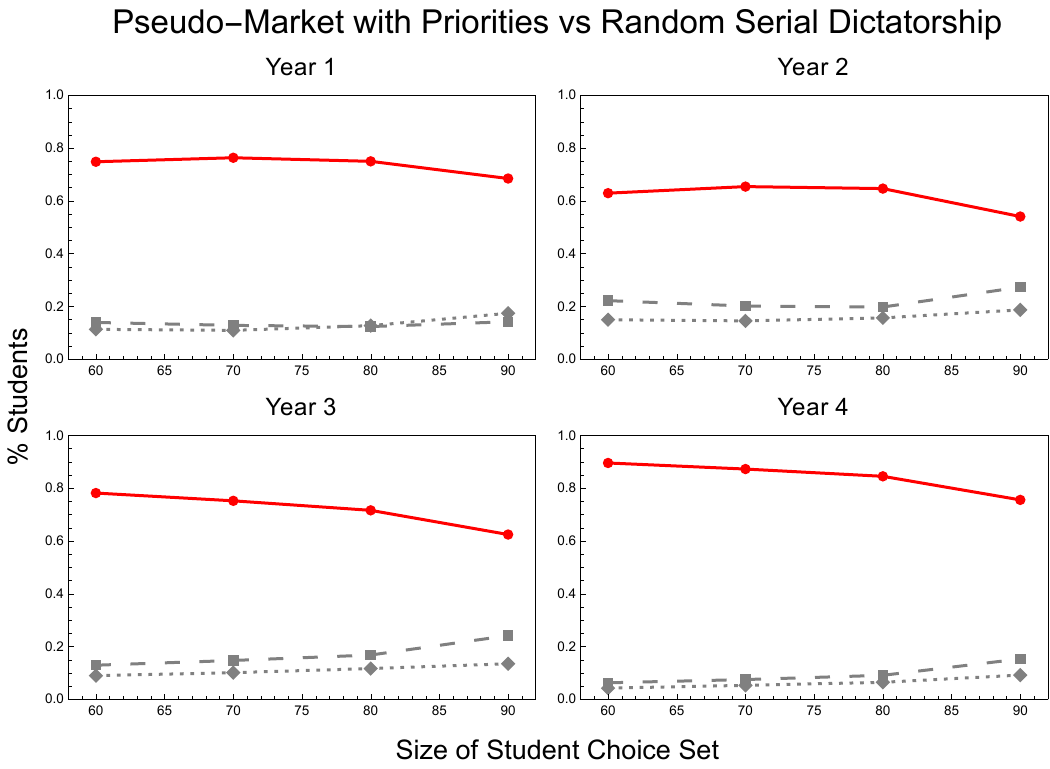}\\[4mm]
		\includegraphics[height=0.38\textheight]{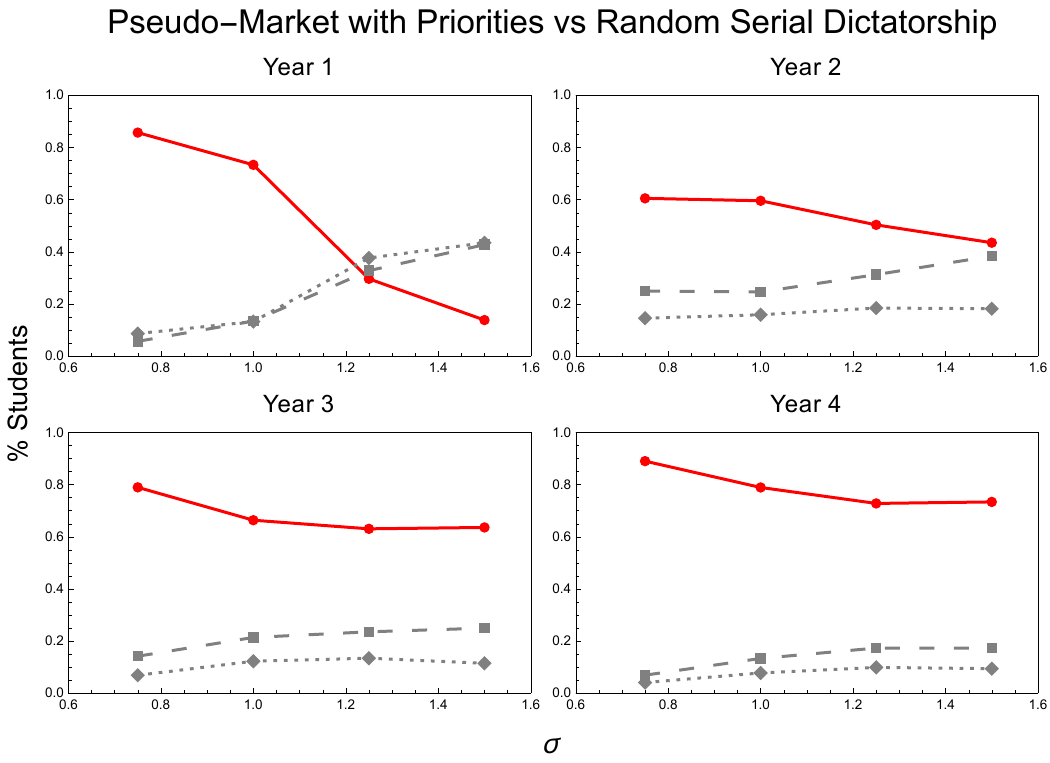}
	\end{center}
	
	\vspace{-2mm}
	\begin{spacing}{1}
		{\footnotesize \textit{Notes:} The performance of Pseudo-Market with Priorities (PMP) compared to  the Random Serial Dictatorship (RSD) with optimal course reserves. The top four figures present the percentages of students who are Indifferent (solid line), Prefer (dashed line), and Do Not Prefer (dotted line) the allocation of PMP to the RSD benchmark for different sizes of student choice sets: $60$, $70$, $80$, and $90$. The bottom four figures present the same comparison for the standard deviation of student utility $\sigma=0.75,1,1.25,1.5$. Results are averages across $50$ runs with different random component draws.}
	\end{spacing}
	
	\label{fig:comparative-statics-choice-set-sigma}
\end{figure}

\renewcommand{\theequation}{E.\arabic{equation}} 
\setcounter{equation}{0}

\renewcommand{\thelemma}{E\arabic{lemma}}
\setcounter{lemma}{0}

\renewcommand{\thedefinition}{E\arabic{definition}}
\setcounter{definition}{0}

\renewcommand{\theproposition}{E\arabic{proposition}} 
\setcounter{proposition}{0}

\renewcommand{\thetheorem}{E\arabic{theorem}} 
\setcounter{theorem}{0}

\renewcommand{\theexample}{E\arabic{example}} 
\setcounter{example}{0}

\renewcommand{\thecorollary}{E\arabic{corollary}} 
\setcounter{corollary}{0}

\newpage
\section{Supplementary Materials}
\label{sec:supplementary-materials}

The supplementary materials are organized as follows. Section \ref{subsec:program} describes the Mathematica programs for calibrating the student utility model in Section \ref{sec:appendix-student-utility-estimation}
and performing the simulations comparing course allocation mechanisms in Section \ref{subsec:simulation-results} and Sections \ref{subsec:appendix-pmp-prices}--\ref{subsec:appendix-additional-simulations-comparative-statics}. Section \ref{subsec:algorithm} provides details about the algorithm developed to quickly find the Pseudo-Market Equilibrium with Priorities for large student populations. 

\subsection{Program}
\label{subsec:program}
Most simulations have been done using Mathematica 14.0.0.0 on a server with 12 Cores of CPU (2.6 GHz of an Intel Xeon Gold 6126 CPU), 128 GB of RAM, 128 GB of SSD hard drive space, and a Windows Server 2019 Standard operating system. There are three programs:
\begin{itemize}
	\item File \textbf{MSM-utility-model.nb} implements the simulated method of moments to calibrate the parameters of the student utility model.
	\item File \textbf{simulations.nb} determines the outcomes and compares the performance of the four mechanisms described in Section \ref{subsec:mechanisms}.
	\item File \textbf{simulations-appendix.nb} determines the outcomes and compares the performance of the six alternative mechanisms described in Sections \ref{subsec:appendix-additional-sumulations-other-mechanisms}.
\end{itemize}

\bigskip
\noindent The program \textbf{MSM-utility-model.nb} has the following sections:
\begin{itemize}
	\item {\bf Functions, Data, and Parameters} sets the main parameters, loads the data from files, and defines some essential functions.
	\begin{itemize}
		\item {\bf ShortDataCreation} creates a version of the data with a small number of students for testing purposes.
		
		\item {\bf Utility} generates and returns the matrix of student utilities over courses for a given set of parameters $\theta$, $\gamma$, and $z$. 
				
		\item {\bf RSDallocation} finds the outcome of the Random Serial Dictatorship with course reserves for a given matrix of student utilities, tie-breaking, and course reserves. 		
	\end{itemize}
	
	\item {\bf Simulated Method of Moments} sets the main parameters of the method of simulation moments and defines all essential functions of the calibration process.
	\begin{itemize}
		\item {\bf EmpiricalMoments} produces the empirical moments from the data.
		
		\item {\bf W1} and {\bf Weight} calculate the moment weight matrices for the first and second stages of the calibration process (see Section \ref{subsec:appendix-calibration}).
		
		\item {\bf RandomParam} generates random student choice sets and utility components.
		
		\item {\bf SimulatedMoments} and {\bf MomentSelection} call \textbf{RSDallocation} to produce simulated allocations and select the relevant simulated moments.
		
		\item {\bf MSMObjFn} produces the weighted squared distance between the simulated moments and the empirical moments.
		
		\item {\bf Optimization} runs a process to optimize the student utility parameters with an initial set of parameters and other adjustable attributes.
		
		\item \textbf{ParameterNormalization} implements the normalization of student utility parameters discussed in Section \ref{subsec:appendix-Model}.
		
		\item \textbf{SimulatedMoments WeightCalibration} is an analog of SimulatedMoments function for the optimal weight calculation.
		
		\item \textbf{CalibrationTest} tests whether the utility parameters are well-calibrated. 
		
	\end{itemize}	
	\item \textbf{Run} calls the \textbf{Optimization} function, tracks the progress, and records the calibrated parameters of the utility model into file ``INPUTP.xlsx''. The section also performs a $J$-test on model fit for Section \ref{subsec:appendix-inference}.
\end{itemize}

\bigskip
\noindent Programs {\bf simulations.nb} and {\bf simulations-appendix.nb} have the same sections and similar functions. We label functions only used in {\bf simulations-appendix.nb} with an asterisk.	

\begin{itemize}
	\item {\bf Functions, Data, and Parameters} sets the main parameters, loads the data from files, and defines some essential functions.
	\begin{itemize}
		\item {\bf ShortDataCreation}, {\bf RandomParam}, {\bf Utility}, and {\bf RSDallocation} are the same as the functions described above.
		
		\item {\bf DAmechanism} finds the outcome of the Deferred Acceptance mechanism for given student utilities and a given priority order.
		
		\item {\bf OptSetAsideFunction} generates the optimal set of course reserves using the Deferred Acceptance Mechanism with single tie-breaking.
		
	\end{itemize}
	
	\item {\bf Allocation Generation Functions} defines functions that produce the allocations for the mechanisms in Section \ref{sec:simulations} and \ref{subsec:appendix-additional-sumulations-other-mechanisms}.
	\begin{itemize}
		\item {\bf PMPallocation} and {\bf PMPallocationAPPENDIX*} find the outcomes of the Pseudo-Market with Priorities mechanism. They employ some supplementary functions, which are described below.
		
		\item \textbf{PMPInitialPrice} and \textbf{PMPInitialPriceAPPENDIX*} provide an educated guess for the initial price in the search for a PMP equilibrium.
		
		\item \textbf{PMPPhaseI} and \textbf{PMPPhaseIAPPENDIX*} implement Phase I of the algorithm to find a Pseudo-Market Equilibrium with Priorities.
		
		\item \textbf{PMPPhaseII} and \textbf{PMPPhaseIIAPPENDIX*}  implement Phase II of the algorithm to find a Pseudo-Market Equilibrium with Priorities.
		
		\item \textbf{OptAllocation} and \textbf{OptAllocationAPPENDIX*} find each student's demand for courses at a given set of prices.
		
		\item \textbf{OptAllocationInd} and \textbf{OptAllocationIndAPPENDIX*} solve the optimization problem for an individual student.
		
		\item \textbf{OptAllocationFast} and \textbf{OptAllocationFastAPPENDIX*} find each student's demand for courses at a given set of prices using the information about prices and demand at the previous step.
		
		\item \textbf{Error} calculates the market-clearing error for an allocation at a given set of prices.
		
		\item  \textbf{FinalAllocation}, \textbf{FinalAllocationAPPENDIX*}, \textbf{OptAllocationFinalFeasibilityCheck}, and \textbf{OptAllocationFinalOptimizationCheck} provide additional checks that the optimization problem for each student is correctly solved for the final allocation and prices.		
		
		\item \textbf{DASTBallocation} and \textbf{DAMTBallocation} find the outcomes of the Deferred Acceptance mechanism with single and multiple tie-breakings, respectively.
		
		\item \textbf{ACEEIallocation*} and \textbf{Kludgyallocation*} find the outcomes of the Approximate Competitive Equilibrium from Equal Incomes (A-CEEI) and the A-CEEI with kludgy pricing mechanisms, respectively.
		
		\item {\bf DAreservesmechanism*} finds the outcome of the Deferred Acceptance with minority reserves for given student utilities, priority order, and course reserves.
		
		\item \textbf{DASTBreserveallocation*} and \textbf{DAMTBreserveallocation*} find the outcomes of Deferred Acceptance mechanism with minority reserves mechanisms with single and multiple tie-breakings, respectively.
		
		\item \textbf{PSseniorityreservesallocation*} finds the outcome of the Probabilistic Serial Mechanism with Seniority and Reserves mechanism.
		
	\end{itemize}
	
	\item \textbf{Simulation Run} is the main section of the program. It generates the allocations for each mechanism across multiple simulation rounds. The allocations are stored in the files \textit{allocations-data.txt} and \textit{appendix-data.txt*}. Data on course prices and excess demand for the PMP, A-CEEI, and A-CEEI with kludgy pricing mechanisms are recorded in files \textit{pmp-data.txt}, \textit{aceei-data.txt*}, and \textit{kludgy-data.txt*}, respectively. 
	
	\item {\bf Results Setup} loads the allocation data from \textit{allocation-data.txt} for \textbf{simulations.nb} and from \textit{appendix-data.txt*} for \textbf{simulations-appendix.nb}.
	\item \textbf{Preferred Mechanism Results} produces results for Figures \ref{fig:preferred-mechanism}, \ref{fig:preferred-mechanism-appendix-mechanisms}, and \ref{tab:appendix-preferred-mechanism-other-mechamisms-dept-first}.
	\item \textbf{Envy Results} produces results for Tables \ref{tab:student-envy}, \ref{tab:appendix-envy-other-mechamisms}, and \ref{tab:appendix-envy-other-mechamisms-dept-first}.
	\item \textbf{Priority Violations} produces results for Tables \ref{tab:appendix-priority-violations-other-mechamisms} and \ref{tab:appendix-priority-violations-other-mechamisms-dept-first}.
	\item \textbf{Utility Results} loads in data from \textit{pmp-data.txt}, \textit{aceei-data.txt*}, and \textit{kludgy-data.txt*} and produces results for Tables \ref{tab:pmp-prices} and \ref{tab:market-clearing} and Figures \ref{fig:utility-all-changing-students}, \ref{fig:st-dev-PMP-DA-DAm}, \ref{fig:utility-all-changing-students-appendix-mechanisms}, \ref{fig:st-dev-appendix-mechanisms}, \ref{fig:utility-all-changing-students-department-first}, and \ref{tab:appendix-st-dev-other-mechanisms-dept-first}.
\end{itemize}

\subsection{Computational Algorithm}
\label{subsec:algorithm}

We developed an algorithm to find a Pseudo-Market Equilibrium with Priorities, taking in student utilities $v$, student priorities $priorities$, and a tie-breaking order over students $studentrank$, and outputting an allocation $bestdemand$ and a vector of course prices $bestt$. Its main stages are described below, with the pseudocode shown on page \pageref{algorithm}.

\vspace{2mm}
\noindent \textit{Initialization}. We first reorder the set of student budgets $b$, evenly spread values from $1$ to $1+\beta = 1.25$, and assign them to students in order of $studentrank$. The function \textbf{PMPInitialPrice} takes in student utilities $v$ and provides an initial estimate for the price vector based on students' most-preferred course schedules \textit{initdemand}. Rather than working in $M\!R$-dimensional space, we leverage the approach in the proof of Theorem \ref{theorem:existence} and parameterize priority-specific prices as $t\in [0, R\overline{b}]^M$ . We collect these parameters in $pmpparam\gets \{v,b,priorities,initdemand\}$ and initialize $boundcheck\gets 0$ and $oversubcheck\gets 0$, which serve as exit flags for Phases I and II of the algorithm.

\vspace{4mm}
\noindent \textit{Phase I}. The first phase of the algorithm is a t$\hat{\text{a}}$tonnement process, searching for a price vector by repeatedly adjusting prices proportionally to excess demand. If the market-clearing error remains above the theoretical bound THEORYBOUND $\approx 43.5$, this phase ends when there is no improvement in the error over a fixed number of consecutive steps. If the market-clearing error is below the theoretical bound, we stop once there is no substantial improvement (more than $1\%$) in error over the same number of consecutive steps. The algorithm sets the flag $boundcheck=1$ if the error falls below THEORYBOUND and $boundcheck=0$ otherwise. As output, Phase I returns the price vector $bestt$ and the exit flag $boundcheck$. The upper limit on the consecutive steps with no improvement and some other parameters of Phase I depend on $nround$ making the search process more persistent for later rounds.

\vspace{2mm}
\noindent The main innovation in Phase I is that demands are only updated for two subsets of students in each step. The first includes students who can no longer afford the allocation determined in the previous step.  The second includes students who did not demand their most-preferred course schedule in the previous step and have a positive utility for courses with recently decreased prices (within the student's budget). Any student outside these two subsets has her demand unchanged between steps. This has proven particularly effective, significantly shortening the search process compared to updating every student's demand.

\vspace{4mm}
\noindent \textit{Phase II}. The second phase starts with the best price of Phase I and reduces excess course demand to levels below those observed in the data. For each positive excess demand course and each demanding student, we use a dual program to calculate the minimum price change needed so the student no longer demands the course and apply the smallest change that fully eliminates the course's excess demand. We repeat this process until two conditions are met: (i) the distribution of excess demand errors stochastically dominates the distribution observed in the data, and (ii) no course is oversubscribed by more than $k-1=4$ seats. If both (i) and (ii) are satisfied we set $oversubcheck = 1$. We also update the value of $boundcheck$ as the market-clearing error changes. The output of Phase II includes price vector $bestt$ and the exit flags $boundcheck$ and $oversubcheck$. 

\vspace{4mm}
\noindent \textit{Final Allocation}. After Phase II, we check each student's demand is correctly computed at the final prices and produce the final course allocation. If errors occur, the FinalAllocation function will correct them. If errors violate the Phase I and II conditions, we set $bestdemand = \emptyset$ and restart the process with a new initial price vector. 

\vspace{4mm}
The outer while loop of the algorithm continues as long as $boundcheck=0$ or $oversubcheck=0$. At each step, we add a small amount of noise to the initial price vector to begin the search from a slightly different starting point. The inner loop has the same exit conditions, with an additional condition on the number of rounds. If either $boundcheck$ or $oversubcheck=0$ after Phase II, we repeat the process. We allow up to $maxrounds=6$ iterations of Phase I and II within the inner loop making the search process more persistent during additional rounds. If the algorithm fails to find a solution after $maxrounds$ iterations, we restart with a new initial price vector. Typically only 1-4 iterations occur between Phases I and II in the inner loop. We did not encounter cases requiring a restart through the outer loop for the main simulations. The simulations in the appendix Section \ref{subsec:appendix-additional-simulations-comparative-statics} checking the robustness of our results do encounter some challenging cases where a restart is necessary.

\vspace{5mm}
\begin{spacing}{1}
	\begin{algorithm}[H]
		\label{algorithm}
		\renewcommand{\thealgorithm}{}
		\caption{Pseudo-Market with Priorities}
		\begin{algorithmic}[1]
			\INPUT student utilities $v$, student priorities $priorities$, student ranking $studentrank$ 
			\OUTPUT course allocation $bestdemand$, prices $bestt$
			\STATE \textbf{Initialization:} $b\gets b(studentrank)$, $\{t_0,initdemand\}\gets\textbf{PMPInitialPrice}(v)$, 
			\STATE $pmpparam\gets \{v,b,priorities,initdemand\}$, $boundcheck\gets 0$, $oversubcheck\gets 0$ 
			\WHILE{$boundcheck=0$ OR $oversubcheck=0$}
			\STATE $t\gets t_0+noise$, $nrounds \gets 1$
			\WHILE{($boundcheck=0$ OR $oversubcheck=0$) AND $nrounds\leq maxrounds$}
			\STATE $\{bestt,boundcheck\}\gets\textbf{PMPPhaseI}(t,pmpparam,nround)$
			\STATE $\{bestt,boundcheck,oversubcheck\}\gets\textbf{PMPPhaseII}(bestt,pmpparam)$
			\STATE $\{bestdemand\} \gets \textbf{FinalAllocation}(bestt, pmpparam)$
			\IF {$bestdemand==\emptyset$} 
			\STATE $boundcheck\gets 0, oversubcheck\gets 0, nround\gets maxrounds$
			\ENDIF
			\STATE $t\gets bestt$, $nround\gets nround+1$
			\ENDWHILE 
			\ENDWHILE
				
		\end{algorithmic}
	\end{algorithm}
\end{spacing}

\end{document}